\pdfoutput=1
\documentclass[11pt,a4paper]{article}
\usepackage[a4paper,margin=25mm]{geometry}
\usepackage[T1]{fontenc}
\usepackage{lmodern}
\usepackage{microtype}
\usepackage{amsmath,amssymb,amsthm,mathtools}
\usepackage{graphicx,booktabs,multirow,array}
\usepackage{xcolor}
\usepackage{enumitem}
\usepackage[numbers,square,sort&compress]{natbib}
\usepackage[colorlinks=true,linkcolor=blue!50!black,citecolor=blue!50!black,urlcolor=blue!50!black]{hyperref}
\hypersetup{pdftitle={Finite-step statistical mechanics of scale-invariant learning: spectral equations of state and curvature transport},
  pdfauthor={Alexander Itin},
  pdfsubject={Statistical mechanics of learning dynamics},
  pdfkeywords={Stationary states; Fluctuation phenomena; Stochastic processes; Machine learning}}
\usepackage{url}
\usepackage[capitalize,nameinlink]{cleveref}
\usepackage{fancyhdr}
\newtheorem{theorem}{Theorem}
\newtheorem{proposition}[theorem]{Proposition}
\newtheorem{corollary}[theorem]{Corollary}
\newtheorem{lemma}[theorem]{Lemma}
\theoremstyle{definition}

\newtheorem{approximation}[theorem]{Approximation}

\theoremstyle{remark}

\newtheorem{observation}{Observation}
\crefname{conjecture}{Conjecture}{Conjectures}
\crefname{approximation}{Approximation}{Approximations}
\crefname{observation}{Observation}{Observations}

\newcommand{\aeos}{a_{\mathrm{EoS}}}
\newcommand{\E}{\mathbb{E}}
\newcommand{\Tr}{\operatorname{Tr}}
\newcommand{\R}{\mathbb{R}}
\newcommand{\cE}{c_{\mathrm{E}}}
\newcommand{\cR}{c_{\mathrm{R}}}
\newcommand{\Rtop}{R_{\mathrm{top}}}
\newcommand{\cA}{\mathcal A}
\renewcommand{\DH}{\mathcal D_H}
\newcommand{\Rem}{\mathcal R}
\newcommand{\cK}{\mathcal K}
\newcommand{\cM}{\mathcal M}
\newcommand{\Ng}{N_{\mathrm g}}
\newcommand{\Hreg}{H_{\mathrm{reg}}}
\newcommand{\Hpop}{H_{\mathrm{pop}}}
\newcommand{\Hsing}{H_{\mathrm{sing}}}
\newcommand{\rhoH}{\rho_H}
\newcommand{\sH}{\mathsf H}
\newcommand{\sS}{\mathsf S}
\newcommand{\ncheck}[1]{check~\ref{num:#1} in \cref{app:numerics}}
\title{Finite-step statistical mechanics of scale-invariant learning:\\ spectral equations of state and curvature transport}
\author{Alexander Itin\textsuperscript{1,2}\thanks{E-mail: \texttt{alexander.itin@ru.nl}, \texttt{alx.itin2@gmail.com}; website: \href{https://ai4science.de}{AI4science.de}}\\[4pt]
{\small \textsuperscript{1}Radboud University, Nijmegen, The Netherlands}\\
{\small \textsuperscript{2}Independent researcher, Hamburg, Germany}}
\date{}

\begin{document}
\maketitle
\thispagestyle{fancy}

\begin{abstract}
Stationary learning admits an ideal-gas description under restrictive noise assumptions. In neural networks whose gradient noise departs from these assumptions, however, an entropy-scaling relation resembling a Maxwell relation remains approximately valid. This raises two questions: which thermodynamic-looking relations follow from scale symmetry rather than equilibrium, and how can departures from the ideal-gas description be quantified? We address these questions for discrete stochastic gradient dynamics with memory. An exact rescaling of trajectories explains the entropy relation without assuming thermal equilibrium. Within a local harmonic approximation, self-adjusting weight norms yield a spectral equation of state whose virial coefficients are set by moments of the curvature spectrum. Coupled parameter groups can nevertheless sustain stationary probability currents. Our central result beyond this approximation is an exact curvature-transport identity. Together with a finite-step loss balance, it decomposes corrections to the harmonic fluctuation relation into measurable curvature changes, finite-step remainders, gradient--memory correlations and finite-window drift terms. A related displacement-weighted curvature change provides a diagnostic of the arrow of time. Its mean and second moment give a lower bound on the relative entropy between forward and reversed state pairs. We also calculate the activation-boundary contribution to curvature for rectified teacher--student networks with Gaussian inputs. Tests in sphere models, analytically tractable networks and convolutional networks on CIFAR-10 and CIFAR-100 support the transport description while identifying limits of the harmonic and cubic-remainder approximations. In the tested Softplus convolutional networks, negative curvature transport is consistent with the update-curvature deficit. Rectified networks show activation-boundary contributions on CIFAR-10 and, in one setting per dataset, a resolved part of the deficit that the identity assigns to the mixed gradient--memory term. The measurements also reveal time-reversal asymmetry over finite observation windows. These results separate symmetry-based relations from equilibrium assumptions and provide quantitative diagnostics of nonideal fluctuations in learning dynamics.
\end{abstract}

\noindent\textbf{Keywords:} Stationary states; Fluctuation phenomena; Stochastic processes; Machine learning.

\section{Introduction}\label{sec:intro}

A recently proposed thermodynamic description of stationary learning raises a concrete statistical-mechanical question. \citet{sadrtdinov2025idealgas} derived an ideal-gas equation of state for a scale-invariant learning model with isotropic noise of fixed strength. In this description, weight decay plays the role of pressure and half the squared weight norm that of volume. In their networks, however, the effective noise strength varies with the hyperparameters, mainly through the product of learning rate and weight decay. The ideal-gas law with a fixed noise strength therefore does not hold. With the measured noise strength, their prediction of the stationary radius remains accurate over much of the parameter range. An entropy-scaling relation resembling a Maxwell relation agrees with their measurements to within about 3\% in the fully scale-invariant ResNet-18 experiment on CIFAR-10. Why does the entropy relation remain robust when the fixed-noise assumptions are relaxed, and how should the fluctuation description be extended beyond the continuum approximation?

The distinction matters because an equation of state, an entropy identity and time-reversal symmetry express different properties. The failure of an ideal-gas law does not by itself rule out useful relations between collective variables. Conversely, agreement with a Maxwell-type relation does not establish thermal equilibrium: a stationary stochastic system can obey exact balance relations while carrying probability currents \citep{zia2007probability,seifert2012stochastic}. Our aim is to determine which relations in stationary learning follow from symmetry, which require approximations to the dynamics, and how their departures can be measured. The ideal-gas description is a reference for this investigation, not an equilibrium assumption imposed on the networks.

Learning at a fixed update size provides a setting in which these questions can be studied. The trainable parameters evolve under noisy gradient updates \citep{mandt2017sgd,li2017sme,yaida2019fdr}. Weight decay contracts their magnitudes. A common modification averages the gradients over preceding steps, which introduces a memory variable into the dynamics. Unlike a continuous-time diffusion, the resulting process evolves by finite displacements. Its stationary fluctuations can therefore depend on the update rule as well as on the loss landscape and the noise. At a fixed learning rate, SGD has also been described as minimizing a free energy whose temperature is set by the learning rate \citep{sadrtdinov2025free}. Even the discretization of a reversible diffusion can break time-reversal symmetry \citep{katsoulakis2014irreversibility}.

Scale-invariant parameter groups give this problem a useful geometry. Write the weights of a group as $w_a=r_an_a$, where $r_a$ is its norm and $n_a$ its direction. When rescaling $w_a$ leaves the predictions unchanged, the loss depends on $n_a$ but not on $r_a$. The norm nevertheless affects the dynamics: the effective angular step is proportional to $\eta/r_a^2$. For stochastic gradient descent, the gradient of an exactly invariant group is tangent to the sphere. Its finite displacement increases the squared norm, whereas weight decay contracts it. Their balance can support a fluctuating stationary regime in which the radii adjust together with the angular fluctuations \citep{vanlaarhoven2017l2,li2020exponential,li2020intrinsic,wan2021spherical}. The angular dynamics can also show periodic instabilities, distinct regimes of the effective learning rate and a dependence of the solution on the initial learning rate \citep{lobacheva2021periodic,kodryan2022three,sadrtdinov2024where}.

Such groups arise in normalized networks, for example in every convolution followed by batch normalization \citep{ioffe2015bn,vanlaarhoven2017l2}. They may be coupled to parameters that are not scale-invariant, including a trainable output layer. Training those additional parameters does not destroy the invariance of the hidden groups. We therefore distinguish groupwise scale symmetry from invariance of the entire model. We first treat the exact symmetry and then examine its violations in implemented normalized networks. This setting combines an explicit geometric constraint with anisotropic noise, nonlinear interactions and memory.

Several ingredients limit a simple continuum description. Finite updates modify the fluctuations even for a quadratic loss, especially when a curvature mode approaches its linear stability threshold \citep{liu2021noise}. Training tends to drive the sharpest modes towards this threshold \citep{cohen2021eos,cohen2022adaptive,andreyev2024eoss}. Gradient memory correlates the current update with earlier gradients, which were evaluated at different points of the landscape. Nonlinear variation of the curvature then enters these correlations. Nonsmooth activations add a contribution when an update crosses an activation boundary. Coupling between parameter groups also prevents their fluctuations from being described, in general, by independent scalar temperatures. These effects need not be captured by a single effective noise strength.

The continuum ideal-gas construction and the harmonic approximation used here should be distinguished. The former follows from the noise assumptions of \citet{sadrtdinov2025idealgas} and does not require a quadratic angular loss. We introduce a local harmonic approximation for a different purpose: it permits an explicit calculation of finite-step fluctuations. The stationary covariance of discrete stochastic gradient descent and momentum on quadratic losses is known \citep{liu2021noise}, as are exact mean-square stability conditions for stochastic gradient descent \citep{mulayoff2024exact}. Stationary fluctuation identities provide a complementary route that does not replace the update by a diffusion \citep{yaida2019fdr}. Our loss balance is of this type, with the finite-step remainder written out. Our analysis combines these approaches with scale symmetry and with observables that measure the failure of the local quadratic description.

The first result answers the question about the robustness of the entropy relation. An exact rescaling maps trajectories with one set of learning-rate and decay parameters to trajectories with another (\cref{prop:symmetry}). For corresponding stationary states, this fixes the change of the entropy without requiring a Gibbs distribution or detailed balance (\cref{cor:maxwell}). The map remains exact when non-invariant parameters are trained, provided their optimizer settings are held fixed while those of the invariant groups are rescaled. The entropy relation is therefore a consequence of a symmetry of the dynamics, rather than evidence that the stationary state is an equilibrium state. This extends the effective-learning-rate perspective of earlier work on normalized learning \citep{li2020exponential,wan2021spherical} and symmetry analyses of training dynamics \citep{kunin2021neural,ziyin2024symmetry}. Every invariant group also obeys an exact stationary norm balance, with and without gradient memory and with a trained head (\cref{prop:radial}).

The second result concerns the collective fluctuations. We couple the known finite-step mode variances \citep{liu2021noise} to the self-adjusting norms. Within the harmonic approximation, this gives a spectral equation of state for a group with isotropic angular noise, whose virial coefficients are set by moments of the curvature spectrum (\cref{sec:eos}). The approximation supplies calculable deviations from the ideal-gas reference. For SGD, this spectral equation of state contains no attractive term. Under the additional isotropic-noise approximation used for AdamW, the predicted stationary-radius correction instead has the opposite sign. \Citet{sadrtdinov2025idealgas} named such ``real-gas'' corrections, gradient memory and adaptive optimizers as open directions. For interacting groups, the covariance must instead be determined jointly. In the Gaussian harmonic surrogate of SGD with frozen radii, the stationary state is reversible if and only if the transformed curvature and noise matrices commute (\cref{prop:coupled}). Exact groupwise norm balances can therefore coexist with stationary probability currents. These calculations separate the existence of a stationary balance from the stronger conditions required for equilibrium.

Our central result beyond the harmonic approximation is an exact curvature-transport identity for dynamics with gradient memory (\cref{prop:transport}). The update direction is built from gradients evaluated at earlier states, while the curvature changes as the parameters move. The identity relates the departure of the current update curvature from its harmonic reference to this curvature change, a mixed gradient--memory correlation and finite-window drift. Together with the finite-decay identity derived from the loss balance (\cref{prop:energy,thm:finite}), which retains weight decay and the finite-step remainder, it provides a measurable decomposition of the fluctuation correction. Both relations hold for any network trained by SGD or gradient memory with hyperparameters shared by all parameters, without assuming a quadratic loss or Gaussian noise. Their interpretation in terms of angular fluctuations uses the additional scale symmetry.

\begin{figure}[t]
\centering
\includegraphics[width=\linewidth]{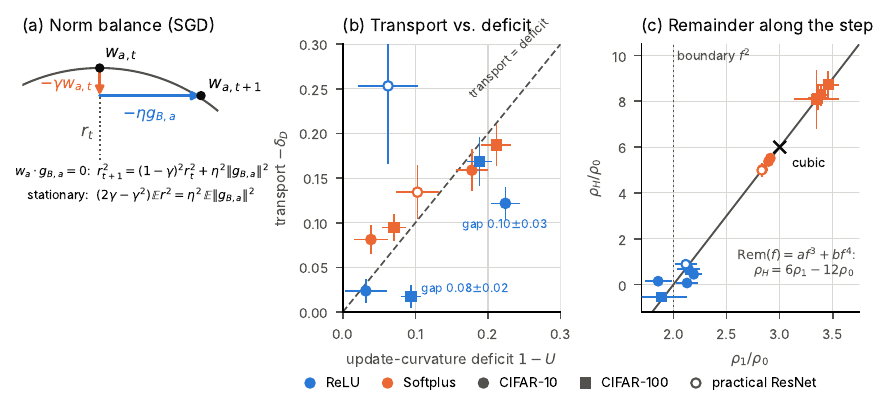}
\caption{The main results. (a)~Norm balance of an invariant group $w_a$, $r=\|w_a\|$ (\cref{prop:radial}; drawn to scale). The tangential step adds $\eta^2\|g_{B,a}\|^2$ to $r^2$, and decay removes $(2\gamma-\gamma^2)r^2$. The diagram shows SGD. Gradient-memory dynamics obeys the distinct norm balance \eqref{eq:radial-ema}. Both balances remain valid for an invariant group coupled to a trained head. Neither balance implies equilibrium. (b)~Transport against the update-curvature deficit, \eqref{eq:accounting} (seed means for the thin ResNet; runs with collapsed channels omitted). Softplus settings lie on the diagonal within 1.3 joint standard errors of the gap. The labelled ReLU settings lie below it. The practical ReLU network at $\gamma=10^{-4}$ (open symbol) lies above it. (c)~Endpoint ratios of the runs with $\rho_0$ resolved. Softplus runs lie on $\rhoH=6\rho_1-12\rho_0$, the line of a remainder $af^3+bf^4$ along the step. On CIFAR-10 they lie near the cubic point. ReLU runs sit near the boundary value 2 (\cref{lem:kink}), also on the line. However, only Softplus runs approximately satisfy it for individual sampled transitions (\cref{app:cubic}). Error bars: one standard error.}
\label{fig:overview}
\end{figure}

A related displacement-weighted curvature change provides a diagnostic of the arrow of time (\cref{prop:antisym}). It changes sign when the order of two consecutive parameter states is reversed. A nonzero mean therefore demonstrates time-reversal asymmetry of their joint distribution. Its mean and second moment give a lower bound on the relative entropy between the forward and reversed pair distributions. This connects a local curvature measurement with the trajectory-based description of irreversibility \citep{katsoulakis2014irreversibility,roldan2010estimating,roldan2012entropy}. \Citet{ziyin2026irreversibility} relate backward-error and time-reversal corrections to a regularized entropy-production construction in a small-step expansion. This result uses smoothness and symmetric-Jacobian assumptions and is distinct from the finite-step pair-distribution bound considered here. The bound concerns parameter pairs; it is not identified with the thermodynamic entropy production of the full optimizer. At finite weight decay, the mean pair observable also differs from the memory-based curvature transport by explicit terms of first and second order in the decay rate.

For nonsmooth models, the curvature measured within an activation region does not account for all finite-displacement effects. We distinguish smooth nonlinear remainders from contributions of activation-boundary crossings (\cref{lem:kink}). If the pre-activation density is smooth, activation-boundary crossings can produce an ensemble-averaged contribution of order $f^2$, where $f$ is the step fraction. For a smooth loss, the remainder after the quadratic expansion is $O(f^3)$; its leading cubic coefficient may vanish. For normalized committees of rectified linear units with Gaussian inputs, the boundary contribution to the curvature can be calculated analytically (\cref{prop:relu}). Competing interactions among the units and with the target function give it no universal sign. For smooth losses, we examine cubic and cubic-plus-quartic approximations to the remainder. These approximations reduce the number of observables needed to reconstruct the correction, but their validity must be tested separately from the exact identities.

The numerical investigation uses three complementary systems (\cref{sec:experiments}). Sphere models isolate the harmonic predictions and geometric effects. Teacher--student networks, whose targets are generated by a fixed reference network, provide an analytic population loss and directly evaluable derivatives. These models connect the present problem with the classical statistical mechanics of soft committee machines, from online learning \citep{saad1995online} to the equilibrium specialization transition \citep{biehl1998phase}. Convolutional networks on CIFAR-10 and CIFAR-100 then test which relations remain informative in higher-dimensional learning dynamics. In 28 runs with gradient memory, every term of the exact identities is measured at both ends of updates sampled from the optimizer's transition law. The measurements distinguish the general transport description from the simpler remainder approximations (\cref{fig:overview}). In the Softplus networks, negative curvature transport is consistent with the update-curvature deficit on both datasets. The cubic approximation is less general: the CIFAR-100 endpoint measurements require a substantial fourth-order contribution within the smooth-remainder description. On CIFAR-10, the remainder of rectified networks is dominated by activation boundaries. In one setting per dataset these networks also leave a resolved part of the deficit, which the identity assigns to the mixed contribution.

The purpose of these experiments is to test fluctuation relations and identify the contributions responsible for their departure from the harmonic reference. Agreement with an exact identity checks the estimators and the accounting of its terms; it does not by itself validate a predictive approximation. We therefore test the harmonic and remainder approximations separately. We also retain finite-window drift terms and examine time-reversal asymmetry without assuming that every network trajectory has reached a fully stationary distribution. The resulting description identifies which aspects of the gas analogy are protected by symmetry and which depend on the dynamics.

\Cref{sec:setting} defines the stochastic dynamics and notation (\cref{tab:notation}). \Cref{sec:exact} derives the exact scaling, balance and transport relations, together with the time-reversal diagnostic. \Cref{sec:linear} develops the harmonic fluctuation theory and the spectral equation of state. \Cref{sec:remainder} treats smooth nonlinearities, activation boundaries and non-invariant parameters. \Cref{sec:experiments} presents the numerical tests, and \cref{sec:status} discusses the conclusions and the remaining limitations.
\section{Stochastic dynamics of networks with scale-invariant groups}\label{sec:setting}
\paragraph{Scale-invariant groups.} The parameters $\theta=(w_1,\dots,w_{\Ng},\psi)\in\R^d$ consist of weight groups $w_a\in\R^{d_a}$ and possibly parameters $\psi$ that are not scale-invariant. Examples are an output layer or normalization coefficients. These parameters may be trained. The minibatch loss satisfies $L_B(\dots,\alpha w_a,\dots,\psi)=L_B(\theta)$ for every $\alpha>0$, group $a$, batch $B$ and value of $\psi$. Indeed, rescaling a normalized filter leaves the features seen by the downstream layers unchanged. A network is \emph{fully} scale-invariant if there is no $\psi$. We then write $w=\theta$. The objective is $L(\theta)=\E_BL_B(\theta)$. With batch normalization in training mode, this is the expected loss of a random batch. It is not the loss of the training set treated as one batch. We write $r_a=\|w_a\|$ and $n_a=w_a/r_a$ for the norm and direction of a group. We also write $g(\theta)=\nabla L(\theta)$ for the \emph{mean gradient}, $g_B=\nabla L_B(\theta)$ for a minibatch gradient, $\xi=g_B-g$ for the gradient noise and $\Sigma=\mathrm{Cov}_B(g_B)$ for its covariance. Euler's theorem for functions homogeneous of degree zero gives, for every batch and group,
\begin{equation}\label{eq:si}
w_a\cdot\nabla_{w_a}L_B=0,\qquad \nabla_{w_a}L_B(\dots,\alpha w_a,\dots)=\alpha^{-1}\nabla_{w_a}L_B(\theta),\qquad H_Bw_a=-g_{B,a},
\end{equation}
where $H_B$ is the Hessian of $L_B$. The last relation follows by differentiating $\sum_{i\in a}w_i\partial_iL_B=0$. Here $w_a$ and $g_{B,a}$ are embedded in $\R^d$. For a fully scale-invariant network, summing over groups gives $H_Bw=-g_B$. Rescaling a group also leaves $\nabla_\psi L_B$ unchanged.

\paragraph{Two Hessians for piecewise-smooth networks.} For smooth activations $H=\nabla^2L$. Networks with rectified linear units (ReLU) are only piecewise smooth. For them, two curvatures must be distinguished. The first is the \emph{regular} Hessian $\Hreg=\E_B[\nabla^2_{\mathrm{reg}}L_B]$. It is the second derivative with the activation pattern held fixed (what automatic differentiation returns), averaged over batches. The second is the population Hessian $\Hpop=\nabla^2\E_BL_B$, where it exists. Their difference $\Hsing=\Hpop-\Hreg$ is the \emph{singular curvature} of activation boundaries (\cref{sec:kinks}). Throughout, $H$ denotes $\Hreg$. For smooth activations it coincides with $\Hpop$. \Cref{eq:si} holds for the regular Hessian sample by sample.

\paragraph{Dynamics with and without gradient memory.} We compare stochastic gradient descent (SGD) with a gradient-memory dynamics, commonly called exponential-moving-average (EMA) momentum. With decoupled weight decay $\lambda$, learning rate $\eta$ and $\gamma=\eta\lambda$, it reads
\begin{equation}\label{eq:update}
m_{t+1}=\beta m_t+(1-\beta)\,g_{B_t}(\theta_t),\qquad \theta_{t+1}=(1-\gamma)\,\theta_t-\eta\,m_{t+1},
\end{equation}
with common hyperparameters for all parameters unless stated otherwise. \Cref{prop:symmetry} allows separate hyperparameters per group. The batch $B_t$ is drawn afresh at every step, so that the noise is conditionally unbiased, $\E[g_{B_t}(\theta_t)\,|\,\mathcal F_t]=g(\theta_t)$, where $\mathcal F_t$ is the history up to step $t$. The averaged identities of \cref{sec:exact} use this property. Independent sampling at every step satisfies it. Epoch-wise reshuffling does not, because the next batch then depends on the examples already used. All runs of this paper sample independently at every step: the networks draw each batch uniformly with replacement from the training set, and the sphere and teacher--student models draw fresh noise and fresh inputs.
The memory variable $m_t$ is an exponentially weighted average of past stochastic gradients. It is not a mechanical momentum. The parameter $0\le\beta<1$ sets the persistence of past gradients. The value $\beta=0$ is SGD, $\theta_{t+1}=(1-\gamma)\theta_t-\eta g_{B_t}$. We write $\Delta\theta_t=\theta_{t+1}-\theta_t$ ($\Delta w$ for a fully scale-invariant network) and use
\begin{equation}\label{eq:kappa}
\kappa=\frac{1-\beta}{1+\beta},\qquad q=\beta(1-\gamma),\qquad \aeos=\frac{2(1+\beta)}{1-\beta}.
\end{equation}
$\aeos$ is the value of $\eta h$ at which a quadratic mode of curvature $h$ becomes linearly unstable when $\gamma\to0$ ($\aeos=2$ for SGD). Full-batch training drives the sharpness to this threshold, where it oscillates \citep{cohen2021eos,arora2022eos,damian2023selfstab}. Analogues exist for adaptive methods \citep{cohen2022adaptive} and for SGD \citep{andreyev2024eoss}. Momentum beyond the threshold is studied in \citep{itin2026physics}. The heavy-ball normalization $m\leftarrow\beta m+g_B$ has the threshold $2(1+\beta)$ and, in the continuum limit, a temperature larger by $1/(1-\beta)$; at equal $\eta$ the finite-step factors differ as well. We do not use it except where stated. AdamW \citep{kingma2015adam,loshchilov2019adamw} replaces $m_{t+1}$ by $\hat m_{t+1}/(\sqrt{\hat v_{t+1}}+\epsilon)$ with the usual bias-corrected moments. In the networks, $\beta=\beta_1=0.9$, $\beta_2=0.999$ and $\epsilon=10^{-8}$. We treat AdamW as an extension. Consider the preconditioned coordinates $y=P^{1/2}\theta$, with $P=\sqrt{\hat v}+\epsilon$ \emph{frozen} at a diagnostic point. In these coordinates AdamW is EMA momentum with $g\to P^{-1/2}g$, $H\to P^{-1/2}HP^{-1/2}$ and $\Sigma\to P^{-1/2}\Sigma P^{-1/2}$. This is an approximation. The actual preconditioner changes along the trajectory and is correlated with the current batch.

\paragraph{Angular variables.} On a sphere of radius $r$ the relevant step is the angular one, $\eta_0=\eta/r^2$. We describe the stationary angular fluctuations of a group by tangent modes $k$. Each mode has a curvature $h_k$ of the loss on the unit sphere and an angular noise variance $s_k$. At a reference critical direction, $\mu_k=h_k/r^2$ are the corresponding tangent eigenvalues of $H$ at radius $r$, so that $\eta_0h_k=\eta\mu_k$. Away from that reference the identification is a local approximation, and the radial--tangent coupling $Hw=-g$ must be kept where it matters. Stationary harmonic modes have $h_k>0$; negative-curvature and flat directions need separate treatment. The \emph{edge ratio} $R_k=\eta_0h_k/\aeos$ is the fraction of the linear stability threshold reached by mode $k$. The distance to the threshold is $1-R_k$. We write $\Rtop$ for the edge ratio of the sharpest mode.

\paragraph{Finite-step observables.} For an update $\Delta\theta$ and an evaluation batch $B'$ independent of it, the remainder of the quadratic loss model along a fraction $f$ of the step is
\begin{equation}\label{eq:remainder}
\Rem(f)=L_{B'}(\theta+f\Delta\theta)-L_{B'}(\theta)-f\,g_{B'}\cdot\Delta\theta-\tfrac{f^2}{2}\,\Delta\theta^\top H_{B'}\Delta\theta,
\end{equation}
with the regular Hessian $H_{B'}$. With $Q=\frac12\E[\Delta\theta^\top H\Delta\theta]$ we use the normalized remainder, its endpoint derivative and the curvature change along the step,
\begin{equation}\label{eq:rhos}
\rho_0=\frac{\E\Rem(1)}{Q},\qquad \rho_1=\frac{\E\Rem'(1)}{Q},\qquad \rhoH=\frac{\E\big[\Delta\theta^\top\big(H(\theta_{t+1})-H(\theta_t)\big)\Delta\theta\big]}{Q},
\end{equation}
where, sample by sample, $\Rem'(1)=[g_{B'}(\theta+\Delta\theta)-g_{B'}(\theta)-H_{B'}\Delta\theta]\cdot\Delta\theta$. Since $B'$ is independent of the step, the conditional mean $\Rem_L(f)=\E_{B'}[\Rem(f)\,|\,\theta,\Delta\theta]$ is the remainder of the objective $L$, and $\Rem_L'(1)=(g_{t+1}-g_t-H\Delta\theta)\cdot\Delta\theta$ with $g_t=g(\theta_t)$ and the regular Hessian $H$. We also use the remainder profile $\rho(f)=\E\Rem(f)/(\frac{f^2}{2}\E[\Delta\theta^\top H\Delta\theta])$, so that $\rho_0=\rho(1)$. For smooth losses $\rhoH=\E\Rem''(1)/Q$. For ReLU networks $\rho_0$ and $\rho_1$ include the activation boundaries. By contrast, $\rhoH$ uses the regular Hessian at both ends and is not the second derivative of the mean remainder (\cref{sec:kinks}).

\begin{table}[t]
\centering\footnotesize
\caption{Notation. Expectations are over the stationary law (or the time marginals of a finite window) and over the noise.}
\label{tab:notation}
\setlength{\tabcolsep}{4pt}
\begin{tabular}{@{}>{\raggedright\arraybackslash}p{0.17\linewidth}>{\raggedright\arraybackslash}p{0.29\linewidth}>{\raggedright\arraybackslash}p{0.24\linewidth}>{\raggedright\arraybackslash}p{0.24\linewidth}@{}}
\toprule
$\eta,\ \lambda,\ \gamma=\eta\lambda$ & learning rate, weight decay, decay per step & $G=\E\|g\|^2$ & mean-gradient power\\
$\beta,\ \kappa,\ q$ & memory parameter, \eqref{eq:kappa} & $B_H=\E[g^\top Hg]$ & curvature along $g$\\
$\aeos$ & stability threshold of $\eta h$ & $\Xi_0=\E\Tr\Sigma$ & noise power\\
$\theta=(w_a,\psi)$; $r_a,\ n_a$ & parameters; norm and direction of group $a$ & $\Xi_H=\E\Tr(\Sigma H)$ & noise curvature\\
$\eta_0=\eta/r^2$ & angular step & $\cK=\E[m_{t+1}^\top H(\theta_t)m_{t+1}]$ & update curvature\\
$h_k,\ s_k$ & curvature and angular noise of mode $k$ & $\cK_{\mathrm{lin}}$, $U=\cK/\cK_{\mathrm{lin}}$ & its linear value, ratio\\
$R_k=\eta_0h_k/\aeos$ & edge ratio & $c$ & linear-FDR ratio, \eqref{eq:F2}\\
$T^{(0)}_k,\ T_k$ & mode temperatures, \eqref{eq:modes} & $\Rem(f)$, $Q$; $\rho_0,\rho_1,\rhoH$ & remainder, \eqref{eq:remainder}--\eqref{eq:rhos}\\
$H,\ \Hpop,\ \Hsing$ & regular, population and singular Hessian & $\cA$; $e$; $\mathcal S$ & pair observable \eqref{eq:A}; rescaling response \eqref{eq:e}; entropy\\
\bottomrule
\end{tabular}
\end{table}
\section{Exact relations}\label{sec:exact}

The statements of this section are exact for the update \eqref{eq:update} under the stated assumptions. They also assume that the moments involved are finite and that the trajectories are bounded away from the singular origin $w_a=0$. The statements differ in scope. The norm balance of a group and the rescaling symmetry require the scale invariance of that group, but not of the rest of the network. The following relations hold for the full parameter vector $\theta$ of any network: the loss balance, the finite-decay identity (with an explicit term for non-invariant parameters and common $\eta$, $\lambda$, $\beta$), the curvature-transport identity (common $\beta$) and the antisymmetry of the pair observable. Their reading as a correction to the linear fluctuation--dissipation relation, \eqref{eq:F2}, uses full scale invariance. ``Stationary'' means that the laws of $(\theta_t,m_t)$ at consecutive times coincide. When they do not, the drift terms $d_\bullet=\E[\bullet_{t+1}-\bullet_t]$ are kept. The expectations can then be read as averages over the time marginals of a finite window.

\subsection{Norm balance and two radius baselines}

\begin{proposition}[Norm balance]\label{prop:radial}
Let a parameter group $a$ be updated as $w_{a,t+1}=(1-\gamma_a)w_{a,t}-\eta_au_{a,t}$, with $u_a=g_{B,a}$ (SGD), $u_a=m_{a,t+1}$ (memory) or the AdamW direction. In any stationary state,
\begin{equation}\label{eq:radial-adam}
(2\gamma_a-\gamma_a^2)\,\E r_a^2=\eta_a^2\,\E\|u_a\|^2-2\eta_a(1-\gamma_a)\,\E[w_a\cdot u_a].
\end{equation}
If $L_B$ is invariant under rescaling of $w_a$ (for any values of the other parameters, which may be trained), then
\begin{align}
\text{SGD:}\quad &(2\gamma-\gamma^2)\,\E r_a^2=\eta^2\,\E\|g_{B,a}\|^2=\eta^2\big(\E\|g_a\|^2+\E\Tr\Sigma_{aa}\big),\label{eq:radial-sgd}\\
\text{gradient memory:}\quad &(2\gamma-\gamma^2)\,\E r_a^2=\eta^2\,\E\|m_a\|^2\,\frac{1+q}{1-q},\label{eq:radial-ema}
\end{align}
with $\gamma$, $\eta$, $q$ those of the group.
\end{proposition}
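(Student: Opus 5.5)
The plan is to square the group update, take expectations, and use stationarity to cancel the two copies of $\E r_a^2$; the scale-invariant cases then come from evaluating the cross term $\E[w_a\cdot u_a]$.

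\textbf{General balance.} Squaring the update $w_{a,t+1}=(1-\gamma_a)w_{a,t}-\eta_au_{a,t}$ gives
\[
r_{a,t+1}^2=(1-\gamma_a)^2r_{a,t}^2-2\eta_a(1-\gamma_a)\,w_{a,t}\cdot u_{a,t}+\eta_a^2\|u_{a,t}\|^2 .
\]
I take expectations. Stationarity means the law of $(\theta_t,m_t)$ is the same at consecutive times, so $\E r_{a,t+1}^2=\E r_{a,t}^2$. The moments are finite by the standing assumption of \cref{sec:exact}. Using $1-(1-\gamma_a)^2=2\gamma_a-\gamma_a^2$, the identity rearranges to \eqref{eq:radial-adam}. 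This step uses neither invariance nor the specific form of $u_a$, so it covers SGD, memory and the AdamW direction alike.

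\textbf{SGD.} For $u_a=g_{B,a}$, Euler's relation \eqref{eq:si} gives $w_a\cdot g_{B,a}=0$ sample by sample, for any value of $\psi$. The cross term therefore vanishes. Conditional unbiasedness, $\E[g_{B_t}\,|\,\mathcal F_t]=g(\theta_t)$, splits $\E\|g_{B,a}\|^2$ into $\E\|g_a\|^2+\E\Tr\Sigma_{aa}$, because the cross term $\E[g_a\cdot\xi_a]$ vanishes after conditioning on $\mathcal F_t$. This yields \eqref{eq:radial-sgd}.

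\textbf{Gradient memory.} This case is the only real step. Here $c_t=\E[w_{a,t}\cdot m_{a,t+1}]$ does not vanish, and the plan is to derive a one-step recursion for it.

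1. Using $m_{a,t+1}=\beta m_{a,t}+(1-\beta)g_{B_t,a}$ together with $w_{a,t}\cdot g_{B_t,a}=0$, I get $w_{a,t}\cdot m_{a,t+1}=\beta\, w_{a,t}\cdot m_{a,t}$.
2. Substituting the update $w_{a,t}=(1-\gamma)w_{a,t-1}-\eta m_{a,t}$ gives $w_{a,t}\cdot m_{a,t}=(1-\gamma)\,w_{a,t-1}\cdot m_{a,t}-\eta\|m_{a,t}\|^2$.
3. Combining the two, $c_t=q\,c_{t-1}-\beta\eta\,\E\|m_{a,t}\|^2$.
4. Both $c_t$ and $\E\|m_{a,t}\|^2$ are expectations of functions of $(\theta_t,m_t,B_t)$ or of $(\theta_{t-1},m_{t-1},B_{t-1})$, with fresh batches. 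They are therefore time-independent in the stationary state. Since $0\le q<1$ (from $\beta<1$ and $0\le\gamma<1$), the fixed point is
\[
c=-\frac{\beta\eta\,\E\|m_a\|^2}{1-q}.
\]
5. Inserting $c$ into \eqref{eq:radial-adam} gives
\[
\eta^2\E\|m_a\|^2\Big(1+\frac{2q}{1-q}\Big)=\eta^2\E\|m_a\|^2\,\frac{1+q}{1-q},
\]
which is \eqref{eq:radial-ema}.

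I expect the main care to be needed in step 4. One must check that stationarity of the pair $(\theta_t,m_t)$, not just of $\theta_t$, is what justifies equating $\E\|m_{t+1}\|^2$ with $\E\|m_t\|^2$ and $c_t$ with $c_{t-1}$. One must also check that invariance of group $a$ alone suffices: only $w_a\cdot g_{B,a}=0$ is used, with the other parameters, including a trained $\psi$, arbitrary.
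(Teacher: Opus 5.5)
Your proposal is correct and is essentially the paper's proof. You expand $\|w_{a,t+1}\|^2$ and use stationarity for the general balance, drop the SGD cross term via $w_a\perp g_{B,a}$, and for gradient memory solve the stationary one-step recursion for the weight--memory overlap; your $c_t=\E[w_{a,t}\cdot m_{a,t+1}]$ is $\beta\,\E\varpi_t$ with the paper's $\varpi_t=w_{a,t}\cdot m_{a,t}$, so your recursion $c_t=q\,c_{t-1}-\beta\eta\,\E\|m_{a,t}\|^2$ is the paper's $\varpi_{t+1}=q\,\varpi_t-\eta\|m_{a,t+1}\|^2$ shifted by one step and multiplied by $\beta$.
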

\begin{proof}
\eqref{eq:radial-adam} is the expansion of $\|w_{a,t+1}\|^2$. For an invariant group, $w_a\perp g_{B,a}$ by \eqref{eq:si}. The cross term therefore vanishes for SGD. With memory, let $\varpi_t=w_{a,t}\cdot m_{a,t}$. Then $w_{a,t}\cdot m_{a,t+1}=\beta\varpi_t$, $\varpi_{t+1}=q\,\varpi_t-\eta\|m_{a,t+1}\|^2$ and $\E\varpi=-\eta\E\|m_a\|^2/(1-q)$. Inserting $\E[w_{a,t}\cdot m_{a,t+1}]=\beta\E\varpi$ into \eqref{eq:radial-adam} gives \eqref{eq:radial-ema}.
\end{proof}
Consider a group whose invariance holds only approximately, for example a channel normalized as $(z-\mu_B)/\sqrt{v_B+\epsilon_{\mathrm{BN}}}$ with fixed $\epsilon_{\mathrm{BN}}>0$. For such a group the violation is measured by $\E[w_a\cdot g_{B,a}]$. This quantity vanishes for an exactly invariant group with any optimizer. For SGD it is the cross term of \eqref{eq:radial-adam}. With memory or AdamW, however, the cross term is nonzero even for an exactly invariant group.

Weight decay contracts the norm deterministically. The stochastic gradient, orthogonal to the weights, increases it. Two baselines for the radius follow, and they must be kept apart. The \emph{measured-noise} baseline of \citep{sadrtdinov2025idealgas} drops the gradient signal: $r^2_{\mathrm{SDE}}=\eta\Xi_0/(2\lambda)$. \Cref{prop:radial} gives exactly
\begin{equation}\label{eq:baselines}
\frac{\E r^2}{r^2_{\mathrm{SDE}}}=\frac{1+G/\Xi_0}{1-\gamma/2}\quad\text{(SGD)},\qquad
\frac{\E r^2}{r^2_{\mathrm{SDE}}}=\frac{1+q}{1-q}\,\frac{\E\|m\|^2/\Xi_0}{1-\gamma/2}\quad\text{(memory)},
\end{equation}
and in the harmonic $\gamma\to0$ limit the memory ratio becomes $1+\kappa G/\Xi_0$. The \emph{fixed-angular-noise} baseline $r_0$ of the equation of state (\cref{sec:eos}) is a different object. It fixes the noise per tangent direction instead of using its measured trace.

\subsection{Scaling symmetry and the entropy relation}

\begin{proposition}[Scaling symmetry]\label{prop:symmetry}
Let every invariant group $a$ have its own learning rate $\eta_a$, decay $\lambda_a$ and memory parameter (SGD, EMA momentum or heavy ball). Let the non-invariant parameters $\psi$ be trained with fixed hyperparameters by any optimizer whose update of $\psi$ depends only on $\psi$, its own optimizer state and $\nabla_\psi L_B$. Clipping by a global gradient norm, or trust ratios computed jointly across groups, break the map below. For any $\alpha_a>0$, consider the map
\[
w_a\mapsto\alpha_a^{1/2}w_a,\qquad m_a\mapsto\alpha_a^{-1/2}m_a,\qquad (\eta_a,\lambda_a)\mapsto(\alpha_a\eta_a,\lambda_a/\alpha_a),
\]
with $\psi$, its optimizer state and its hyperparameters unchanged. This map sends a trajectory with minibatch sequence $(B_t)$ to the trajectory with the transformed hyperparameters and the same batches. For AdamW with $\epsilon=0$ the same holds with $w_a\mapsto\alpha_aw_a$ and the moments $(m_a,v_a)\mapsto(m_a/\alpha_a,v_a/\alpha_a^2)$. For a fully scale-invariant network and a common $\alpha$ this is the map $(\eta,\lambda)\mapsto(\alpha\eta,\lambda/\alpha)$.
\end{proposition}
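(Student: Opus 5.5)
The plan is an induction on the step $t$: if the transformed state at time $t$ equals the image of the original state, then one step of the transformed dynamics on the same batch $B_t$ produces the image of the original state at $t+1$. Everything rests on the gradient scaling in \eqref{eq:si}. If $\tilde w_a=\alpha_a^{1/2}w_a$ for every invariant group $a$ and $\tilde\psi=\psi$, then invariance of $L_B$ under each group rescaling (applied group by group, for any values of the other parameters) gives $\tilde L_B=L_B$ at the corresponding points. It also gives $\nabla_{w_a}L_B(\tilde\theta)=\alpha_a^{-1/2}\nabla_{w_a}L_B(\theta)$ and $\nabla_\psi L_B(\tilde\theta)=\nabla_\psi L_B(\theta)$. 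The last identity holds because rescaling a group leaves $\nabla_\psi L_B$ unchanged, as noted after \eqref{eq:si}.

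\textbf{SGD and EMA momentum.} For an invariant group the memory update $\tilde m_{a,t+1}=\beta\tilde m_{a,t}+(1-\beta)\tilde g_{B,a}$ equals $\alpha_a^{-1/2}m_{a,t+1}$, since both terms carry the factor $\alpha_a^{-1/2}$. The same holds for heavy ball, $m\leftarrow\beta m+g_B$. The parameter update then gives
\[
\tilde w_{a,t+1}=(1-\tilde\eta_a\tilde\lambda_a)\tilde w_{a,t}-\tilde\eta_a\tilde m_{a,t+1}=\alpha_a^{1/2}\bigl[(1-\gamma_a)w_{a,t}-\eta_am_{a,t+1}\bigr]=\alpha_a^{1/2}w_{a,t+1}.
\]
This uses $\tilde\eta_a\tilde\lambda_a=\eta_a\lambda_a=\gamma_a$ and $\alpha_a\eta_a\cdot\alpha_a^{-1/2}=\alpha_a^{1/2}\eta_a$. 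SGD is the case $\beta=0$. For $\psi$, the update depends only on $\psi$, its own optimizer state and $\nabla_\psi L_B$. All three are unchanged, so $\psi$ and its state evolve identically.

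\textbf{AdamW with $\epsilon=0$.} Here take $\tilde w_a=\alpha_aw_a$, so that $\tilde g_{B,a}=\alpha_a^{-1}g_{B,a}$. The moment recursions are linear in $g$ and in $g^2$, so $\tilde m_a=m_a/\alpha_a$ and $\tilde v_a=v_a/\alpha_a^2$ propagate. Bias correction multiplies both moments by factors that depend only on $t,\beta_1,\beta_2$. The direction $\hat m/\sqrt{\hat v}$ is therefore invariant: this is where $\epsilon=0$ is needed. The update then reads $\alpha_a w_{a,t+1}=(1-\gamma_a)\alpha_aw_{a,t}-\alpha_a\eta_a\,\hat m/\sqrt{\hat v}$, which closes the induction.

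\textbf{Remaining points and main obstacle.} The algebra is routine. The main obstacle is to state precisely which optimizer features stay compatible with the map, and to justify the exclusions in the statement. A global gradient-norm clip mixes $\alpha_a^{-1/2}$-scaled and unscaled components, so the clipping factor is not preserved. Joint trust ratios likewise couple the groups. Both break the map. A fully scale-invariant network with a common $\alpha$ reduces to $(\eta,\lambda)\mapsto(\alpha\eta,\lambda/\alpha)$. Finally, the initial conditions must also be mapped (the memory initialized at zero is preserved). Equality in law of the trajectories under the same batch sequence then follows.
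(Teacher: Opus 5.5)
Your proposal is correct and follows the paper's proof. Like the paper, you verify one step of the dynamics (made explicit as an induction) using the gradient scaling $\nabla_{w_a}L_B\mapsto\alpha_a^{-1/2}\nabla_{w_a}L_B$ from \eqref{eq:si}, the invariance of $\gamma_a=\eta_a\lambda_a$, the unchanged $\nabla_\psi L_B$ and, for AdamW with $\epsilon=0$, the invariance of $\hat m/\sqrt{\hat v}$ under a groupwise rescaling of the gradients. With the batch sequence fixed, the conclusion is a pathwise identity of trajectories, from which equality in law then follows.
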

\begin{proof}
$\alpha_a\eta_a\cdot\lambda_a/\alpha_a=\eta_a\lambda_a$, so the decay factors are unchanged. By \eqref{eq:si}, $\nabla_{w_a}L_B$ at the rescaled point is $\alpha_a^{-1/2}\nabla_{w_a}L_B$. Hence $\alpha_a\eta_a$ times it is $\alpha_a^{1/2}$ times the original step, and the memory scales as stated. $L_B$ is unchanged as a function of $\psi$. Hence $\nabla_\psi L_B$, and with it the whole trajectory of $\psi$, is unchanged. For AdamW the update is invariant under a rescaling of the gradients of a group when $\epsilon=0$.
\end{proof}
Rescaling arguments of this kind underlie the effective learning rate of normalized networks \citep{li2020exponential,wan2021spherical}. What matters here is that the map is exact for the discrete stochastic dynamics, with memory and with a trained head. A trained non-invariant block breaks only the \emph{common} rescaling of all learning rates, including its own. In the teacher--student network with a trained output layer, the mapped trajectories coincide to rounding error while the head moves by $O(1)$. A common rescaling of all learning rates, in contrast, changes them at the per-cent level (\ncheck{head}).

\begin{corollary}[Dependence on $\eta\lambda$; exact entropy relation]\label{cor:maxwell}
Keep the model, the batch size and the noise law fixed. Let the invariant groups share $(\eta,\lambda)$, and keep the hyperparameters of $\psi$ fixed. In the variables $z=w/\sqrt\eta$ and $p=\sqrt\eta\,m$ of the invariant part, the updates read $z_+=(1-\gamma)z-\nabla_zL_B(z,\psi)$ (SGD) and $p_+=\beta p+(1-\beta)\nabla_zL_B(z,\psi)$, $z_+=(1-\gamma)z-p_+$ (memory). In addition, $\nabla_\psi L_B(w,\psi)=\nabla_\psi L_B(z,\psi)$. For AdamW with $\epsilon=0$ the same statements hold for $z=w/\eta$. The dynamics therefore involves $(\eta,\lambda)$ only through $\gamma$, and the rescaling of \cref{prop:symmetry} maps every stationary law at $(\eta,\lambda)$ to a stationary law at $(\alpha\eta,\lambda/\alpha)$. The following statements hold for stationary laws that correspond in this way. (i)~The directions $(n_a)_a$ and the parameters $\psi$ have identical distributions. Every observable unchanged by the rescaling depends on $\eta$ and $\lambda$ only through their product. These include the loss, the accuracy, the edge ratios and, for fully scale-invariant networks, the ratios studied below. (ii)~At fixed $\gamma$, $r_a\propto\eta^\ell$ in law, with $\ell=\frac12$ for SGD and memory and $\ell=1$ for AdamW ($\epsilon\to0$). (iii)~Let $\mathcal S=\mathcal S(n,\psi)+\sum_a(d_a-1)\E\log r_a$ be the per-group thin-shell entropy, and assume that it is finite for the stationary laws considered. Then $\mathcal S(\alpha\eta,\lambda/\alpha)-\mathcal S(\eta,\lambda)=\ell\sum_a(d_a-1)\log\alpha$. Wherever the derivatives exist,
\begin{equation}\label{eq:maxwell}
\Big(\frac{\partial \mathcal S}{\partial\log\eta}\Big)_\lambda-\Big(\frac{\partial \mathcal S}{\partial\log\lambda}\Big)_\eta=\ell\sum_a(d_a-1),
\end{equation}
exactly, including all discrete-time and edge-of-stability effects. The derivatives are taken at fixed hyperparameters of $\psi$. For the entropy built from one global radius of the invariant weights \citep{sadrtdinov2025idealgas} the right-hand side is $\ell(\sum_ad_a-1)$. For the differential entropy of the full density of $w$ it is $\ell\sum_ad_a$. With separate hyperparameters per group the relation holds group by group, with right-hand side $\ell(d_a-1)$.
\end{corollary}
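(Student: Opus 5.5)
The proof has four steps. First I verify the reduced updates in the variables $z$ and $p$. Then I obtain the stationary-law correspondence from \cref{prop:symmetry}. Finally I push the rescaling through the observables (i)--(ii) and the entropy (iii).

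\textbf{Reduced variables.} Set $z=w/\sqrt\eta$ and $p=\sqrt\eta\,m$. By \eqref{eq:si}, $\nabla_wL_B(w,\psi)=\eta^{-1/2}\nabla_zL_B(z,\psi)$, because $L_B(w,\psi)=L_B(z,\psi)$ by invariance. Dividing $w_+=(1-\gamma)w-\eta g_B$ by $\sqrt\eta$ gives $z_+=(1-\gamma)z-\nabla_zL_B$. Multiplying the memory recursion by $\sqrt\eta$ gives the stated $p$-update. Since $L_B$ is unchanged under $w\mapsto z$ at fixed $\psi$, we get $\nabla_\psi L_B(w,\psi)=\nabla_\psi L_B(z,\psi)$. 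The $\psi$-optimizer has fixed hyperparameters and depends only on $\nabla_\psi L_B$, so the joint Markov chain for $(z,p,\psi,\text{state}_\psi)$ depends on $(\eta,\lambda)$ only through $\gamma$. For AdamW with $\epsilon=0$, the direction $\hat m/\sqrt{\hat v}$ is invariant under rescaling of the gradients. With $z=w/\eta$, the gradient scales by $\eta^{-1}$, the direction is unchanged, and the step $\eta u$ becomes $u$ in $z$.

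\textbf{Correspondence of stationary laws.} The map of \cref{prop:symmetry} is a bijection that conjugates the two transition kernels, because it sends each trajectory to a trajectory with the same batches. Hence it pushes any invariant law forward to an invariant law. Note that $z$ itself is unchanged by the map, since $\alpha^{1/2}w/\sqrt{\alpha\eta}=w/\sqrt\eta$. So the corresponding stationary laws are literally the same law of $(z,p,\psi)$ at the same $\gamma$.

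\textbf{Items (i) and (ii).} The directions $n_a=z_a/\|z_a\|$ and $\psi$ are functions of the reduced state, so their laws coincide. This gives (i) for every rescaling-invariant observable. The edge ratio $\eta h/r^2$ is one such observable, since $h$ is computed on the unit sphere. Moreover $r_a=\sqrt\eta\,\|z_a\|$, with the law of $z$ fixed at fixed $\gamma$, which gives $\ell=1/2$. For AdamW, $r_a=\eta\|z_a\|$, which gives $\ell=1$.

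\textbf{Item (iii).} Since the law of $(n,\psi)$ is unchanged, $\mathcal S(n,\psi)$ is unchanged. The radial part shifts: $\E\log r_a$ increases by $\ell\log\alpha$ for each group. Finiteness ensures that no $\infty-\infty$ occurs. This gives the finite-difference relation. To obtain \eqref{eq:maxwell}, write $\mathcal S(\eta,\lambda)=\Phi(\log\eta+\log\lambda)+\ell\sum_a(d_a-1)\log\eta$. This form holds because the rescaling fixes $\gamma$ and changes $\log\eta$ by $\log\alpha$. Differentiating, the $\Phi'$ terms cancel in the difference and the constant remains. The variants follow by the same bookkeeping. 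With one global radius, the shell factor is $r^{\sum d_a-1}$. For the full differential entropy of $w$, we use $h(\alpha^\ell w)=h(w)+\ell\sum d_a\log\alpha$. With separate hyperparameters per group, we apply \cref{prop:symmetry} with a single $\alpha_a\ne1$.

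\textbf{Main obstacle.} I expect the main difficulty to be justifying that corresponding laws are matched consistently. With several stationary laws at the same $\gamma$, the relation is a statement about corresponding pairs. The derivative form therefore needs a selection of laws that is differentiable along $\gamma$ and matched across the rescaling. I would handle this by defining $\Phi$ on the chosen family and restricting to points where $\Phi$ is differentiable.
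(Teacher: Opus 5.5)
Your proposal is correct and follows the paper's proof. The rescaling of \cref{prop:symmetry} leaves the law of $(n_a)_a$ and $\psi$ unchanged and multiplies each radius by $\alpha^{\ell}$, so $\mathcal S-\ell\sum_a(d_a-1)\log\eta$ depends only on $\log\eta+\log\lambda$, and \eqref{eq:maxwell} follows by differentiation. Your explicit checks of the reduced updates and the kernel conjugation, your observation that $(z,p,\psi)$ is itself fixed by the map, and your remark on selecting compatible stationary laws only fill in details that the paper's short proof leaves implicit.
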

\begin{proof}
By \cref{prop:symmetry}, the shift $(\log\eta,\log\lambda)\mapsto(\log\eta+\delta,\log\lambda-\delta)$ leaves the joint law of the directions and of $\psi$ unchanged. It multiplies every radius by $e^{\ell\delta}$, so $\mathcal S$ changes by $\ell\delta\sum_a(d_a-1)$. Hence $\mathcal S-\ell\sum_a(d_a-1)\log\eta$ is a function of $\log\eta+\log\lambda$ alone, and \eqref{eq:maxwell} follows by differentiation.
\end{proof}
Neither uniqueness of the stationary law nor detailed balance is assumed. Uniqueness is needed only to identify stationary states reached independently, for example from different seeds. Such states need not correspond. For SGD with one group of dimension $d_1$, \eqref{eq:maxwell} is the Maxwell relation with coefficient $(d_1-1)/2$. This is the relation that \citet{sadrtdinov2025idealgas} derived from the ideal gas and confirmed within about 3\% for ResNet-18 on CIFAR-10. The corollary explains why it survives where the gas law with a fixed noise strength fails. It also explains why their noise level depends on $\eta\lambda$ only. For AdamW the coefficient is $d-1$, exact as $\epsilon\to0$. If the learning rate and decay shared by the head are varied as well, the relation acquires a correction set by the response of the head (\cref{sec:mixture}). The relation follows from a symmetry of the dynamics, not from a Gibbs form of the stationary law.

\subsection{Loss balance}

\begin{proposition}[Loss balance]\label{prop:energy}
Let the update be $\Delta\theta=-\eta v$ with any $v$ independent of the evaluation batch $B'$. For \eqref{eq:update}, $v=m_{t+1}+\lambda\theta$. Then, for any network,
\begin{equation}\label{eq:energy-general}
\E[g\cdot v]=\frac\eta2\,\E[v^\top Hv]+\frac{\E\Rem(1)-d_L}{\eta},\qquad d_L=\E[L(\theta_{t+1})-L(\theta_t)].
\end{equation}
For a fully scale-invariant network, $g\cdot w=0$ and $Hw=-g$ give $(1+\gamma)\E[g\cdot m_{t+1}]=\frac\eta2\cK+(\E\Rem(1)-d_L)/\eta$. For SGD this becomes
\begin{equation}\label{eq:energy-sgd}
(1+\gamma)\,G=\frac\eta2\,\big(\Xi_H+B_H\big)+\frac{\E\Rem(1)-d_L}{\eta}.
\end{equation}
\end{proposition}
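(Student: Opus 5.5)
The plan is to read the definition \eqref{eq:remainder} of $\Rem(1)$ at $f=1$ as an exact Taylor decomposition of the loss increment, take expectations, and solve for $\E[g\cdot v]$. For a fixed realisation of $(\theta_t,\Delta\theta)$ and an evaluation batch $B'$ independent of it, \eqref{eq:remainder} gives
\begin{equation*}
L_{B'}(\theta_{t+1})-L_{B'}(\theta_t)=g_{B'}\cdot\Delta\theta+\tfrac12\,\Delta\theta^\top H_{B'}\Delta\theta+\Rem(1).
\end{equation*}
Since $B'$ is independent of the update, conditioning on $(\theta_t,\Delta\theta)$ replaces $L_{B'}$, $g_{B'}$ and $H_{B'}$ by $L$, $g(\theta_t)$ and the regular Hessian $H(\theta_t)$; this is the same step that identifies $\Rem_L$ with the remainder of the objective. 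Taking the full expectation then yields $d_L=\E[g\cdot\Delta\theta]+\tfrac12\E[\Delta\theta^\top H\Delta\theta]+\E\Rem(1)$. Substituting $\Delta\theta=-\eta v$ gives $d_L=-\eta\E[g\cdot v]+\tfrac{\eta^2}{2}\E[v^\top Hv]+\E\Rem(1)$, and dividing by $\eta$ produces \eqref{eq:energy-general}. No stationarity is needed, because $d_L$ is retained. For \eqref{eq:update} one has $\Delta\theta=-\gamma\theta-\eta m_{t+1}=-\eta(m_{t+1}+\lambda\theta)$, since $\gamma=\eta\lambda$.

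For a fully scale-invariant network I then expand both sides with $v=m_{t+1}+\lambda w$. Using $g\cdot w=0$ from \eqref{eq:si}, $\E[g\cdot v]=\E[g\cdot m_{t+1}]$. For the quadratic form, $Hw=-g$ gives $w^\top Hw=-w\cdot g=0$ and $w^\top Hm_{t+1}=-g\cdot m_{t+1}$. Hence $v^\top Hv=m_{t+1}^\top Hm_{t+1}-2\lambda\,g\cdot m_{t+1}$, so $\tfrac\eta2\E[v^\top Hv]=\tfrac\eta2\cK-\gamma\E[g\cdot m_{t+1}]$. Moving the $\gamma$ term to the left produces the factor $(1+\gamma)$. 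Note that the identities $Hw=-g$ and $g\cdot w=0$ are needed for the averaged $L$ and the regular Hessian. They follow from the samplewise relations \eqref{eq:si} by averaging over $B$, and the paper states that \eqref{eq:si} holds samplewise for the regular Hessian.

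For SGD, $m_{t+1}=g_{B_t}(\theta_t)=g+\xi$ with $\E[\xi\,|\,\mathcal F_t]=0$. This gives $\E[g\cdot m_{t+1}]=G$. It also gives $\cK=\E[(g+\xi)^\top H(g+\xi)]=B_H+\E\Tr(\Sigma H)=B_H+\Xi_H$, because the cross term vanishes by conditional unbiasedness and $\E[\xi^\top H\xi\,|\,\mathcal F_t]=\Tr(\Sigma H)$. Together these give \eqref{eq:energy-sgd}.

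The only delicate point is the independence bookkeeping. The evaluation batch $B'$ must be independent of $B_t$, so that $H_{B'}$ averages to $H(\theta_t)$ rather than correlating with $\Delta\theta$. For the SGD specialisation, $H$ must be evaluated at $\theta_t$, so that it is $\mathcal F_t$-measurable when the noise is averaged. I also need the finiteness of moments assumed at the start of \cref{sec:exact} to justify exchanging the expectations. For ReLU networks, $\Rem(1)$ absorbs the activation-boundary contributions automatically, since it is defined with the regular Hessian.
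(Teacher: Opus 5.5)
Your proposal is correct and follows the paper's proof. You evaluate the definition of $\Rem$ at $f=1$, average over the independent evaluation batch $B'$, substitute $\Delta\theta=-\eta v$, and insert $g\cdot w=0$ and $Hw=-g$ from \eqref{eq:si}, writing out the expansion of $v^\top Hv$ and the noise average $\cK=\Xi_H+B_H$ that the paper leaves implicit.
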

\begin{proof}
\eqref{eq:remainder} at $f=1$ is the definition of $\Rem$. Average over $B'$, which is independent of $v$ and of $\theta_{t+1}$. Then use $\E_{B'}L_{B'}(\theta_{t+1})=L(\theta_{t+1})$. For the scale-invariant form insert \eqref{eq:si}.
\end{proof}
We call $\cE=\E[g\cdot v]/(\frac\eta2\E[v^\top Hv])$ the \emph{loss-balance ratio} and $\cR=1+\E\Rem(1)/Q$ its \emph{remainder prediction}. Since \eqref{eq:energy-general} is an identity, $\cE=\cR-d_L/Q$ for any network. The agreement $\cE\approx\cR$ in data is a check of stationarity and of independent estimators (\cref{app:estimators}). The physics is in the decomposition: how much of the deviation from the quadratic model sits in $\Rem$, and how $\Rem$ scales with $f$. For SGD with $\gamma\to0$ and $\Rem=0$, \eqref{eq:energy-sgd} is the linear fluctuation--dissipation relation (FDR) \eqref{eq:fdr} with $\aeos=2$, including its edge term $\frac\eta2B_H$. Every deviation from the linear FDR is then the finite-step remainder.

\subsection{Gradient memory: finite-decay identity and curvature transport}

With memory, define the drifts $d_{gm}=\E[g_{t+1}\cdot m_{t+1}-g_t\cdot m_t]$ and $d_{\cM}=\E[m_{t+1}^\top H(\theta_{t+1})m_{t+1}-m_t^\top H(\theta_t)m_t]$, the remainder expectations $\bar\Rem_i=Q\rho_i$, and the response of the loss to a global rescaling of all parameters,
\begin{equation}\label{eq:e}
e(\theta)=\theta\cdot\nabla L(\theta)=\frac{d}{ds}L(e^s\theta)\Big|_{s=0},\qquad \bar e=\E\,e(\theta_t),\qquad d_e=\E[e(\theta_{t+1})-e(\theta_t)].
\end{equation}
For a fully scale-invariant network $e\equiv0$. For exactly invariant groups coupled to non-invariant parameters, $e=\psi\cdot\nabla_\psi L$. The fresh noise $\xi_t$ is conditionally unbiased given the past.

\begin{theorem}[Finite-decay, finite-window identity]\label{thm:finite}
For gradient memory \eqref{eq:update} with common $\eta$, $\gamma$ and $\beta$ for all parameters, on any network,
\begin{equation}\label{eq:finite}
\eta(1-\beta)\,\big(G+\lambda\bar e\big)=(1+q)\,Q+(1-q)\,(\bar\Rem_0-d_L)+q\,\bar\Rem_1+\eta\beta\,d_{gm}+\beta\gamma\,d_e .
\end{equation}
Define the linear-FDR ratio $c=G/[\frac\eta2(\Xi_H+\kappa B_H)]$, the linear value $\cK_{\mathrm{lin}}=\kappa\Xi_H+\kappa^2B_H$ of the update curvature (\cref{prop:fdr}) and $U=\cK/\cK_{\mathrm{lin}}$. For a fully scale-invariant network, in a stationary state,
\begin{equation}\label{eq:F2}
c=U\,\frac{(1+q)+(1-q)\rho_0+q\rho_1}{(1+\beta)\,[1+\gamma(1+\rho_0)]},\qquad \big[1+\gamma(1+\rho_0)\big]\,Q=\frac{\eta^2\cK}{2}.
\end{equation}
\end{theorem}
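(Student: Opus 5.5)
The plan is to derive \eqref{eq:finite} by elimination. I would write the single unknown $X=\E[g_t\cdot m_{t+1}]$ in two independent ways and then eliminate $X$. Write the step as $\Delta\theta=-\eta m_{t+1}-\gamma\theta_t$, so that $m_{t+1}=-(\Delta\theta+\gamma\theta_t)/\eta$.

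\emph{First relation: the loss balance.} This is the $f=1$ remainder after averaging over $B'$, as in the proof of \cref{prop:energy}. It gives $\E[g_t\cdot\Delta\theta]=d_L-Q-\bar\Rem_0$, with $Q=\frac12\E[\Delta\theta^\top H\Delta\theta]$. Since $g_t\cdot\Delta\theta=-\eta\,g_t\cdot m_{t+1}-\gamma e(\theta_t)$, this yields
\[
\eta X=Q+\bar\Rem_0-d_L-\gamma\bar e .
\]

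\emph{Second relation: the memory recursion.} Conditional unbiasedness of $\xi_t$ gives $X=(1-\beta)G+\beta\E[g_t\cdot m_t]$. I then shift time with the drift: $\E[g_t\cdot m_t]=\E[g_{t+1}\cdot m_{t+1}]-d_{gm}$. The step I expect to carry the real bookkeeping risk is splitting
\[
g_{t+1}\cdot m_{t+1}=g_t\cdot m_{t+1}+(g_{t+1}-g_t)\cdot m_{t+1}
\]
and expressing the last term through measurable quantities, in three pieces:
\begin{enumerate}[label=(\alph*)]
\item The $\Delta\theta$ part. The endpoint-derivative remainder gives $(g_{t+1}-g_t)\cdot\Delta\theta=\Delta\theta^\top H\Delta\theta+\Rem_L'(1)$. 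Its mean is $2Q+\bar\Rem_1$.
\item The $\theta_t$ part. I would use the definition \eqref{eq:e}: $e(\theta_{t+1})-e(\theta_t)=(g_{t+1}-g_t)\cdot\theta_t+g_{t+1}\cdot\Delta\theta$.
\item Closing the loop. The term $g_{t+1}\cdot\Delta\theta$ produced in (b) is reduced once more by (a) and the loss relation. This brings in $d_e$, $\bar e$ and a second copy of the remainders and of $d_L$.
\end{enumerate}
After this, $X$ appears on both sides with coefficient ratio $\beta$ against $1$. Solving, multiplying by $\eta(1-\beta)$ and substituting the first relation should collect the coefficients into $1+q=1+\beta(1-\gamma)$, $1-q$ and $q$. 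I expect the main obstacle to be tracking where each factor of $\gamma$ attaches (to $\theta_t$ versus $\theta_{t+1}$). That is what must turn the $\lambda\bar e$ term and the $\beta\gamma d_e$ term into exactly the stated combination, and I would check it by setting $\beta=0$ and recovering \eqref{eq:energy-general}.

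\emph{Reduction to \eqref{eq:F2}.} For a fully scale-invariant network, $e\equiv0$, and all drifts vanish in a stationary state. Using $w\cdot m_{t+1}$ together with $Hw=-g$ and $w^\top Hw=-g\cdot w=0$, the quadratic form becomes
\[
\Delta\theta^\top H\Delta\theta=\eta^2 m_{t+1}^\top Hm_{t+1}-2\eta\gamma\, g_t\cdot m_{t+1} .
\]
Hence $2Q=\eta^2\cK-2\gamma\eta X$. Combining this with $\eta X=Q(1+\rho_0)$ from the first relation gives $[1+\gamma(1+\rho_0)]Q=\eta^2\cK/2$.

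Inserting this $Q$ into \eqref{eq:finite} then gives
\[
\eta(1-\beta)G=\frac{\eta^2\cK}{2}\cdot\frac{(1+q)+(1-q)\rho_0+q\rho_1}{1+\gamma(1+\rho_0)} .
\]
Dividing by $\frac\eta2(1-\beta)(\Xi_H+\kappa B_H)=\frac\eta2(1+\beta)\cK_{\mathrm{lin}}$, which holds because $\kappa(1+\beta)=1-\beta$, produces the stated formula for $c$ with $U=\cK/\cK_{\mathrm{lin}}$.
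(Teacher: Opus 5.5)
Your proposal is correct and follows the paper's proof essentially step by step. The steps are the same: the loss expansion giving $\eta X=Q+\bar\Rem_0-d_L-\gamma\bar e$, the unbiasedness relation $X=(1-\beta)G+\beta\,\E[g_t\cdot m_t]$, and the split of the $g\cdot m$ increment using $\eta m_{t+1}=-(\Delta\theta+\gamma\theta_t)$ together with the definition of $e$. The reduction to \eqref{eq:F2} via $\theta^\top H=-g^\top$ and $\theta^\top H\theta=0$ is also the paper's. The only cosmetic difference is in step (c): you rewrite $\E[g_t\cdot\Delta\theta]$ with the loss relation immediately, whereas the paper first writes it as $-\gamma e-\eta\,g\cdot m_{t+1}$ and eliminates $X$ at the end; the two are equivalent.
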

The proof (\cref{app:proofs}) uses only the loss expansion along the step, the definition of $e$ and the unbiasedness of the fresh noise. No quadratic approximation, Gaussian noise or $\gamma\to0$ limit is needed. Scale invariance enters only through $e$. The relation \eqref{eq:F2} additionally uses $\theta\cdot g=0$ and $H\theta=-g$. A non-invariant block therefore enters through the measurable term $\lambda\bar e$ rather than as an unspecified correction. Because \eqref{eq:finite} and \eqref{eq:transport} are identities, evaluating them numerically validates the estimators and the accounting of the terms, not a predictive model. The content lies in the size and sign of the individual terms, which the identities do not fix. In the teacher--student network with a trained output layer and common $(\eta,\lambda,\beta)$, \eqref{eq:finite} holds within its statistical error of $3\cdot10^{-4}$. The form without the head term, in contrast, misses by up to 1.5\%, 60 standard errors (\ncheck{head}). For $\beta=0$, $\cK=\Xi_H+B_H$ and $U=1$ exactly. Then \eqref{eq:F2} reduces to the stationary form of \eqref{eq:energy-sgd}, $c=(1+\rho_0)/[1+\gamma(1+\rho_0)]$. For $\gamma\to0$,
\begin{equation}\label{eq:decomp}
c=U\Big[1+\frac{(1-\beta)\rho_0+\beta\rho_1}{1+\beta}\Big].
\end{equation}
When $Q$ or a reference denominator is poorly resolved, the unnormalized form \eqref{eq:finite} is the one to test. For a fully scale-invariant network, an intermediate step of the proof gives the first-moment relation $\E[g\cdot m_t]+(1-\gamma)\E[g\cdot m_{t+1}]=(1-\gamma)(2\bar\Rem_0-\bar\Rem_1-2d_L)/\eta-d_{gm}$. Without remainders and drifts it gives $\E[g\cdot m_t]=-(1-\gamma)(1-\beta)G/(1+q)\to-\kappa\,G$. On average, the memory is thus anti-aligned with the current mean gradient.

\begin{proposition}[Curvature transport]\label{prop:transport}
With $E_X=\E[g^\top Hm_t]+\kappa B_H$ and the \emph{curvature transport}
\begin{equation}\label{eq:DH}
\DH=\E\big[m_{t+1}^\top\big(H(\theta_{t+1})-H(\theta_t)\big)m_{t+1}\big],
\end{equation}
gradient memory obeys, on any network and without any stationarity assumption,
\begin{equation}\label{eq:transport}
\cK=\cK_{\mathrm{lin}}+\frac{2\beta}{1+\beta}E_X+\frac{\beta^2}{1-\beta^2}\,\big(\DH-d_{\cM}\big).
\end{equation}
\end{proposition}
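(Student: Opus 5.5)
The plan is a direct one-step recursion for the update curvature: expand $m_{t+1}$ inside $\cK$, take conditional expectations using the unbiasedness of the fresh noise, and close the recursion using the definitions of $\DH$ and $d_{\cM}$. No stationarity enters; the finite-window drift $d_{\cM}$ absorbs any mismatch between the time marginals. Write $H_t=H(\theta_t)$.

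\emph{Step 1: expand $\cK$.} I insert $m_{t+1}=\beta m_t+(1-\beta)g_{B_t}(\theta_t)$ into $\cK=\E[m_{t+1}^\top H_t m_{t+1}]$, using the symmetry of $H_t$. This gives three terms: $\beta^2\E[m_t^\top H_t m_t]$, the cross term $2\beta(1-\beta)\E[m_t^\top H_t g_{B_t}]$, and $(1-\beta)^2\E[g_{B_t}^\top H_t g_{B_t}]$.

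\emph{Step 2: the two noise terms.} Both $m_t$ and $H_t$ are $\mathcal F_t$-measurable, and $\E[g_{B_t}\,|\,\mathcal F_t]=g_t$. So the cross term becomes $2\beta(1-\beta)\E[g^\top H m_t]$. Writing $g_{B_t}=g_t+\xi_t$, the conditional mean of the quadratic term is $g_t^\top H_t g_t+\Tr(\Sigma H_t)$. Hence its expectation is $B_H+\Xi_H$. This step is the only one that uses a property of the noise.

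\emph{Step 3: close the recursion on $m_t^\top H_t m_t$.} The term $\E[m_t^\top H_t m_t]$ is not $\cK$, because $\cK$ pairs $m_{t+1}$ with $H_t$. I bridge the two in two moves:
\[
\E[m_{t+1}^\top H_{t+1}m_{t+1}]=\cK+\DH
\]
by the definition \eqref{eq:DH}, and
\[
\E[m_{t+1}^\top H_{t+1}m_{t+1}]-\E[m_t^\top H_t m_t]=d_{\cM}
\]
by the definition of $d_{\cM}$. Together these give $\E[m_t^\top H_t m_t]=\cK+\DH-d_{\cM}$. I expect this bookkeeping to be the main point of care: the curvature is evaluated at the earlier state while the memory is the later one, so the time indices must be tracked exactly. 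It is also where the absence of any stationarity assumption enters.

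\emph{Step 4: solve and regroup.} Substituting Step 3 into Step 1 gives a linear equation for $\cK$:
\[
(1-\beta^2)\cK=\beta^2(\DH-d_{\cM})+2\beta(1-\beta)\E[g^\top Hm_t]+(1-\beta)^2(\Xi_H+B_H).
\]
Dividing by $1-\beta^2$ yields
\[
\cK=\kappa(\Xi_H+B_H)+\frac{2\beta}{1+\beta}\E[g^\top Hm_t]+\frac{\beta^2}{1-\beta^2}(\DH-d_{\cM}).
\]
It remains to match this with $\cK_{\mathrm{lin}}=\kappa\Xi_H+\kappa^2B_H$. The leftover is $(\kappa-\kappa^2)B_H=\kappa(1-\kappa)B_H$. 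Since $1-\kappa=2\beta/(1+\beta)$, it equals $\frac{2\beta}{1+\beta}\kappa B_H$. Combined with the cross term, this produces $\frac{2\beta}{1+\beta}E_X$, which is exactly \eqref{eq:transport}. Nothing in the argument uses scale invariance, a quadratic loss or Gaussian noise, consistent with the claim that the identity holds on any network.
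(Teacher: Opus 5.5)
Your proposal is correct and follows essentially the same route as the paper: it expands $\cK$ using the conditional unbiasedness of the fresh noise, substitutes the definitional relation $\cM=\cK+\DH-d_{\cM}$ for the pre-update curvature, solves the resulting linear equation for $\cK$, and regroups using $\kappa B_H-\frac{2\beta\kappa}{1+\beta}B_H=\kappa^2B_H$. Your explicit tracking of which memory is paired with which Hessian is exactly the bookkeeping the paper's two-line proof compresses.
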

\begin{proof}
Let $\cM=\E[m_t^\top H(\theta_t)m_t]$ be the pre-update curvature. Since the fresh noise is conditionally unbiased, $\cK=\beta^2\cM+2\beta(1-\beta)\E[m_t^\top Hg]+(1-\beta)^2(B_H+\Xi_H)$. By definition $\cM=\cK+\DH-d_{\cM}$. Eliminate $\cM$ and use $\kappa B_H-\frac{2\beta\kappa}{1+\beta}B_H=\kappa^2B_H$.
\end{proof}
Divided by $\cK_{\mathrm{lin}}$, \eqref{eq:transport} is an exact accounting of the update-curvature deficit into a mixed term and a transport term, which includes the drift of the memory curvature:
\begin{equation}\label{eq:accounting}
U-1=\delta_X+\delta_D,\qquad \delta_X=\frac{2\beta}{1+\beta}\,\frac{E_X}{\cK_{\mathrm{lin}}},\qquad \delta_D=\frac{\beta^2}{1-\beta^2}\,\frac{\DH-d_{\cM}}{\cK_{\mathrm{lin}}}.
\end{equation}
The identity needs neither scale invariance, nor Gaussian noise, nor a fixed Hessian, nor a quadratic loss, nor $\gamma=0$. It distinguishes the curvature $\cK$ of the current update from the curvature $\cM$ of the memory before the update. Apart from the mixed term $E_X$, the update curvature falls short of its linear value exactly when the curvature seen by the same direction decreases after the step. The factor $\beta^2/(1-\beta^2)$ ($4.26$ for $\beta=0.9$) quantifies the amplification of curvature transport by gradient memory. The average keeps re-using its direction while the weights move along it. With $\gamma\to0$, $\Delta\theta=-\eta m_{t+1}$ and $\DH=Q\rhoH/\eta^2=\rhoH\cK/2$. At stationarity, where $d_{\cM}=0$, \eqref{eq:transport} then gives $U=(1+\delta_X)/(1-\frac{\beta^2}{2(1-\beta^2)}\rhoH)$. Otherwise the numerator contains the additional term $-\frac{\beta^2}{1-\beta^2}\,d_{\cM}/\cK_{\mathrm{lin}}$. Together with \eqref{eq:decomp}, this expresses $c$ through $\rho_0$, $\rho_1$, $\rhoH$ and $\delta_X$. At finite $\gamma$, $\DH$ and $\rhoH$ differ by the terms of \eqref{eq:A-decay} below. Both contain geometric contributions of order $\gamma$ from the change of the radius along the step. In the teacher--student runs of \cref{sec:ts}, $\DH\approx-4\gamma\cK$ and $\rhoH\approx-(0.6\text{--}1)\gamma$ for $\beta\ge0.9$ with negligible anharmonicity. In addition, the radial part of the memory contributes to $E_X$ (\cref{app:lyapunov}). The relations to use are then \eqref{eq:F2} and \eqref{eq:transport}. The geometric parts must be separated before $\DH$ or $\rhoH$ is read as transport of the landscape curvature.

\subsection{Curvature changes and the arrow of time}\label{sec:antisym}
The displacement-weighted change in curvature provides a diagnostic of the arrow of time. It changes sign when the order of two consecutive parameter states is reversed. A nonzero mean therefore demonstrates time-reversal asymmetry of their joint distribution. Its mean and second moment give a lower bound on the relative entropy between the forward and reversed pair distributions.

\begin{proposition}[Curvature change along a step]\label{prop:antisym}
For consecutive states define
\begin{equation}\label{eq:A}
\cA(\theta,\theta')=(\theta'-\theta)^\top\big[H(\theta')-H(\theta)\big](\theta'-\theta).
\end{equation}
(i)~On any network $\cA(\theta',\theta)=-\cA(\theta,\theta')$. Hence every reversal-symmetric law of consecutive states, $\pi(\theta,\theta')=\pi(\theta',\theta)$, has $\E\cA=0$. A nonzero mean therefore witnesses time-reversal asymmetry. The converse does not hold: for a constant Hessian $\cA\equiv0$ in any process. (ii)~For a fully scale-invariant loss, $\cA$ is invariant under a global rescaling of $\theta$ and is compatible with \cref{prop:symmetry}. (iii)~For gradient memory with decay, $\E\cA=Q\rhoH$ and
\begin{equation}\label{eq:A-decay}
\E\cA=\eta^2\DH+2\gamma\eta\,\E\big[\theta_t^\top(H_+-H)m_{t+1}\big]+\gamma^2\,\E\big[\theta_t^\top(H_+-H)\theta_t\big],
\end{equation}
with $H_+=H(\theta_{t+1})$. Hence $\DH$ alone is a pair-reversal observable only for $\gamma=0$. (iv)~Let $\pi$ be a law of consecutive pairs $(\theta_t,\theta_{t+1})$, such as the stationary law or the transition law averaged over a finite window. Let $\pi^{\mathrm R}(\theta,\theta')=\pi(\theta',\theta)$ and $I_2=D_{\mathrm{KL}}(\pi\,\|\,\pi^{\mathrm R})$. If $0<\E\cA^2<\infty$,
\begin{equation}\label{eq:I2}
I_2\ \ge\ \frac{2\,(\E\cA)^2}{\E\cA^2}.
\end{equation}
\end{proposition}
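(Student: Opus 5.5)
Part~(i) is a direct computation. Exchanging $\theta$ and $\theta'$ flips the sign of the bracket $H(\theta')-H(\theta)$ and leaves the quadratic form in $\theta'-\theta$ unchanged, so $\cA(\theta',\theta)=-\cA(\theta,\theta')$. For a law with $\pi(\theta,\theta')=\pi(\theta',\theta)$, relabelling the integration variables gives $\E_\pi\cA=\E_\pi\cA(\theta',\theta)=-\E_\pi\cA$, hence $\E\cA=0$. The remark on the converse needs only $H$ constant, which gives $\cA\equiv0$.

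For (ii), I will use the degree-zero homogeneity behind \eqref{eq:si}. Differentiating $L_B(\alpha\theta)=L_B(\theta)$ twice gives $H(\alpha\theta)=\alpha^{-2}H(\theta)$; this holds batchwise and therefore also for $H=\Hreg$. Under $\theta\mapsto\alpha\theta$, $\theta'\mapsto\alpha\theta'$ the displacement picks up a factor $\alpha$ in each of its two slots and the bracket picks up $\alpha^{-2}$, so $\cA$ is unchanged. Since \cref{prop:symmetry} with a common $\alpha$ rescales all of $\theta$ by $\alpha^{1/2}$ (or by $\alpha$ for AdamW), $\cA$ evaluated on mapped trajectories coincides with the original.

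For (iii), the first claim $\E\cA=Q\rhoH$ is the definition \eqref{eq:rhos} with $\theta=\theta_t$, $\theta'=\theta_{t+1}$. For the expansion, I insert $\Delta\theta_t=-\gamma\theta_t-\eta m_{t+1}$ from \eqref{eq:update} into the symmetric quadratic form $\Delta\theta^\top(H_+-H)\Delta\theta$. Expanding bilinearly and using symmetry of $H_+-H$ to merge the two cross terms yields the three terms of \eqref{eq:A-decay}, the first being $\eta^2\DH$ by \eqref{eq:DH}. Taking expectations, assuming the stated finite moments, completes this part.

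The main step is (iv). First, if $\pi\not\ll\pi^{\mathrm R}$ then $I_2=\infty$ and there is nothing to prove, so I assume absolute continuity. I introduce the symmetric law $\bar\pi=\frac12(\pi+\pi^{\mathrm R})$ and write $\pi=\bar\pi(1+\varepsilon)$ and $\pi^{\mathrm R}=\bar\pi(1-\varepsilon)$. Here $\varepsilon\in[-1,1]$ is odd under the swap $(\theta,\theta')\mapsto(\theta',\theta)$. Because $\cA$ is odd and $\cA^2$ is even, I expect three relations to hold. The first is $\E_\pi\cA=\E_{\bar\pi}[\cA\varepsilon]$, since $\E_{\bar\pi}\cA=0$. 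The second is $\E_\pi\cA^2=\E_{\bar\pi}\cA^2$, since $\E_{\bar\pi}[\cA^2\varepsilon]=0$. The third follows from Cauchy--Schwarz: $(\E\cA)^2\le\E\cA^2\,\E_{\bar\pi}\varepsilon^2$. For the divergence, $I_2=\E_{\bar\pi}[(1+\varepsilon)\log\frac{1+\varepsilon}{1-\varepsilon}]$. Since $\bar\pi$ is swap-invariant and $\varepsilon$ is odd, this equals its symmetrization $\E_{\bar\pi}[\varepsilon\log\frac{1+\varepsilon}{1-\varepsilon}]=\E_{\bar\pi}[2\varepsilon\,\mathrm{artanh}\,\varepsilon]$. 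The elementary inequality $\mathrm{artanh}\,\varepsilon\ge\varepsilon$ for $\varepsilon\ge0$ (and its odd counterpart for $\varepsilon<0$) then gives $I_2\ge2\E_{\bar\pi}\varepsilon^2\ge2(\E\cA)^2/\E\cA^2$. The delicate points are integrability and the edge values $\varepsilon=\pm1$, where the integrand is $+\infty$ or $0$ consistently. I plan to handle them by truncation and monotone convergence.
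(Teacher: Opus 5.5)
Your proposal is correct and follows the paper's proof in all four parts. In (iv) your $\varepsilon$-parametrization only rewrites the paper's argument: its quantity $T_\pi=\int(\pi-\pi^{\mathrm R})^2/(\pi+\pi^{\mathrm R})$ equals $2\E_{\bar\pi}\varepsilon^2$, its symmetrized form $I_2=\frac12\int(\pi-\pi^{\mathrm R})\log(\pi/\pi^{\mathrm R})$ is your $\E_{\bar\pi}[2\varepsilon\,\mathrm{artanh}\,\varepsilon]$, and its inequality $(x-y)\log(x/y)\ge2(x-y)^2/(x+y)$ at $x=1+\varepsilon$, $y=1-\varepsilon$ is your $\varepsilon\,\mathrm{artanh}\,\varepsilon\ge\varepsilon^2$, with the same Cauchy--Schwarz step.
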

\begin{proof}
(i)~follows from exchanging $\theta$ and $\theta'$ in \eqref{eq:A}. (ii)~$H(\alpha\theta)=\alpha^{-2}H(\theta)$ for a scale-invariant loss, and the two displacement factors contribute $\alpha^2$. (iii)~Insert $\theta'-\theta=-\gamma\theta-\eta m_{t+1}$. (iv)~is proved in \cref{app:proofs} with the inequality $(x-y)\log(x/y)\ge2(x-y)^2/(x+y)$ and Cauchy--Schwarz.
\end{proof}
The bound \eqref{eq:I2} concerns pairs of parameter states. It needs no stationarity; for a window average it measures the pair asymmetry over the window. It is not the thermodynamic entropy production of the full optimizer. For that quantity, the memory variables, the reversal operation and the trajectory-length limit would have to be specified \citep{roldan2010estimating,roldan2012entropy}. Discretizations of reversible diffusions are in general irreversible themselves \citep{katsoulakis2014irreversibility}. What the proposition adds is the following. For $\gamma\to0$, the curvature transport that controls the memory correction to the FDR is the mean of this observable. The anharmonicity of the loss thus enters the memory dynamics through a time-reversal asymmetry. At finite decay the two differ by the terms of \eqref{eq:A-decay}. In the teacher--student runs, $\DH\approx-4\gamma\cK$, whereas $\E\cA/\eta^2\approx-(0.3\text{--}0.5)\gamma\cK$ for $\beta\ge0.9$. Because the transport is the mean of a pair observable, Gaussian noise does not by itself make it vanish. The relevant condition is the symmetry of the consecutive-state law, not of one fresh kick. For scale-invariant SGD with $0<\gamma<1$, the radial update constrains the support of the pair law (\cref{app:proofs}). If the radius of an invariant group changes with nonzero probability, the pair law is not absolutely continuous with respect to its reversal, and $I_2=\infty$. Dynamics with constant radius are not covered by this argument. Squaring a noisy estimate of $\E\cA$ biases \eqref{eq:I2} upwards by $2\,\mathrm{se}(\E\cA)^2/\E\cA^2$. This is the leading contribution of the noise in the numerator; estimating $\E\cA^2$ adds smaller covariance terms. We report it as the noise floor of the bound.
\section{Harmonic fluctuations and the spectral equation of state}\label{sec:linear}
We now linearize the angular dynamics around a stationary direction $n^*$ and solve the resulting models exactly. Write the tangent displacement along fixed tangent vectors $e_k$ (the Hessian eigenvectors on the unit sphere at $n^*$) as $x_k$. The corresponding angular noise $\zeta_k$ has variance $s_k$.

\subsection{Mode temperatures and the linear FDR}

\begin{proposition}[Mode temperatures]\label{prop:modes}
For a quadratic mode with $h_k>0$ and $R_k<1$ in the surrogate $x_{k,t+1}=(1-\eta_0h_k)x_{k,t}-\eta_0\zeta_{k,t}$ (SGD; \cref{approx:geo} for $\gamma>0$), and for its gradient-memory analogue with $\gamma\to0$, the stationary variance is
\begin{equation}\label{eq:modes}
\mathrm{Var}\,x_k=\frac{T_k}{h_k},\qquad T_k=h_k\mathrm{Var}\,x_k=\frac{T^{(0)}_k}{1-R_k},\qquad T^{(0)}_k=\frac{\eta_0s_k}{2}.
\end{equation}
\end{proposition}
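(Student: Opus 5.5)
The plan is to treat each harmonic mode as a linear stochastic recursion driven by i.i.d.\ zero-mean noise of variance $s_k$, and to compute its stationary second moments exactly from the discrete Lyapunov (moment-balance) equations. No continuum limit is taken, so the finite-step factor $1/(1-R_k)$ comes out of the algebra. Since the noise is conditionally unbiased and independent of the current state, all cross terms between $\zeta_{k,t}$ and $(x_{k,t},m_{k,t})$ vanish in expectation. This is the only property of the noise used.

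For SGD, write $a=\eta_0h_k$. Squaring $x_{k,t+1}=(1-a)x_{k,t}-\eta_0\zeta_{k,t}$ and imposing stationarity gives
\[
\mathrm{Var}\,x_k\,[1-(1-a)^2]=\eta_0^2s_k .
\]
Since $1-(1-a)^2=a(2-a)$, this yields $\mathrm{Var}\,x_k=\eta_0s_k/[h_k(2-a)]$. Writing $2-a=2(1-a/2)$ with $\aeos=2$ gives $T_k=(\eta_0s_k/2)/(1-R_k)$. A stationary law with finite variance exists exactly when $|1-a|<1$, i.e.\ $0<R_k<1$, which justifies the hypotheses $h_k>0$ and $R_k<1$. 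For $\gamma>0$ I would first invoke \cref{approx:geo} to reduce the decayed tangent dynamics to this form.

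For gradient memory with $\gamma\to0$, the mode obeys
\[
m_{t+1}=\beta m_t+(1-\beta)(h_kx_t+\zeta_t),\qquad x_{t+1}=x_t-\eta_0m_{t+1}.
\]
I would set $v_t=-\eta_0m_t$, which turns this into heavy-ball form with effective step $\alpha=\eta_0(1-\beta)$. I would then write the three stationary equations for $\E x^2$, $\E xv$ and $\E v^2$. Using $x_{t+1}=x_t+v_{t+1}$ with $v_{t+1}=\beta v_t-\alpha(h_kx_t+\zeta_t)$ and eliminating the cross moments should give
\[
\mathrm{Var}\,x_k=\frac{\alpha s_k(1+\beta)}{h_k(1-\beta)\,[2(1+\beta)-\alpha h_k]},
\]
consistent with the known result of \citet{liu2021noise}. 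Substituting $\alpha=\eta_0(1-\beta)$ gives
\[
2(1+\beta)-\alpha h_k=2(1+\beta)(1-R_k),
\]
with $\aeos=2(1+\beta)/(1-\beta)$. The factors of $(1\pm\beta)$ then cancel, leaving $\mathrm{Var}\,x_k=\eta_0s_k/[2h_k(1-R_k)]$, i.e.\ the same $T^{(0)}_k$ as for SGD. This cancellation is the substantive content: the memory changes only the stability threshold, not the continuum temperature.

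I expect the main obstacle to be the elimination in the memory case. The three moment equations are linear but couple through $\beta$, and the factor $1-R_k$ must be shown to emerge cleanly rather than as an unfactored quadratic. I would first solve the $\E xv$ equation for the cross moment in terms of $\E x^2$ and $\E v^2$, then use the $\E x^2$ equation, which gives $2\E xv+\E v^2=0$ at stationarity. This fixes $\E v^2$, and the $\E v^2$ equation then closes for $\E x^2$. Finally, I would confirm that the $2\times2$ companion matrix is Schur stable exactly for $0<\eta_0h_k<\aeos$ at the stated $\beta$ range, so that this stationary solution is the unique finite-variance one.
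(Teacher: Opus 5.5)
Your proposal is correct and follows essentially the paper's route: the paper also solves the stationary discrete Lyapunov equation $C=FCF^\top+s\,bb^\top$ of the single-mode $2\times2$ system (your substitution $v_t=-\eta_0m_t$ is only a change of variables) and obtains the same $\mathrm{Var}\,x_k=\eta_0s_k(1+\beta)/\{h_k[2(1+\beta)-\eta_0h_k(1-\beta)]\}$, which reduces to your SGD result at $\beta=0$. One bookkeeping caution: $2\E xv+\E v^2=0$ holds for the lagged moment $\E[x_tv_{t+1}]$ (state $(x_t,v_{t+1})$), whereas for equal-time moments it reads $2\E[x_tv_t]=\E v^2$, so fix which cross moment you track before eliminating.
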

$T^{(0)}_k$ is the temperature of mode $k$ in the continuous-time (SDE) limit. It is the same for all modes only for isotropic noise. The factor $(1-R_k)^{-1}$ is the finite-step correction. It breaks equipartition even for isotropic noise. The temperature also diverges as the mode approaches the stability threshold. For Gaussian additive noise, each stable harmonic mode has a Gaussian stationary law with the variance \eqref{eq:modes}. To first order in $\eta_0$ this modal law is the Gibbs law of the \emph{shadow loss} $L-\frac{\eta_0}{2\aeos}\|\nabla L\|^2$ at the temperature $T^{(0)}_k$. The correction in the shadow loss has the opposite sign to that of the modified loss $L+\frac\eta4\|\nabla L\|^2$ of backward error analysis \citep{barrett2021igr,smith2021origin}. The modified loss governs the mean drift rather than the fluctuations. A joint Gibbs law of the shadow loss at a single temperature requires, in addition, isotropic noise (equal $s_k$ and no covariance between modes). For non-Gaussian noise \eqref{eq:modes} still holds, but the variance does not determine the stationary law. For example, for $x_{t+1}=ax_t-\eta_0\zeta_t$ with $|a|<1$ and independent innovations, the stationary third cumulant is $-\eta_0^3\kappa_3(\zeta)/(1-a^3)$. It is nonzero for skewed noise. For SGD the factor $(1-\eta h/2)^{-1}$ appears in \citep{liu2021noise}. With heavy-ball normalization, $T^{(0)}_k$ is larger by $1/(1-\beta)$.

\begin{proposition}[Linear FDR and memory relations]\label{prop:fdr}
For a quadratic loss with constant $H$ and $\Sigma$ (not necessarily commuting), $\gamma\to0$, and all modes below the threshold,
\begin{equation}\label{eq:fdr}
\E\|g\|^2=\frac\eta2\Tr\Sigma H+\frac{\eta}{\aeos}\E[g^\top Hg]\qquad\text{(with and without memory)},
\end{equation}
and with memory
\begin{equation}\label{eq:ema-identities}
\E[g\cdot m]=-\kappa\,\E\|g\|^2,\qquad \E[g^\top Hm]=-\kappa\,\E[g^\top Hg],\qquad \E[m^\top Hm]=\kappa\Tr\Sigma H+\kappa^2\E[g^\top Hg].
\end{equation}
\end{proposition}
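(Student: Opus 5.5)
The plan is to reduce the statement to independent scalar modes, then close each scalar second-moment system with stationarity and the exact quadratic loss balance. I take $L=\tfrac12(\theta-\theta^*)^\top H(\theta-\theta^*)$, so that $g=H(\theta-\theta^*)$. At $\gamma=0$ the dynamics \eqref{eq:update} reads $m_{t+1}=\beta m_t+(1-\beta)(g_t+\xi_t)$ and $\theta_{t+1}=\theta_t-\eta m_{t+1}$, with $\xi_t$ conditionally unbiased. SGD is the case $\beta=0$, where $\kappa=1$ and $\aeos=2$.

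\emph{Step 1: reduction to modes.} I pass to an eigenbasis $e_k$ of $H$ with eigenvalues $h_k>0$, and write $x_k=e_k\cdot(\theta-\theta^*)$ and $m_k=e_k\cdot m$. The pair $(x_k,m_k)$ obeys a closed linear recursion,
\[
x_{k,t+1}=x_{k,t}-\eta m_{k,t+1},\qquad m_{k,t+1}=\beta m_{k,t}+(1-\beta)(h_kx_{k,t}+\zeta_{k,t}),
\]
with $\zeta_k=e_k\cdot\xi$ of variance $\Sigma_{kk}=e_k^\top\Sigma e_k$. Every quantity in \eqref{eq:fdr} and \eqref{eq:ema-identities} is a sum over $k$ of per-mode second moments weighted by powers of $h_k$. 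Examples are $\E\|g\|^2=\sum_kh_k^2\E x_k^2$ and $\E[g^\top Hm]=\sum_kh_k^2\E[x_km_k]$. The noise trace is $\Tr\Sigma H=\sum_kh_k\Sigma_{kk}$. The off-diagonal entries $\Sigma_{kl}$ only drive the cross-mode covariances, and these never enter the traces. This is why commutation of $H$ and $\Sigma$ is not needed. It therefore suffices to prove the scalar versions of the identities for one mode, $g=hx$ and $s=\Sigma_{kk}$, and then sum them with the appropriate weights. I assume that the stability condition $\eta h<\aeos$ guarantees finite stationary moments.

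\emph{Step 2: two exact scalar identities.} Write $G=\E g^2$, $P=\E[g m_t]$ and $K=\E[h m_{t+1}^2]$.
\begin{itemize}[leftmargin=*]
\item The quadratic loss has no remainder, so $L_{t+1}-L_t=-\eta g_tm_{t+1}+\frac{\eta^2}{2}hm_{t+1}^2$ exactly. This is \cref{prop:energy} with $\Rem=0$. Stationarity of $L$ and unbiasedness of the noise then give $\beta P+(1-\beta)G=\eta K/2$.
\item Next, $g_{t+1}m_{t+1}=g_tm_{t+1}-\eta hm_{t+1}^2$. Stationarity of $\E[gm]$ gives $(1-\beta)P=(1-\beta)G-\eta K$.
\end{itemize}
Eliminating $P$ yields $G=\eta K/(2\kappa)$ and $P=-\kappa G$. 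The latter is the first identity of \eqref{eq:ema-identities}.

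\emph{Step 3: closing for $K$.} This is the step I expect to be the real obstacle, because in vector form the hierarchy $\E[g^\top H^jm]$ does not close. The reduction of Step 1 avoids this. Squaring the memory update gives
\[
K=\beta^2K+2\beta(1-\beta)h\,\E[g\,m_t]+(1-\beta)^2h(G+s).
\]
Here I use that stationarity makes the pre-update moment $\E[hm_t^2]$ equal to $K$; the transport term vanishes because $H$ is constant. Substituting $P=-\kappa G$ gives
\[
(1-\beta^2)K=(1-\beta)^2hs+\big[(1-\beta)^2-2\beta(1-\beta)\kappa\big]hG.
\]
The bracket equals $(1-\beta)^2\kappa$, since $1-2\beta/(1+\beta)=\kappa$. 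Dividing by $1-\beta^2$ gives $K=\kappa hs+\kappa^2hG$. Summed over modes this is $\E[m^\top Hm]=\kappa\Tr\Sigma H+\kappa^2\E[g^\top Hg]$. Inserting it into $G=\eta K/(2\kappa)$ and summing gives
\[
\E\|g\|^2=\tfrac\eta2\Tr\Sigma H+\tfrac{\eta\kappa}{2}\,\E[g^\top Hg],
\]
which is \eqref{eq:fdr} because $\kappa/2=1/\aeos$.

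\emph{Step 4: the middle identity.} Multiplying the scalar relation $\E[xm]=-\kappa\,\E[xg]$ (that is, $P=-\kappa G$ divided by $h$) by $h_k^2$ and summing gives $\E[g^\top Hm]=-\kappa\,\E[g^\top Hg]$. All identities in \eqref{eq:ema-identities} refer to the pre-update memory $m_t$, except the last one, where $m_t$ and $m_{t+1}$ have the same law at stationarity.
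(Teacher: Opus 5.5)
Your proof is correct, and it reaches the result by a somewhat different route from the paper's. The reduction to independent modes is the same: only the diagonal blocks of the covariance in the eigenbasis of $H$ enter, and these see only $\Sigma_{kk}$.

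The difference is in how each mode is closed.
\begin{itemize}
\item For SGD, the paper traces the matrix Lyapunov equation $HC+CH-\eta HCH=\eta\Sigma$ against $H$, with no eigenbasis at all.
\item For memory, the paper solves the $2\times2$ Lyapunov equation $C=FCF^\top+s\,bb^\top$ of one mode explicitly.
\item You never compute $C$. You use three stationary balances instead: of the loss, of $g\,m$, and of $h\,m^2$.
\end{itemize}
Your three balances are linear combinations of the three entries of that same Lyapunov equation, so the mathematical content is equivalent.

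Your organization has two advantages.
\begin{itemize}
\item It exhibits \eqref{eq:fdr} and \eqref{eq:ema-identities} as the harmonic specialization of the balances underlying \cref{prop:energy}, \cref{thm:finite} and \cref{prop:transport}. The specialization is $\Rem=0$, $\DH=0$ and vanishing drifts, with the mode decomposition supplying $E_X=0$ and hence $\cK=\cK_{\mathrm{lin}}$.
\item It makes explicit that $\E[g\cdot m_t]=-\kappa\,\E\|g\|^2$ and $\E\|g\|^2=\eta\cK/(2\kappa)$ already hold in vector form. The eigenbasis is needed only to close $\E[g^\top Hm_t]$.
\end{itemize}

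The paper's explicit solution, in turn, delivers the full mode covariance directly, in particular $h^2C_{xx}=\frac{\eta hs/2}{1-R}$. These are the mode temperatures of \cref{prop:modes}, which your balances give only after one further elimination. For SGD the paper's route is also basis-free.
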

\begin{proof}
For SGD, multiply the Lyapunov equation $HC+CH-\eta HCH=\eta\Sigma$ by $H$ and take the trace. In the eigenbasis of $H$, all traces involve only diagonal entries of $C$. These depend only on the diagonal of $\Sigma$. It therefore suffices to solve the $2\times2$ Lyapunov equation of one memory mode (\cref{app:lyapunov}).
\end{proof}
The continuous-time (ideal-gas) value of $\E\|g\|^2$ is $\frac\eta2\Tr\Sigma H$. The second term of \eqref{eq:fdr} is the edge term. It is significant for SGD and negligible with memory at $\beta=0.9$ unless $R\to1$. With memory and small $R$, $\frac\eta2\Tr\Sigma H$ consists of two parts. The fresh noise in the current update gives $\frac\eta2\kappa\Tr\Sigma H$. The correlation between the current gradient and the noise stored in the memory gives $\frac{\eta\beta}{1+\beta}\Tr\Sigma H$. For $\beta=0.9$ this is $2\beta/(1+\beta)=95\%$ of the total. This second part relies on the linear response of the gradient to the steps taken while the noise is stored. This is where curvature transport enters (\cref{sec:closure}).

\subsection{Weight decay in the angular dynamics: a geometric mean-field approximation}\label{sec:geo}
Weight decay enters the tangent dynamics twice, in addition to the radial effects discussed below. First, renormalization contracts every $x_k$ by $(1-\gamma)$ per step on average. Second, the curvature of the sphere adds $-\omega(x)\,x_k$ to the gradient component $g\cdot e_k$. Here $\omega(x)=\sum_jh_jx_j^2$ (over the coordinates of one group) is a fluctuating quantity. Indeed, with $n=\sqrt{1-|x|^2}\,n^*+\sum_jx_je_j$ and $g\perp n$, one has $g\cdot e_k=\partial_{x_k}L+x_k\,(g\cdot n^*)/(n^*\!\cdot n)$. In addition, $g_B\cdot n^*\approx-\omega(x)-\sum_jx_j\zeta_j$.

\begin{approximation}[Geometric mean field]\label{approx:geo}
Replace $\omega(x)x_k$ by $\E[\omega]\,x_k$ and the radius ratio $r_t/r_{t+1}$ by one. Stationarity of $\E[n\cdot n^*]$ then fixes $\eta_0\E\omega=\gamma+O(\gamma^2,\gamma\E|x|^2)$. This holds with and without memory, because the memory averages past gradients and $\E[m\cdot n^*]=\E[g\cdot n^*]=-\E\omega$ in a stationary state (\cref{app:proofs}). For SGD the resulting surrogate is
\begin{equation}\label{eq:geo-sgd}
x_{k,t+1}=(1-\gamma)x_{k,t}-\eta_0\big((h_k-\E\omega)x_{k,t}+\zeta_{k,t}\big)=(1-\eta_0h_k)x_{k,t}-\eta_0\zeta_{k,t},
\end{equation}
in which the two terms cancel. This cancellation holds within the surrogate. Beyond the surrogate, the radial step adds the $O(\gamma)$ effect \eqref{eq:c-sgd}. With memory the surrogate is
\begin{equation}\label{eq:geo-ema}
m_{t+1}=\beta m_t+(1-\beta)\big((h_k-\E\omega)x_t+\zeta_t\big),\qquad x_{t+1}=(1-\gamma)x_t-\eta_0m_{t+1},
\end{equation}
in which the two terms do not cancel. Here the restoring force acts through the memory, while the contraction acts directly.
\end{approximation}

\begin{lemma}[Exact covariance of the surrogate]\label{lem:geo}
The stationary variance of \eqref{eq:geo-ema} with $\E\omega=\gamma/\eta_0$ is
\begin{equation}\label{eq:modes-q}
\mathrm{Var}\,x_k=\frac{T^{q}_k}{h_k(1-R^q_k)},\qquad T^q_k=\frac{1-\beta}{1-q}\,\frac{\eta_0s_k}{2},\qquad R^q_k=\frac{\eta_0h_k(1-\beta)}{2(1+q)}.
\end{equation}
\end{lemma}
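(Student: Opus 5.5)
The plan is to eliminate the memory variable and reduce the two-dimensional linear surrogate \eqref{eq:geo-ema} to a scalar autoregression of order two for $x_k$. For that process the stationary variance is classical. Write $h'_k=h_k-\E\omega=h_k-\gamma/\eta_0$ and $a=\eta_0(1-\beta)$. From the position update, $\eta_0m_{t+1}=(1-\gamma)x_t-x_{t+1}$, and the same relation holds one step earlier for $\eta_0m_t$. Substituting both into $\eta_0m_{t+1}=\beta\,\eta_0m_t+a(h'_kx_t+\zeta_t)$ gives
\[
x_{t+1}=\phi_1x_t+\phi_2x_{t-1}-a\,\zeta_t,\qquad \phi_1=1+\beta-\gamma-a h'_k,\qquad \phi_2=-q .
\]
Using $a h'_k=\eta_0(1-\beta)h_k-\gamma(1-\beta)$, the first coefficient simplifies to $\phi_1=1+q-\eta_0(1-\beta)h_k$. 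The innovation $-a\zeta_t$ is independent of $(x_t,x_{t-1})$ and has variance $a^2s_k$.

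Next I would invoke the standard stationary variance of a causal AR(2) process:
\[
\mathrm{Var}\,x=\frac{\sigma^2(1-\phi_2)}{(1+\phi_2)\big[(1-\phi_2)^2-\phi_1^2\big]}.
\]
This can be rederived in one line from the Yule--Walker equations $\gamma_1=\phi_1\gamma_0/(1-\phi_2)$ and $\gamma_0=\phi_1\gamma_1+\phi_2\gamma_1'+\sigma^2$. Here $1-\phi_2=1+q$ and $1+\phi_2=1-q$. The key factorization is
\[
(1+q)^2-\phi_1^2=\eta_0(1-\beta)h_k\,\big[2(1+q)-\eta_0(1-\beta)h_k\big]=2(1+q)\,\eta_0(1-\beta)h_k\,(1-R^q_k),
\]
with $R^q_k$ as in \eqref{eq:modes-q}. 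Substituting $\sigma^2=\eta_0^2(1-\beta)^2s_k$, the factors $(1+q)$ cancel, and so do one power of $\eta_0(1-\beta)$. What remains is $\mathrm{Var}\,x_k=\eta_0(1-\beta)s_k/[2(1-q)h_k(1-R^q_k)]$, which is exactly $T^q_k/[h_k(1-R^q_k)]$.

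The main point to check is the existence of the stationary law, rather than the algebra. The roots of $z^2-\phi_1z+q$ must lie inside the unit disk. For $0\le q<1$ this holds iff $|\phi_1|<1+q$. Given $h_k>0$, this is equivalent to $0<R^q_k<1$, which is the implicit hypothesis of the lemma, in analogy with $R_k<1$ in \cref{prop:modes}. I would also confirm the limit check $\gamma\to0$. There $q\to\beta$ and $T^q_k\to T^{(0)}_k$. Moreover $R^q_k\to\eta_0h_k(1-\beta)/(2(1+\beta))=R_k$, which recovers \cref{prop:modes}. The noise is only required to be uncorrelated with zero mean, so Gaussianity is not needed.
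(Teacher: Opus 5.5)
Your proof is correct, and it takes a different route from the paper's. The paper writes \eqref{eq:geo-ema} as a two-dimensional linear system $z_{t+1}=Fz_t+b\,\zeta_t$ in $(x,m)$, the analogue of \cref{app:lyapunov} with the decay and the curvature shift included. It then solves the discrete Lyapunov equation $C=FCF^\top+s\,bb^\top$ symbolically and cross-checks the result against a numerical Lyapunov solver. You instead eliminate the memory through $\eta_0m_t=(1-\gamma)x_{t-1}-x_t$. This gives a scalar AR(2) recursion with $\phi_1=\Tr F=1+q-\eta_0(1-\beta)h_k$ and $\phi_2=-\det F=-q$, and its Yule--Walker variance yields the result.

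Your route makes the structure of \eqref{eq:modes-q} transparent. The factorization $(1+q)^2-\phi_1^2=2(1+q)\,\eta_0(1-\beta)h_k\,(1-R^q_k)$ exhibits the lowered threshold $2(1+q)/(1-\beta)$. The temperature factor comes from $1+\phi_2=1-q$. The AR(2) stability conditions also give the range $0<R^q_k<1$, which the lemma leaves implicit. The Lyapunov route, in turn, produces the full covariance, including $C_{xm}$ and $C_{mm}$, which the paper needs for relations such as \eqref{eq:ema-identities}. From your recursion these would have to be recovered from the lag-one autocovariance.

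A cosmetic point: rename your autocovariances (say $c_0,c_1,c_2$), since $\gamma$ already denotes the decay per step. Also close the Yule--Walker system with $c_2=\phi_1c_1+\phi_2c_0$ instead of the undefined $\gamma_1'$.
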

The proof is the discrete Lyapunov solution (\cref{app:lyapunov}; \ncheck{lyap}). Weight decay thus lowers the temperature factor $T^q_k$ of every mode by $(1-\beta)/(1-q)=1/(1+\beta\gamma/(1-\beta))$. It also lowers the threshold from $2(1+\beta)/(1-\beta)$ to $2(1+q)/(1-\beta)$, so that $R^q_k=R_k(1+\beta)/(1+q)>R_k$. The variance itself changes by the factor $[(1-\beta)/(1-q)]\,(1-R_k)/(1-R^q_k)$. It decreases far from the threshold but increases for $R_k>(1+q)/2$, as long as the mode remains stable, $R_k<(1+q)/(1+\beta)$. For $\beta=0.9$, $\gamma=10^{-3}$ and $R_k=0.5$ the factor is $0.992$. The same $q$ appears in \eqref{eq:radial-ema}. As $\gamma\to0$, \eqref{eq:modes-q} reduces to \eqref{eq:modes}. The factor is the part of the decay effect that is enhanced by the memory time $1/(1-\beta)$.

The surrogate does not determine the remaining $O(\gamma)$ terms. They come from the radial parts of the step and of the memory, which the tangent description drops. They enter the exact relation \eqref{eq:F2} through $U$ and through geometric contributions to the remainders. In the memory-free limit the radial part separates exactly. For SGD on a fully scale-invariant network and $0\le\gamma<1$, $w_{t+1}=(1-\gamma)(w_t+\Delta_{\tan})$ with the tangent displacement $\Delta_{\tan}=-\eta g_B/(1-\gamma)$. Hence $L(w_{t+1})=L(w_t+\Delta_{\tan})$. Let $\Rem_{\tan}$ be the remainder of the quadratic model along $\Delta_{\tan}$, with normalization $\rho_{\tan}=\E\Rem_{\tan}/Q_{\tan}$, $Q_{\tan}=\frac12\E[\Delta_{\tan}^\top H\Delta_{\tan}]$. The stationary loss balance then gives exactly
\begin{equation}\label{eq:c-sgd}
c=\frac{1+\rho_{\tan}}{1-\gamma}\qquad(\beta=0),
\end{equation}
which complements $c=(1+\rho_0)/[1+\gamma(1+\rho_0)]$. In the latter, the full-step remainder $\rho_0$ contains the radial contribution $2\gamma+O(\gamma^2)$ (\cref{app:lyapunov}). The radial factor $1/(1-\gamma)$ raises $c$. The tangent remainder is of the same order at stationarity, because the norm balance makes the squared angular step $\approx2\gamma$. Its sign is set by the anharmonicity of the loss along the step. The total change of $c$ is therefore not universally positive. With memory, the radial memory component is correlated with the step length. For this case we have no closed form. In exact sphere simulations and in the teacher--student network (\cref{sec:sphere,sec:ts}), $\rho_{\tan}$ vanishes to this order for a quadratic angular loss and is negative for the network. With memory, the terms beyond the per-mode factor are of order $\gamma$ in these simulations. Their coefficient depends on the noise statistics through the correlation between the radial memory and the step length. Its magnitude decreases with the number of tangent dimensions. At long memory it is several times smaller than the effect of the factor itself. The per-mode factor therefore gives the leading behaviour for long memory. The exact relation \eqref{eq:F2} is the consistent route to the $O(\gamma)$ terms. The reason is that the per-mode FDR of the surrogate uses $g=hx$ and misses the radial parts of the gradient and memory. This includes the radial part of the memory in $\E[g^\top Hm]$ (\cref{app:lyapunov}).

The approximation is not an exact statement about the nonlinear sphere dynamics. Its first omitted correlation is visible already for a centred Gaussian local state with covariance $C$. For this state, $\E[(x^\top Hx)xx^\top]=\Tr(HC)\,C+2CHC$. The second term is what replacing $\omega(x)$ by its mean drops. Radial fluctuations enter at the same order. In sphere models, \eqref{eq:modes-q} nevertheless gives every mode variance to within a few per cent where weight decay lowers the temperatures by a third (\cref{sec:sphere}). In the networks of \cref{sec:networks} the factor matters only at the largest decay rates.

\subsection{Coupled groups: a covariance matrix, not independent gases}\label{sec:coupled}
Independent scale invariances identify norms and directions on a product of spheres. However, they do not make the groups statistically independent. Collect the tangent coordinates of all groups in $x$ and freeze the radii over a measurement window. Let $\Lambda=\mathrm{diag}(\eta/r_1^2\,I,\dots,\eta/r_{\Ng}^2\,I)$ be the block-diagonal angular step. The angular Hessian $\sH$ on the product of spheres keeps its cross-group blocks, and so does the angular noise covariance $\sS$.

\begin{proposition}[Coupled-group covariance and reversibility]\label{prop:coupled}
Consider the SGD surrogate $x_+=(I-\Lambda\sH)x-\Lambda\zeta$, $\mathrm{Cov}\,\zeta=\sS$. In this surrogate the contraction and the curvature shift of every group cancel as in \eqref{eq:geo-sgd}. (i)~The stationary covariance satisfies
\begin{equation}\label{eq:M1}
C=(I-\Lambda\sH)\,C\,(I-\sH\Lambda)+\Lambda\sS\Lambda.
\end{equation}
With $y=\Lambda^{-1/2}x$, $\tilde\sH=\Lambda^{1/2}\sH\Lambda^{1/2}$ and $\tilde\sS=\Lambda^{1/2}\sS\Lambda^{1/2}$, in an eigenbasis of $\tilde\sH$ with stable eigenvalues $0<\nu_i<2$,
\begin{equation}\label{eq:M2}
(C_y)_{ij}=\frac{\tilde\sS_{ij}}{\nu_i+\nu_j-\nu_i\nu_j}.
\end{equation}
(ii)~For Gaussian noise and nondegenerate $C$, the consecutive-state law of $y$ is reversible if and only if $[\tilde\sH,\tilde\sS]=0$. The lag-antisymmetric covariance $J_1=\E[y_+y^\top-yy_+^\top]=(I-\tilde\sH)C_y-C_y(I-\tilde\sH)$ measures the stationary probability current. (iii)~In particular, two coupled scalar groups with steps $\alpha_1\ne\alpha_2$, diagonal noise $s_1,s_2$ and nonzero Hessian coupling are reversible only if $\alpha_1s_1=\alpha_2s_2$, i.e.\ if their local noise temperatures agree. Hessian-aligned noise, $\sS\propto\sH$, gives $\tilde\sS\propto\tilde\sH$ and a reversible surrogate for \emph{any} radii.
\end{proposition}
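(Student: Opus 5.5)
The plan is to work in the whitened coordinates throughout. For (i), I take the covariance of $x_+=(I-\Lambda\sH)x-\Lambda\zeta$, with $\zeta$ independent of $x$ and centred. Stationarity, $\mathrm{Cov}\,x_+=\mathrm{Cov}\,x$, gives \eqref{eq:M1} directly, since $\sH$ is symmetric and $\Lambda$ is diagonal. Substituting $x=\Lambda^{1/2}y$ turns the update into $y_+=(I-\tilde\sH)y-\Lambda^{1/2}\zeta$. The equation becomes $C_y=(I-\tilde\sH)C_y(I-\tilde\sH)+\tilde\sS$ with $\tilde\sH$ symmetric. In an orthonormal eigenbasis of $\tilde\sH$ this is entrywise $(C_y)_{ij}[1-(1-\nu_i)(1-\nu_j)]=\tilde\sS_{ij}$. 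Since $1-(1-\nu_i)(1-\nu_j)=\nu_i+\nu_j-\nu_i\nu_j$, this is \eqref{eq:M2}. The denominator is positive because $|1-\nu_i|<1$, which gives existence and uniqueness of the stationary solution. (The cancellation of contraction and curvature shift is taken from \cref{approx:geo} group by group. It is part of the surrogate, not something to prove.)

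For (ii), the stationary pair law of $(y,y_+)$ is a centred Gaussian, with marginal covariance $C_y$ for both components and cross-covariance $\E[y_+y^\top]=AC_y$, where $A=I-\tilde\sH$. The reversed pair law has cross-covariance $C_yA^\top=C_yA$. A centred Gaussian pair law equals its reversal if and only if the lag covariance is symmetric, that is $AC_y=C_yA$, equivalently $J_1=0$. Thus reversibility is equivalent to $[\tilde\sH,C_y]=0$.

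It remains to show $[\tilde\sH,C_y]=0\iff[\tilde\sH,\tilde\sS]=0$. If $C_y$ commutes with $A$, then $\tilde\sS=C_y-AC_yA=C_y(I-A^2)$ commutes with $\tilde\sH$. Conversely, if $\tilde\sS_{ij}=0$ whenever $\nu_i\ne\nu_j$, then \eqref{eq:M2} gives $(C_y)_{ij}=0$ for those pairs, so $C_y$ commutes with $\tilde\sH$. Degenerate eigenvalues need a short remark: within an eigenspace both conditions hold automatically. Nondegeneracy of $C$ enters only to make the Gaussian laws genuine densities, so that equality of laws is equivalent to equality of covariances. Interpreting $J_1$ as the stationary current is then just the statement that it is the antisymmetric part of the lag covariance.

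For (iii), I take $\Lambda=\mathrm{diag}(\alpha_1,\alpha_2)$, $\sH=\begin{pmatrix}a&b\\ b&c\end{pmatrix}$ with $b\ne0$, and $\sS=\mathrm{diag}(s_1,s_2)$. Then $\tilde\sS=\mathrm{diag}(\alpha_1s_1,\alpha_2s_2)$ and $\tilde\sH$ has off-diagonal entry $\sqrt{\alpha_1\alpha_2}\,b\ne0$. A diagonal matrix commutes with a symmetric $2\times2$ matrix with nonzero off-diagonal entry only if its diagonal entries are equal, which gives $\alpha_1s_1=\alpha_2s_2$. For $\sS=\mu\sH$ we get $\tilde\sS=\mu\tilde\sH$, which commutes trivially with $\tilde\sH$ for any $\Lambda$, that is for any radii. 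The main obstacle is the converse direction of (ii) with degenerate spectra; I would handle it by working eigenspace by eigenspace as above.
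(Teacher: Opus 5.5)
Your proposal is correct and follows the paper's proof essentially step by step: the Lyapunov equation in the whitened coordinates solved entrywise in an eigenbasis of $\tilde\sH$, reversibility as symmetry of the lag covariance $AC_y$, the identity $\tilde\sS=C_y(I-A^2)=(2\tilde\sH-\tilde\sH^2)C_y$ for one direction, the vanishing of $\tilde\sS_{ij}$ between distinct eigenvalues for the converse, and the $2\times2$ commutator for (iii). One small aside: two centred Gaussian laws are equal if and only if their covariances are equal, whether or not the laws are degenerate, so nondegeneracy of $C$ is not what makes your equivalence step work, and your argument does not actually use it.
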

\begin{proof}
(i)~is the discrete Lyapunov equation of the surrogate. Transforming to $y$ gives $C_y=(I-\tilde\sH)C_y(I-\tilde\sH)+\tilde\sS$. Its solution in the eigenbasis of the symmetric $\tilde\sH$ is \eqref{eq:M2}. (ii)~A stationary Gaussian autoregression $y_+=My+\zeta'$ is reversible iff $\E[y_+y^\top]=MC_y$ is symmetric. With $M=I-\tilde\sH$ symmetric, this is $[\tilde\sH,C_y]=0$. Then $\tilde\sS=(2\tilde\sH-\tilde\sH^2)C_y$ commutes with $\tilde\sH$. Conversely, $[\tilde\sH,\tilde\sS]=0$ makes \eqref{eq:M2} diagonal in a common eigenbasis. (iii)~follows by writing out $[\tilde\sH,\tilde\sS]$.
\end{proof}
The stationary discrete covariance of a single group is known \citep{liu2021noise}. The content here is that unequal radii and correlated groups do not admit independent scalar temperatures or a common angular learning rate. It is also that exact norm balances do not imply an equilibrium mixture. Coupled groups can carry stationary probability currents. Reversibility alone does not imply a common mode temperature: in the coordinates $y$, a common temperature, $C_y=T\tilde\sH^{-1}$, requires $\tilde\sS=T(2I-\tilde\sH)$, which is stronger than $[\tilde\sH,\tilde\sS]=0$. At the teacher of a realizable teacher--student model the noise is Hessian-aligned. There the surrogate is reversible even with unequal radii. Away from the teacher, or for an unrealizable teacher, the alignment is only approximate (\cref{sec:ts}). With memory the same construction applies in phase space $(x,m)$ with the shifts of \cref{approx:geo} taken per group. There we treat $x$ and $m$ as even variables and report the lag-antisymmetric covariance in these coordinates. A reversal with odd memory would be a different test, and neither is the entropy production of the optimizer.

\subsection{Self-consistent radius and the spectral equation of state}\label{sec:eos}
\paragraph{Self-consistent radius.} In this subsection we consider one group and write $d$ for its dimension. For isotropic noise with angular variance $\sigma_{\mathrm a}^2$ per tangent direction and SGD, \eqref{eq:radial-sgd} with \eqref{eq:modes} gives the self-consistency condition
\begin{equation}\label{eq:thermostat}
\frac{2\gamma}{\sigma_{\mathrm a}^2\eta_0^2}=\sum_{k=1}^{d-1}\frac{1}{1-R_k},\qquad\text{i.e.}\qquad r^4=r^4_0\Big\langle\frac{1}{1-R}\Big\rangle,
\end{equation}
to leading order in $\gamma$. The radius $r$ here is a representative radius. The exact balance fixes $\E r^2$, and evaluating averages that contain $r^{-2}$ at this radius is part of the approximation. With gradient memory the same condition follows from \eqref{eq:radial-ema} at leading order in $\gamma$, because the factors $\kappa$ and $1/\kappa$ cancel; this requires $\gamma/(1-\beta)$ to be small. Here $\langle\cdot\rangle$ averages over the $d-1$ tangent modes, and $r_0^4=\eta^2\sigma_{\mathrm a}^2(d-1)/(2\gamma)$ is the ideal-gas radius at fixed angular noise. Since $R_k\propto\eta_0\propto r^{-2}$, the radius adjusts until \eqref{eq:thermostat} holds. The ratio $r^4/r_0^4$ is the mean normalized mode temperature $\langle(1-R)^{-1}\rangle$, so that $r^2/r_0^2=\langle(1-R)^{-1}\rangle^{1/2}$. It is dominated by the modes near the threshold. The right-hand side diverges as $\Rtop\to1$. Within this self-consistent harmonic model, at leading order in $\gamma$, the stationary $\Rtop$ therefore stays below 1 for every $\gamma$. With anisotropic noise we use \eqref{eq:radial-sgd} with $\E\|g\|^2$ from \eqref{eq:fdr} and the measured noise traces. Only $\E\|g\|^2$ is then predicted.

\paragraph{Spectral virial law.} Define $V=r^2/2$, the pressure $p=\lambda$, $R_{\mathrm{gas}}=(d-1)/2$, the angular temperature $T_{\mathrm{ang}}=\eta\sigma_{\mathrm a}^2/(2r^2)$ and the \emph{excluded volumes} $b_k=\eta h_k/(2\aeos)$. Then $R_k=b_k/V$, and \eqref{eq:thermostat} reads
\begin{equation}\label{eq:virial}
pV=R_{\mathrm{gas}}T_{\mathrm{ang}}\,\Phi(V),\qquad \Phi(V)=\Big\langle\frac{1}{1-b/V}\Big\rangle .
\end{equation}
For equal curvatures this is $p(V-b)=R_{\mathrm{gas}}T_{\mathrm{ang}}$, the excluded-volume part of the van der Waals law. With the nominal temperature $T_{\mathrm{nom}}=\sqrt{\eta\lambda\sigma_{\mathrm a}^2/(2(d-1))}$ of \citep{sadrtdinov2025idealgas},
\[
Z=\frac{pV}{R_{\mathrm{gas}}T_{\mathrm{nom}}}=\Phi^{1/2}=1+\frac{B_2}{V}+\frac{B_3}{V^2}+\dots,\qquad B_2=\frac{\langle b\rangle}2,\qquad B_3=\frac{\langle b^2\rangle}2-\frac{\langle b\rangle^2}8\ \ge\ \frac32B_2^2,
\]
with equality for equal curvatures. The virial coefficients are thus moments of the curvature spectrum rather than fitted constants. There is no attractive term. At fixed $\eta$, $p(V)=\frac{(d-1)\eta\sigma_{\mathrm a}^2}{8}\langle[V(V-b)]^{-1}\rangle$ decreases monotonically on the stable branch $V>b_{\max}$. Hence there is no loop and no coexistence. Since $b_k\propto\eta$, these coefficients are not material constants. Under the rescaling of \cref{prop:symmetry}, $b_k\to\alpha b_k$, and the coefficients scale with the corresponding powers, $B_n\to\alpha^{n-1}B_n$. The law follows from the harmonic approximation at leading order in $\gamma$. It is not a finite-decay identity. The two radius baselines of \eqref{eq:baselines} give different ratios in this model: $r^2/r_0^2=\Phi^{1/2}$, whereas $r^2/r^2_{\mathrm{SDE}}=\Phi$. The reason is that the measured noise trace itself scales as $r^{-2}$. In networks with anisotropic noise the moments are noise-weighted. To first order, $r^2/r^2_{\mathrm{SDE}}-1=\frac\eta2\Tr\Sigma H/\Tr\Sigma=\langle b\rangle_\Sigma/V$.

\paragraph{Spectral entropy and stored-gradient information.} By \cref{cor:maxwell}, for corresponding stationary laws the shell entropy of SGD and gradient memory has the form $\mathcal S=\frac D2\log\eta+F(\log\gamma)$, with $D=\sum_a(d_a-1)$ and a function $F$ that the symmetry leaves undetermined. An estimator bias that depends on $\gamma$ alone also cancels in \eqref{eq:maxwell}. Agreement with \eqref{eq:maxwell} therefore does not test the $\gamma$-dependence of an entropy estimate. Starting from the covariance \eqref{eq:modes} \citep{liu2021noise}, the harmonic theory identifies a spectral contribution to this dependence.

\begin{proposition}[Spectral entropy and stored-gradient information]\label{prop:spec-entropy}
Consider the zero-decay harmonic surrogate of \cref{prop:modes} with i.i.d.\ centred Gaussian additive noise, for $D$ independent modes (angular curvature and noise covariance diagonal in a common basis) with $h_k,s_k>0$ and $0<R_k<1$. Let $C$ be the stationary covariance of the modes and $C^{(0)}$ the continuous-time covariance, $C^{(0)}_k=\eta_0s_k/(2h_k)$. Then:
(i)~the finite-step entropy excess is
\begin{equation}\label{eq:spec-entropy}
\Delta S_{\mathrm{spec}}=\frac12\log\frac{\det C}{\det C^{(0)}}=-\frac12\sum_k\log(1-R_k)=\frac12\sum_{n\ge1}\frac1n\Big(\frac{\eta_0}{\aeos}\Big)^n\sum_kh_k^n ;
\end{equation}
(ii)~for one group with isotropic noise ($D=d-1$), the function $\Phi$ of \eqref{eq:virial} obeys, at fixed $\eta$, angular spectrum and $\beta$,
\begin{equation}\label{eq:spec-eos}
\Phi=1+\frac2D\,\frac{\partial\Delta S_{\mathrm{spec}}}{\partial\log\eta_0}=1-\frac{2V}{D}\,\frac{\partial\Delta S_{\mathrm{spec}}}{\partial V};
\end{equation}
(iii)~with the memory $m_t$ before the update, at equal times every mode has $\mathrm{Corr}(x_k,m_k)^2=R_k$ and $\mathrm{Var}(x_k\,|\,m_k)=C^{(0)}_k$. Hence
\begin{equation}\label{eq:spec-info}
I(x_t;m_t)=\Delta S_{\mathrm{spec}} .
\end{equation}
\end{proposition}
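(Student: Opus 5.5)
Part~(i) follows directly from \cref{prop:modes}. With i.i.d.\ centred Gaussian additive noise, each stable mode of the zero-decay surrogate is a linear Gaussian autoregression, so its stationary law is Gaussian. The modes are independent because curvature and noise are diagonal in a common basis, so $C$ is diagonal with entries $C_k=T_k/h_k$. Dividing, \eqref{eq:modes} gives $C_k/C^{(0)}_k=(1-R_k)^{-1}$. The determinant ratio is therefore a product, and its logarithm is $-\frac12\sum_k\log(1-R_k)$. Since $R_k=\eta_0h_k/\aeos<1$, the series $-\log(1-R)=\sum_n R^n/n$ converges and yields the power-sum form in \eqref{eq:spec-entropy}. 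This form holds both for SGD ($\aeos=2$) and for memory, since \cref{prop:modes} covers both.

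For part~(ii), use that $R_k\propto\eta_0$ at fixed spectrum and $\beta$. Then $\partial R_k/\partial\log\eta_0=R_k$, so $\partial\Delta S_{\mathrm{spec}}/\partial\log\eta_0=\frac12\sum_kR_k/(1-R_k)$. With $D=d-1$ modes this gives $\frac2D\,\partial\Delta S_{\mathrm{spec}}/\partial\log\eta_0=\langle R/(1-R)\rangle=\langle(1-R)^{-1}\rangle-1=\Phi-1$, by the definition of $\Phi$ in \eqref{eq:virial} with $R_k=b_k/V$. For the second form, note that at fixed $\eta$ we have $\eta_0=\eta/r^2=\eta/(2V)$. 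Hence $\log\eta_0=\text{const}-\log V$ and $\partial/\partial\log\eta_0=-V\,\partial/\partial V$.

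Part~(iii) is the substantive step. I would treat one memory mode with the state $(x_t,m_t)$, where $m_t$ is the memory before the update. The dynamics are $m_{t+1}=\beta m_t+(1-\beta)(h x_t+\zeta_t)$ and $x_{t+1}=x_t-\eta_0 m_{t+1}$. This is a linear Gaussian recursion, so the stationary joint law of $(x_t,m_t)$ is a centred bivariate Gaussian. Its covariance is the solution of the $2\times2$ discrete Lyapunov equation already used in \cref{app:lyapunov} for \cref{prop:fdr} and \cref{lem:geo} at $\gamma=0$. I would write the three scalar equations for $\mathrm{Var}\,x$, $\mathrm{Var}\,m$ and $\mathrm{Cov}(x,m)$. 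I would use $\mathrm{Var}\,x=C^{(0)}/(1-R)$ from \cref{prop:modes} as a consistency check, and solve for $\mathrm{Cov}(x,m)^2/(\mathrm{Var}\,x\,\mathrm{Var}\,m)$ with the target value $R=\eta_0h(1-\beta)/(2(1+\beta))$.

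This algebra is where I expect the main obstacle: keeping track of the timing of $m_t$ versus $m_{t+1}$. It may be cleaner to verify the equivalent statement $\mathrm{Cov}(x,m)^2/\mathrm{Var}\,m=\mathrm{Var}\,x-C^{(0)}=R\,\mathrm{Var}\,x$. Once $\mathrm{Corr}^2=R$ is established, Gaussian conditioning gives $\mathrm{Var}(x\,|\,m)=\mathrm{Var}\,x\,(1-R)=C^{(0)}$. The mutual information of a bivariate Gaussian is $-\frac12\log(1-\mathrm{Corr}^2)=-\frac12\log(1-R_k)$. Independence of the modes in the joint $(x,m)$ phase space, which again follows from the common diagonal basis, makes $I(x_t;m_t)$ additive over modes. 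Comparing with part~(i) then gives \eqref{eq:spec-info}.
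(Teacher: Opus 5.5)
Your proposal is correct and matches the paper's proof in all three parts. In (iii) the paper simply reads off $\mathrm{Cov}(x_k,m_k)=-\kappa h_k\mathrm{Var}\,x_k$ and $\mathrm{Var}\,m_k=\kappa s_k+\kappa^2h_k^2\mathrm{Var}\,x_k=\kappa s_k/(1-R_k)$ from \eqref{eq:ema-identities}, which come from the same $2\times2$ Lyapunov solution you propose; the covariance in \cref{app:lyapunov} is already that of the pre-update state $(x_t,m_t)$, so the timing bookkeeping you flagged is done there and $\mathrm{Corr}^2=\kappa\eta_0h_k/2=R_k$ follows in one line.
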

\begin{proof}
(i) follows from \eqref{eq:modes}; the noise amplitudes cancel in the ratio. (ii) $R_k\propto\eta_0\propto V^{-1}$ at fixed $\eta$, and $\partial_{\log\eta_0}[-\frac12\log(1-R_k)]=\frac12[(1-R_k)^{-1}-1]$. (iii) In one mode, $g_k=h_kx_k$, and \eqref{eq:ema-identities} gives $\mathrm{Cov}(x_k,m_k)=-\kappa h_k\mathrm{Var}\,x_k$ and $\mathrm{Var}\,m_k=\kappa s_k+\kappa^2h_k^2\mathrm{Var}\,x_k=\kappa s_k/(1-R_k)$, using $R_k=\kappa\eta_0h_k/2$. Therefore $\mathrm{Corr}^2=\kappa\eta_0h_k/2=R_k$, and the Gaussian conditional variance is $(1-R_k)\mathrm{Var}\,x_k=C^{(0)}_k$. The mutual information of a Gaussian pair is $-\frac12\log(1-\mathrm{Corr}^2)$, and independent modes add.
\end{proof}
The finite-step Gaussian entropy excess is therefore equal to the mutual information between the current position and the pre-update stored gradient: conditioning on $m_t$ reduces every mode to its continuous-time reference variance. For $\beta=0$, $m_t$ is the preceding stochastic gradient, so the identity does not single out persistent momentum. It does not extend unchanged to non-commuting curvature and noise, to non-Gaussian noise with the same covariance (which preserves these second moments but need not preserve the entropies) or to finite decay, where the surrogate \eqref{eq:modes-q} gives an entropy excess different from $I(x;m)$. The position process of independent Gaussian modes is reversible. Neither $\Delta S_{\mathrm{spec}}$ nor $I(x;m)$ therefore measures irreversibility; the pair observable of \cref{sec:antisym} provides a separate diagnostic of time asymmetry.

\Cref{eq:spec-eos} is a spectral derivative identity at fixed angular spectrum, $\eta$ and $\beta$. It does not establish thermodynamic conjugacy between $p$ and $V$: the equation of state \eqref{eq:virial} is derived from the stationary norm balance, without a Gibbs assumption. For one group with isotropic angular noise $\sigma_{\mathrm a}^2$, the local Gaussian approximation gives $\mathcal S\simeq\frac D2\log(\pi e\eta\sigma_{\mathrm a}^2)-\frac12\sum_k\log h_k+\Delta S_{\mathrm{spec}}$. Within the harmonic approximation, therefore, $F=\frac D2\log(\pi e\sigma_{\mathrm a}^2)-\frac12\sum_k\log h_k+\Delta S_{\mathrm{spec}}$. The noise strength and the angular spectrum may themselves depend on $\gamma$. With both held fixed, the self-consistency condition \eqref{eq:thermostat} determines $\eta_0$ as a function of $\gamma$ and hence predicts the spectral contribution to $F$. The spectral term is the part of $F$ produced by the finite step.

\paragraph{AdamW.} For isotropic noise and a random basis, Adam's second moment is nearly uniform, $\hat v\approx\bar s/r^2$ with $\bar s=\frac{\sigma_{\mathrm a}^2}{d}\sum_k[1+\frac\aeos2\frac{R_k}{1-R_k}]$. In this regime AdamW acts as gradient memory with angular step $\eta/(r\sqrt{\bar s})$. Then
\begin{equation}\label{eq:adam-radius}
r^2=r^2_{\mathrm{A}}\,\frac{\langle(1-R)^{-1}\rangle}{\langle1+\frac\aeos2\frac{R}{1-R}\rangle},\qquad r^2_{\mathrm{A}}=\frac{\eta d}{2\lambda}.
\end{equation}
The correction has the \emph{opposite sign} to SGD, with coefficient $\aeos/2-1=2\beta/(1-\beta)=18$. The gradient signal inflates the second moment, which shrinks the effective step and the radius.

\paragraph{Concentration at the threshold.} Whether the excess temperature then concentrates in one mode depends on the spectrum near its upper edge. Suppose that the top curvature $h_*$ is isolated from a bulk with limiting density $\varrho$. Suppose further that the angular noise scales as $\sigma_{\mathrm a}^2=\sigma_0^2/D$ with the number $D$ of tangent modes. Then the harmonic self-consistency \eqref{eq:thermostat} has a transition at $\gamma_c=\sigma_0^2\aeos^2I_c/(2h_*^2)$, with $I_c=\int\varrho(h)(1-h/h_*)^{-1}dh$. The transition is analogous to Bose--Einstein condensation. For $\gamma>\gamma_c$ the top mode carries the share $1-\gamma_c/\gamma$ of $\sum_k(1-R_k)^{-1}$ (\cref{prop:condensation} in \cref{app:proofs}). A remark there treats a bulk extending to $h_*$. The transition concerns the harmonic model only. The macroscopic top-mode fluctuation it predicts can invalidate the quadratic approximation. The predicted transition is therefore not by itself evidence for condensation in a network. In a sphere model with an isolated top mode the simulated top-mode temperature follows this prediction across the transition (\cref{sec:sphere}).
\section{Beyond the harmonic approximation}\label{sec:remainder}

\subsection{Smooth nonlinearity and activation boundaries}\label{sec:kinks}
The remainder profile $\rho(f)$ distinguishes two mechanisms. For a smooth loss the remainder is of third order in the step, $\Rem(f)\approx f^3\,\nabla^3L[\Delta\theta,\Delta\theta,\Delta\theta]/6$. Therefore $\rho(f)\propto f$ unless the cubic coefficient vanishes by symmetry. A ReLU network is piecewise smooth. Units whose pre-activation changes sign along the step contribute curvature that the regular Hessian does not contain.

\begin{lemma}[Expected Taylor remainder of an activation]\label{lem:kink}
Let $z$ be a pre-activation and $\delta$ its change along the step. Let the conditional density $p=p(\cdot\,|\,\delta)$ of $z$ given $\delta$ be twice continuously differentiable, with $p$, $p'$ and $p''$ bounded and integrable. Let $\phi$ grow at most linearly, with $\phi''$ a finite measure (for ReLU the Dirac mass at $0$), so that the integrals below exist and the boundary terms of the integrations by parts vanish. Let $F(s)=\int p(z)\,\phi(z+s)\,dz$. Then $F''(0)=\int p\,\phi''$ in the sense of distributions, and
\begin{equation}\label{eq:kink}
\E\big[\phi(z+\delta)-\phi(z)-\delta\phi'(z)\,\big|\,\delta\big]=F(\delta)-F(0)-\delta F'(0)=\frac{\delta^2}{2}\int p\,\phi''\,dz-\frac{\delta^3}{6}\int p'\,\phi''\,dz+O(\delta^4).
\end{equation}
For ReLU, $\int p\phi''=p(0)$, while the regular second derivative vanishes almost everywhere. The quadratic model therefore misses $\frac{\delta^2}2p(0)$, a remainder $\propto f^2$. For a smooth activation $\phi_\tau$ of any width $\tau>0$ (e.g.\ the shifted Softplus $\phi_\tau(z)=\tau\log(1+e^{z/\tau})-\tau\log2$), the regular second-order term has conditional expectation exactly $\frac{\delta^2}2\int p\phi_\tau''$. The averaged quadratic term is then correct through second order. Since $\phi_\tau''\ge0$ has unit integral, the remainder after it obeys $|F(\delta)-F(0)-\delta F'(0)-\frac{\delta^2}2F''(0)|\le|\delta|^3\|p'\|_\infty/6$ uniformly in $\tau$.
\end{lemma}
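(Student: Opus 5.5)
The approach is to move all $s$-dependence of $F$ onto the density by a change of variables, so that $F$ inherits the smoothness of $p$ even though $\phi$ is only piecewise smooth. Substituting $u=z+s$ gives
\[
F(s)=\int p(u-s)\,\phi(u)\,du,\qquad F'(s)=-\int p'(u-s)\phi(u)\,du,\qquad F''(s)=\int p''(u-s)\phi(u)\,du.
\]
Differentiation under the integral is justified by the boundedness and integrability of $p,p',p''$ together with the linear growth of $\phi$; I take the needed moment conditions to be part of the hypothesis that the integrals exist. Two integrations by parts, whose boundary terms vanish by assumption, then give $F''(s)=\int p(u-s)\,\phi''(du)$, with $\phi''$ understood as a measure. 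At $s=0$ this is the claim $F''(0)=\int p\,\phi''$, and by the same argument $F'''(0)=-\int p'\phi''$. In particular $F'(0)=\int p\,\phi'$, so the expected linear term of the activation is exactly $\delta F'(0)$. This identifies the conditional mean of the left-hand side of \eqref{eq:kink} with $F(\delta)-F(0)-\delta F'(0)$; conditioning on $\delta$ is what makes $\delta$ a constant here.

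The expansion \eqref{eq:kink} is then Taylor's theorem for $F$ to third order with an $O(\delta^4)$ remainder. The remainder requires $F''''=\int p'''(\cdot-s)\phi$ to be bounded. This is the main obstacle, because the stated hypotheses control only $p''$. I would either read the $O(\delta^4)$ as implicitly assuming one more bounded derivative of $p$, or weaken the conclusion to $o(\delta^3)$ using continuity of $F'''(s)=-\int p'(u-s)\phi''(du)$. Continuity of $F'''$ holds because $p'$ is continuous and bounded and $\phi''$ is a finite measure, so dominated convergence applies.

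The ReLU statements are immediate. With $\phi''=\delta_0$ we get $\int p\,\phi''=p(0)$. The regular second derivative is zero away from $0$, so the quadratic model contributes nothing, and the missed term is $\frac{\delta^2}{2}p(0)$. Along the step $\delta\propto f$, so this term is $\propto f^2$.

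For a smooth $\phi_\tau$, the regular second-order term is $\frac{\delta^2}{2}\phi_\tau''(z)$, whose conditional expectation is $\frac{\delta^2}{2}\int p\,\phi_\tau''=\frac{\delta^2}{2}F''(0)$. For the uniform bound I use the integral form of the Taylor remainder,
\[
\int_0^\delta\frac{(\delta-s)^2}{2}F'''(s)\,ds,\qquad |F'''(s)|=\Big|\int p'(u-s)\,\phi_\tau''(u)\,du\Big|\le\|p'\|_\infty\int\phi_\tau''=\|p'\|_\infty .
\]
The last equality uses $\phi_\tau''\ge0$ with unit total integral, which holds for the shifted Softplus since $\phi_\tau'$ increases from $0$ to $1$. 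Integrating gives the bound $|\delta|^3\|p'\|_\infty/6$, independent of $\tau$.
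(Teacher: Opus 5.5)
Your proposal follows the paper's route: shift the $s$-dependence onto the density, integrate by parts twice so that $F^{(n)}(s)=(-1)^n\int p^{(n-2)}(u-s)\,\phi''(du)$, and Taylor-expand $F$. The ReLU specialisation and the $\tau$-uniform bound are correct and match the paper; that bound comes from $\int\phi_\tau''=1$ and the integral form of the remainder. The one place you stop short is the $O(\delta^4)$ term. The obstacle you name there does not exist, so you need neither an extra derivative of $p$ nor a weaker $o(\delta^3)$ conclusion. As written, the proposal leaves a claim unproved that the stated hypotheses already give.

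The missing step is one more differentiation of a formula you already have. Start from $F'''(s)=-\int p'(u-s)\,\phi''(du)$ and differentiate under the integral, as you did when showing $F'''$ is continuous. Since $p'$ is $C^1$, the mean value theorem bounds the difference quotients of $p'(u-s)$ by $\|p''\|_\infty$, and they converge pointwise. Because $|\phi''|$ is a finite measure, dominated convergence gives
\[
F''''(s)=\int p''(u-s)\,\phi''(du),\qquad |F''''(s)|\le\|p''\|_\infty\,|\phi''|(\mathbb{R}).
\]
This is the paper's formula with $n=4$. The integral form of the Taylor remainder then bounds the error of the cubic expansion by $\delta^4\|p''\|_\infty\,|\phi''|(\mathbb{R})/24$. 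That is the stated $O(\delta^4)$; the constant is $\|p''\|_\infty$ for ReLU and for the shifted Softplus, since both have $|\phi''|(\mathbb{R})=1$.

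The general point is to take $s$-derivatives only after both integrations by parts, where they land on $p^{(n-2)}$ rather than $p^{(n)}$. Your expression $F''''=\int p'''(\cdot-s)\phi$ is also off by one derivative: in the un-integrated form it would be $\int p''''(\cdot-s)\phi$.
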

\begin{proof}
$F(s)=\int p(y-s)\phi(y)\,dy$, so $F^{(n)}(0)=(-1)^n\int p^{(n)}\phi=(-1)^n\int p^{(n-2)}\phi''$ after two integrations by parts, and $F'(0)=\int p\,\phi'$.
\end{proof}
The cubic coefficient $-\frac16\int p'\phi_\tau''$ depends on $\tau$ in general. When the pre-activation density varies slowly across the smoothing region, it approaches the width-independent limit $-\frac16p'(0)$. Two consequences are testable. (a)~In the ensemble regime considered here, ReLU boundary crossings can give $\rho(f)$ a nonzero plateau at small $f$. Its value, the singular curvature along the step relative to $\Delta\theta^\top H\Delta\theta$, can have either sign or vanish. (b)~Replacing ReLU by Softplus removes this plateau at \emph{any} $\tau$, including $\tau$ much smaller than the step in pre-activation units. The $f^2$ law is an ensemble statement. For a fixed finite batch and a small enough step, no unit crosses its threshold. For a finite empirical loss the boundary term is likewise understood distributionally: at a generic point, the ordinary Hessian of a finite average of piecewise-smooth losses equals its regular Hessian. The network remainder measurements capture the ensemble-averaged effect over finite displacements, and the $f^2$ regime need not persist to arbitrarily small $f$ for a fixed finite set of states and examples. In the Gaussian-input model of \cref{sec:relu}, input averaging gives an ordinary population-Hessian difference. The lemma also shows why, for ReLU, $\rhoH$ is not the second derivative of the mean remainder. Already for $F(s)=\E\,\mathrm{ReLU}(z+s)$, $F''(s)=p(-s)$, while the regular samplewise second derivative vanishes almost everywhere. Of the relations derived from a single power law of $\E\Rem(f)$ (\cref{sec:closure}), $\rho_1=n\rho_0$ therefore holds for kink remainders too, since $\frac{d}{df}\E\Rem=\E\Rem'$. However, $\rhoH=n(n-1)\rho_0$ holds only for smooth remainders.

\subsection{Singular curvature of normalized ReLU committees}\label{sec:relu}
For Gaussian inputs the singular curvature can be computed in closed form. Let $x\sim\mathcal N(0,I_{N_{\mathrm{in}}})$. The student is $f_w(x)=\sum_jo_j\,\mathrm{ReLU}(n_j\!\cdot x)$ with $n_j=w_j/r_j$, and the teacher is $f_*(x)=\sum_lo^*_l\,\mathrm{ReLU}(v_l\!\cdot x)$ with unit $v_l$. The loss is the half-squared error, with independent zero-mean label noise. This noise does not change the following. Here $\Hsing=\Hpop-\Hreg$ is the population Hessian minus the expected regular samplewise Hessian.

\begin{proposition}[Singular curvature of a ReLU committee]\label{prop:relu}
Let $P_j=I-n_jn_j^\top$, $\varsigma_j(v)=\sqrt{1-(n_j\!\cdot v)^2}$ and $F_j(v)=\varsigma_j(v)P_j+(P_jv)(P_jv)^\top/\varsigma_j(v)$, with the continuous limit $F_j=0$ at $\varsigma_j=0$. Then $\Hsing$ is block diagonal in the raw hidden weights, with
\begin{equation}\label{eq:R2}
(\Hsing)_{jj}=\frac{o_j}{2\pi r_j^2}\Big[\sum_{i\ne j}o_iF_j(n_i)-\sum_lo^*_lF_j(v_l)\Big].
\end{equation}
Consider a single unit with $n\cdot v=\cos\theta$, $0<\theta<\pi$, and let $e=(v-\cos\theta\,n)/\sin\theta$. A tangent perturbation $t\perp n$ then has
\begin{equation}\label{eq:R1}
t^\top\Hsing t=-\frac{oo_*\sin\theta}{2\pi r^2}\big[\|t\|^2+(t\cdot e)^2\big],
\end{equation}
negative semidefinite for $oo_*>0$ and vanishing at alignment and anti-alignment.
\end{proposition}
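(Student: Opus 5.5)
The plan is to compute $\Hpop$ and $\Hreg$ from the same expression for the population gradient, and to show that the only term that differs is a surface integral over each unit's activation boundary. Write $u_j(w_j,x)=n_j\!\cdot x=w_j\!\cdot x/r_j$, with $\nabla_{w_j}u_j=P_jx/r_j$. Let $\varepsilon(x)=f_w(x)-f_*(x)$ be the residual; the label noise has zero mean and drops out of every expectation below. The population gradient is $\nabla_{w_j}L=\E[\varepsilon\,o_j\,\mathbf 1[u_j>0]\,P_jx]/r_j$, which is well defined and continuous. Differentiating it once more has three sources. The first is $\varepsilon$ through $w_i$, which gives the Gauss--Newton term $o_io_j\mathbf 1[u_i>0]\mathbf 1[u_j>0]P_ixx^\top P_j/(r_ir_j)$. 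The second is the smooth factor $P_jx/r_j$, which gives the normalization term. Both also appear in the regular Hessian, since they are exactly what one obtains with the activation pattern frozen. The third source is the indicator $\mathbf 1[u_j>0]$. It depends only on $w_j$, so it contributes to the block $(j,j)$ alone, which gives the block-diagonal structure. The $\Hreg$ part is symmetric, so the ordering of the cross derivatives is harmless.

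For the indicator term I will use the Gaussian density rather than pointwise derivatives. Write $\E[h(x)\mathbf 1[w\cdot x>0]]$ and differentiate in $w$. Since $x$ has a smooth density, this derivative is the surface integral $\E[\delta(n\cdot x)\,h(x)\,x]/r$. Equivalently, it is the distributional rule $\phi''=\delta_0$ of \cref{lem:kink}, applied to the scalar variable $z=u_j$ conditionally on the rest of $x$. Here the conditional density of $u_j$ is the Gaussian one, so the hypotheses of the lemma hold. Projecting onto the tangent directions then gives
\[
(\Hsing)_{jj}=\frac{o_j}{r_j^2}\,\E\big[\delta(n_j\!\cdot x)\,\varepsilon(x)\,P_jxx^\top P_j\big].
\]
On the boundary $n_j\!\cdot x=0$, unit $j$ itself contributes $\mathrm{ReLU}(0)=0$ to $\varepsilon$. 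What remains is $\sum_{i\ne j}o_i\mathrm{ReLU}(n_i\!\cdot x)-\sum_lo^*_l\mathrm{ReLU}(v_l\!\cdot x)$. This is where the two sums and their signs in \eqref{eq:R2} come from.

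The core computation is $G_j(v)=\E[\delta(n_j\!\cdot x)\,\mathrm{ReLU}(v\cdot x)\,P_jxx^\top P_j]$ for a unit vector $v$. The density of $n_j\!\cdot x$ at $0$ is $(2\pi)^{-1/2}$, and conditional on $n_j\!\cdot x=0$ the vector $y=P_jx$ is standard Gaussian in $n_j^\perp$. Set $a=P_jv$, so $\|a\|=\varsigma_j(v)$, and split $y$ into $s=\hat a\cdot y$ and an independent orthogonal part. This gives
\[
\E[\mathrm{ReLU}(\varsigma s)\,yy^\top]=\varsigma\big(\E[s^3\mathbf 1_{s>0}]\,\hat a\hat a^\top+\E[s\mathbf 1_{s>0}](P_j-\hat a\hat a^\top)\big)=\frac{\varsigma}{\sqrt{2\pi}}\big(P_j+\hat a\hat a^\top\big),
\]
using $\E[s^3\mathbf 1_{s>0}]=2/\sqrt{2\pi}$ and $\E[s\mathbf 1_{s>0}]=1/\sqrt{2\pi}$. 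Multiplying by $(2\pi)^{-1/2}$ gives $G_j(v)=F_j(v)/(2\pi)$, and summing over units yields \eqref{eq:R2}. When $\varsigma_j\to0$, $F_j\to0$, which matches the continuous limit in the statement. For \eqref{eq:R1} there is no $i\ne j$ term. With $\varsigma=\sin\theta$ and $P v=\sin\theta\,e$, one has $t^\top F(v)t=\sin\theta(\|t\|^2+(t\cdot e)^2)$ for $t\perp n$. Negative semidefiniteness for $oo_*>0$ and vanishing at $\theta\in\{0,\pi\}$ are then immediate.

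The main obstacle is rigour in the step that replaces the derivative of the indicator by the surface integral. This requires differentiating under the expectation, and it requires $\Hpop$ to exist, which needs $n_j$ not collinear with any $n_i$ or $v_l$ where the integrand would jump. I would handle it by smoothing: replace $\mathrm{ReLU}$ by the Softplus $\phi_\tau$ of \cref{lem:kink}, compute the Hessian with ordinary differentiation, and let $\tau\to0$. The Gaussian tails and the linear growth of $\varepsilon$ give dominated convergence, and the term $\phi_\tau''(u_j)\to\delta(u_j)$ produces exactly the surface integral above.
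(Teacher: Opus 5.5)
Your proposal is correct and follows the paper's proof essentially step by step. The gate of unit $j$ alone produces the singular term $o_j\,\delta(n_j\cdot x)\,P_jxx^\top P_j/r_j^2$ in the $(j,j)$ block; the residual is evaluated on the hyperplane where unit $j$ vanishes; and conditioning on $n_j\cdot x=0$ (density $(2\pi)^{-1/2}$) with the half-Gaussian moments along $P_jv/\varsigma_j(v)$ yields $F_j(v)/(2\pi)$.

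The Softplus-smoothing justification is your own addition; the paper uses the distributional second derivative directly and cross-checks against the arc-cosine kernel. Your non-collinearity caveat is not needed: the residual is continuous across the hyperplane also when $n_j=\pm n_i$ or $n_j=\pm v_l$. The same surface-integral formula therefore applies at those points and gives the stated limit $F_j=0$.
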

The proof (\cref{app:proofs}) conditions on the hyperplane $n_j\cdot x=0$, on which the student unit vanishes. Off-diagonal singular blocks vanish because each gate depends only on its own hidden weights. Ordinary cross-block curvature remains. Student--student and student--teacher terms compete, so a committee has no universal sign. At an exactly matched teacher they cancel. The single-unit case \eqref{eq:R1} shows that batch normalization is not necessary for negative singular curvature. It does not imply that the singular curvature of an arbitrary network is negative.

We evaluate \eqref{eq:R2} at fixed states with nonzero residual. These are committees with $N_{\mathrm s}=1,3,4$ units, unequal radii and mixed-sign output weights. The reference is the population Hessian of the arc-cosine kernel \citep{cho2009kernel} minus a Monte Carlo estimate of the expected regular Hessian ($4\cdot10^6$ inputs; \ncheck{relu}). The two agree to within $4\cdot10^{-3}$ on entries of size 0.47--0.78, which is the Monte Carlo error. The off-diagonal singular blocks vanish to the same accuracy. The committee eigenvalues of $\Hsing$ have both signs ($-0.95$ to $+0.07$).

\subsection{Curvature transport and smooth-remainder approximations}\label{sec:closure}
By \eqref{eq:F2} and \eqref{eq:transport}, what the single-step remainder does not explain enters in two ways. It enters through $U$, i.e.\ through the curvature transport $\DH$ amplified by $\beta^2/(1-\beta^2)$. It also enters through the mixed term. To predict $c$ from the single-step remainder alone, the endpoint quantities $\rho_1$ and $\rhoH$ must be expressed through $\rho_0$ (a moment closure). Suppose that the mean remainder along the actual step is a single power near $f=1$, $\E\Rem(f)\approx a_nf^n$. Then $\rho_1=n\rho_0$ and $\rhoH=n(n-1)\rho_0$. The relation for $\rhoH$ holds for smooth remainders only (\cref{sec:kinks}). With $\delta_X\approx0$ and $\gamma\to0$ it follows that
\begin{equation}\label{eq:closure}
c\approx\frac{1+\frac{1+(n-1)\beta}{1+\beta}\,\rho_0}{1-\frac{n(n-1)\beta^2}{2(1-\beta^2)}\,\rho_0}.
\end{equation}
We call the case $n=3$ the \emph{cubic-remainder approximation}. To first order in $\rho_0$ it gives
\begin{equation}\label{eq:amplification}
c\approx1+a_{\mathrm{mem}}(\beta)\,\rho_0,\qquad a_{\mathrm{mem}}(\beta)=\frac{1+2\beta}{1+\beta}+\frac{3\beta^2}{1-\beta^2}=\frac{1+\beta+\beta^2}{1-\beta^2}\ \xrightarrow{\ \beta\to1\ }\ \frac{3}{2(1-\beta)},
\end{equation}
with $a_{\mathrm{mem}}=14.3$ for $\beta=0.9$ (2.3, 29.3 and 149 for $\beta=0.5$, 0.95 and 0.99). Within this approximation, gradient memory therefore amplifies the single-step anharmonicity by an order of magnitude. The nonlinear form matters already at $\rho_0=-0.02$, where it gives $c=0.773$ against 0.715 from \eqref{eq:amplification}.

\paragraph{A one-parameter extension.} The cubic approximation fixes both endpoint ratios by $\rho_0$. A remainder $af^3+bf^4$ along the step relaxes this. It gives $\rho_1/\rho_0=3+\varphi$ and $\rhoH/\rho_0=6+6\varphi$, with the effective quartic fraction $\varphi=b/(a+b)$. For every $\varphi$ the two ratios obey
\begin{equation}\label{eq:quartic}
\rhoH=6\rho_1-12\rho_0 .
\end{equation}
For fully scale-invariant networks and a measured $\rho_1$, inserting \eqref{eq:quartic} into the exact stationary $\gamma\to0$ relation from \eqref{eq:decomp} and \eqref{eq:transport} gives
\begin{equation}\label{eq:quartic-c}
c\simeq(1+\delta_X)\,\frac{1+[(1-\beta)\rho_0+\beta\rho_1]/(1+\beta)}{1-\frac{\beta^2}{2(1-\beta^2)}\,(6\rho_1-12\rho_0)} .
\end{equation}
The approximation lies in \eqref{eq:quartic} and in neglecting finite decay and drift. In the Softplus networks \eqref{eq:quartic} holds on both datasets to within 1\% of $\rhoH$ (\cref{sec:momentum}, \cref{app:cubic}). The cubic relation $\rhoH\approx6\rho_0$ holds approximately on CIFAR-10, where $\varphi\approx-0.1$. It fails on CIFAR-100, where $\varphi\approx0.4$. The general transport identity is unaffected by this difference; only the local approximation changes.

\paragraph{Conditions of validity.} The error of the approximation can be stated in measurable quantities. Define $\Delta_1=\rho_1-3\rho_0$, $\Delta_2=\rhoH-6\rho_0$, $N_0=1+\frac{1+2\beta}{1+\beta}\rho_0$, $D_0=1-\frac{3\beta^2}{1-\beta^2}\rho_0$, $a_1=\beta/(1+\beta)$, $b_2=\beta^2/(2(1-\beta^2))$ and $c_{\mathrm{cub}}=N_0/D_0$. The exact stationary $\gamma\to0$ relation is $c=(1+\delta_X)\big[1+\frac{(1-\beta)\rho_0+\beta\rho_1}{1+\beta}\big]/\big[1-b_2\rhoH\big]$. Whenever $D_0\ne0$ and $|b_2\Delta_2|\le|D_0|/2$, it implies
\begin{equation}\label{eq:C1}
|c-c_{\mathrm{cub}}|\le|\delta_X|\,|c_{\mathrm{cub}}|+2(1+|\delta_X|)\Big[\frac{a_1|\Delta_1|}{|D_0|}+\frac{|N_0|\,b_2|\Delta_2|}{D_0^2}\Big]
\end{equation}
(checked on $2\cdot10^5$ random parameter sets). To see this, write $N=N_0+a_1\Delta_1$ and $D=D_0-b_2\Delta_2$, so that $c=(1+\delta_X)N/D$. Then $c-c_{\mathrm{cub}}=\delta_Xc_{\mathrm{cub}}+(1+\delta_X)(N/D-N_0/D_0)$, with $N/D-N_0/D_0=(a_1\Delta_1D_0+N_0b_2\Delta_2)/(DD_0)$ and $|D|\ge|D_0|/2$. Long memory increases the sensitivity to an inaccurate endpoint curvature change through $b_2$. Finite-decay and finite-window terms must be added through \cref{thm:finite}, not hidden in the error. The approximation does not predict the remainder from the Hessian. Additional information about the landscape is unavoidable. The transport identity does not require Gaussian noise, a fixed Hessian or free displacement. The reduction to the single-step remainder additionally requires control of the endpoint derivatives of the mean remainder and of the mixed term. A one-dimensional toy (\cref{app:toy}) shows both regimes. With skewed kicks and a tilted log-cosh potential, whose expansion about its minimum has a cubic term, soft modes ($\eta h\lesssim1-\beta$) have $\rho_1/\rho_0\approx2.8$--$2.9$ and $\rhoH/\rho_0\approx5.0$--$5.6$, close to the cubic values. For these modes the approximation is accurate. Stiffer modes have a smaller transport, and the approximation overshoots. With Gaussian kicks the transport is small but nonzero, and the single-power form fails. The validity of the approximation is therefore decided by endpoint measurements, not by a noise-skewness or stiffness criterion. In smooth networks the endpoint measurements give $\Delta_2=6\Delta_1$ up to less than 1\% of $\rhoH$. This is \eqref{eq:quartic}. One measured ratio then fixes both (\cref{fig:overview}c). The approximation concerns smooth remainders. A boundary remainder $\propto f^2$ (\cref{lem:kink}) has no counterpart in the change of the regular Hessian. For ReLU networks the transport must therefore be measured, not inferred from $\rho_0$ (\cref{sec:momentum}).

\subsection{Networks with non-invariant parameters}\label{sec:mixture}
In a batch-normalized network each normalized output channel is its own sphere. Tensors that write to a residual stream that is always read through a normalization are invariant only jointly. The normalization affine parameters and a trained head are not invariant. Three statements remain exact in this setting. (i)~The norm balance \eqref{eq:radial-sgd}--\eqref{eq:radial-ema} of every invariant group holds while the other parameters are trained. (ii)~The rescaling of \cref{prop:symmetry} remains a symmetry when it acts on the invariant groups and their hyperparameters only. What a trained head breaks is the \emph{common} rescaling of all learning rates, including its own. This common rescaling is the variation used by \citet{sadrtdinov2025idealgas} and in our campaigns. The coefficient of \eqref{eq:maxwell} then deviates from $\ell\sum_a(d_a-1)$ by an amount set by the response of the head. This is a mechanism for the larger deviations that \citet{sadrtdinov2025idealgas} report for partially invariant models. At fixed head hyperparameters the relation is exact. (iii)~Let the update be \eqref{eq:update} with $\eta$, $\lambda$ and $\beta$ common to all parameters. (AdamW has a different, adaptive update.) Then the finite-decay identity \eqref{eq:finite} holds with the measurable head term $\lambda\bar e$. Here $e=\psi\cdot\nabla_\psi L$ for exactly invariant groups and $e=\theta\cdot\nabla L$ in general, including groups whose invariance is broken.

Locally the radii determine a block-diagonal angular step. The Hessian and the noise covariance, however, retain cross-group couplings. The stationary covariance therefore follows from the matrix Lyapunov equation of \cref{prop:coupled}. The Gaussian surrogate of that proposition is reversible precisely when the transformed curvature and noise covariance commute. Reversibility alone does not imply a common mode temperature. A Gibbs law of the harmonic energy requires an additional compatibility between noise and dissipation. The non-invariant parameters are harmonic modes. For gradient memory with decoupled decay, a mode of curvature $h>0$ is linearly stable for $\eta h<(1+\beta)(2-\gamma)/(1-\beta)$ ($0\le\gamma<1$). This is the period-two boundary of $m_+=\beta m+(1-\beta)hx$, $x_+=(1-\gamma)x-\eta m_+$. The condition reduces to $\eta(h+\lambda)<2$ for SGD and to $\eta h<\aeos$ as $\gamma\to0$. A head-like scale $\vartheta$ sets the curvature scale $\vartheta^2$ seen by the spheres. When the angular variables relax much faster than the head, their conditional stationary statistics define an effective averaged force on the head. A potential description requires an additional integrability condition. Invariance can also be broken inside a group. A normalization $(z-\mu_B)/\sqrt{v_B+\epsilon_{\mathrm{BN}}}$ with fixed $\epsilon_{\mathrm{BN}}>0$ maps a rescaling $z\to cz$ to $\epsilon_{\mathrm{BN}}\to\epsilon_{\mathrm{BN}}/c^2$. The invariance is therefore approximate wherever the batch variance $v_B$ is not large compared with $\epsilon_{\mathrm{BN}}$. The violation is measured by $\E[w_a\cdot g_{B,a}]$. When the gradient and memory of a group vanish (a dead channel), the update \eqref{eq:update} contracts its norm by the factor $1-\gamma$ per step. This decay drives the group into that regime.
\section{Numerical tests: sphere models, teacher--student networks and convolutional networks}\label{sec:experiments}

\subsection{Sphere models}\label{sec:sphere}
A single sphere tests the harmonic theory without estimator issues (\cref{fig:toy}; \ncheck{sphere}). It has dimension $d=51$, Hessian eigenvalues log-spaced in $[5,100]$, isotropic angular noise $\sigma_{\mathrm a}\in\{0.2,0.1\}$ and $\lambda=0.01$.
\begin{itemize}[leftmargin=*,itemsep=1pt]
\item \emph{SGD}, $\Rtop$ up to 0.95: the radius is within 0.03\% of \eqref{eq:thermostat}, where the fixed-angular-noise ideal-gas radius $r_0$ is off by up to 48\%. The mode variances are within 0.7\% (top mode 2\% at $R=0.95$), against up to a factor 20 for the continuous-time prediction.
\item \emph{Gradient memory} ($\beta=0.9$): the radius is within 0.03--2.5\% and the mode variances within 2.4\% of \eqref{eq:modes-q} (medians within 0.5\%) up to $R\approx0.7$. This holds although $\beta\gamma/(1-\beta)$ reaches 0.56 and weight decay lowers the temperatures by up to 36\%. The contraction alone, without the curvature shift, leaves deviations of up to 5\%. Closer to the threshold ($R\gtrsim0.85$) the chains make intermittent large excursions of the top mode. At the largest settings they settle in a large-amplitude state outside the harmonic description.
\item \emph{AdamW} (rotated basis): second moment within 0.1--1.9\%, mode variances within 2--7\% and radius within 1.4--5.0\% of the self-consistent prediction. The squared radius falls to 0.27 of its ideal-gas value $r_{\mathrm A}^2$ at $\Rtop=0.4$, the opposite sign to SGD.
\item \emph{Concentration} (SGD, 49 bulk modes in $[5,50]$ and an isolated top mode at 100; \ncheck{condense}): as $\gamma$ goes from $3\cdot10^{-5}$ to $3\cdot10^{-2}$ the top-mode temperature grows from $1.4\,T^{(0)}$ to $3685\,T^{(0)}$. It matches the large-$D$ formula ($\gamma_c=5.0\cdot10^{-4}$) to within 3\% for $\gamma\ge3\gamma_c$, with a smooth finite-$D$ crossover near $\gamma_c$. As $\gamma$ grows, $\Rtop$ saturates just below 1, and the bulk is unchanged. The radius agrees within 0.11\%.
\item \emph{Finite decay} (quadratic angular loss, isotropic tangent noise, $\Rtop\le0.1$; \ncheck{zero}): for SGD $c-1=(1.00\pm0.02)\gamma$, the radial factor of \eqref{eq:c-sgd} with $\rho_{\tan}=0$. With memory, $c$ follows the per-mode factor $(1-\beta)/(1-q)$ of \cref{lem:geo}. This factor departs from 1 by up to $-18\gamma$. The remaining difference is between $+1.0\gamma$ and $-3.1\gamma$ and shrinks with the number of tangent dimensions (\ref{num:zero}). \Cref{eq:F2} holds in every run to $2.3\cdot10^{-4}$.
\item \emph{Scaling symmetry}: $(\eta,\lambda)\to(4\eta,\lambda/4)$ leaves the angular statistics unchanged and gives $r^2$ ratios 4.000 (SGD), 4.001 (memory) and 16.007 (AdamW), as \cref{prop:symmetry} requires.
\item \emph{Geometric terms} (\cref{approx:geo}): with Hessian-aligned noise and $\gamma=0.03$, the one-step regression coefficient of a soft SGD mode is 0.9812, against $1-\eta_0h=0.9814$ and $1-\gamma-\eta_0h=0.951$. In addition, $\eta_0\E\omega/\gamma=1.003$ (SGD) and 1.002--1.006 (memory). With $g$ and $H$ evaluated exactly at the current $w$, the memory FDR with the per-mode factor alone has residuals of 1.3\% and 0.5\% at $\gamma=10^{-2}$ and $3\cdot10^{-3}$. These are the remaining $O(\gamma)$ terms of \cref{sec:geo}.
\item \emph{Mixture} (three spheres of 20 modes with different noise and a head, $L=\vartheta^2\sum_a\bar L_a+\frac{c_\vartheta}2(\vartheta-\vartheta_0)^2$, SGD on all parameters): group radii within 0.6\%, head mean within 0.1\%, mode variances within 1\%, head variance within 10--37\% (adiabatic approximation). The coefficient of \eqref{eq:maxwell} under a common rescaling of all learning rates is 30.00 with the head frozen (exact: 30) and 34.96 with a trained head, against 35.09 predicted. The head thus breaks the common rescaling by 17\%, as predicted. At fixed head hyperparameters the relation is exact.
\end{itemize}
\begin{figure}[t]
\centering
\includegraphics[width=\linewidth]{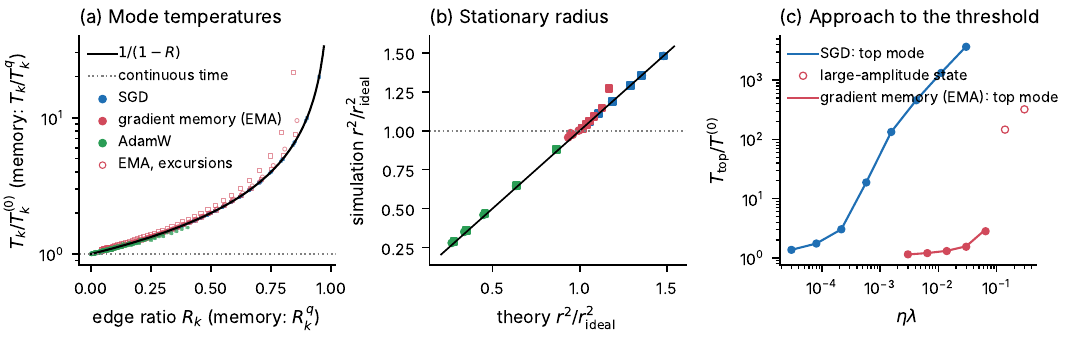}
\caption{Sphere model ($d=51$). (a)~Mode temperature against the edge ratio. SGD, gradient memory and AdamW follow $1/(1-R)$ (the continuous-time limit predicts 1). With memory, $T^q_k$ and $R^q_k$ of \eqref{eq:modes-q} are plotted. Open symbols mark runs with intermittent excursions ($\Rtop>0.75$). (b)~Stationary squared radius relative to the square of the ideal-gas radius at fixed angular noise ($r_0$ for SGD and memory, $r_{\mathrm A}$ for AdamW). Simulation against theory, no fitted parameters. (c)~Approach to the threshold at fixed learning rate. For SGD the temperature of the top mode grows with $\eta\lambda$ (concentration in the top mode). With memory the chains leave the harmonic regime (open symbols: large-amplitude states).}
\label{fig:toy}
\end{figure}

\subsection{A teacher--student network with an analytic population loss}\label{sec:ts}
The normalized soft committee machine \citep{saad1995online} has an analytic population loss. The loss and all its derivatives at $w_t$ and $w_{t+1}$ are therefore evaluated without Monte Carlo error. Hence so is every term of \cref{prop:energy,prop:transport,thm:finite,prop:antisym}. Only stationary averages (and estimated noise covariances) carry sampling error. The student is $f_w(x)=\sum_{i=1}^{N_{\mathrm s}}o_i\,\phi_\tau(n_i\!\cdot x)$ with $n_i=w_i/\|w_i\|$ and $\phi_\tau(z)=\mathrm{erf}(z/(\sqrt2\,\tau))$. Every hidden unit is therefore a scale-invariant group. The teacher $f_*(x)=\sum_{l=1}^{N_{\mathrm t}}o^*_l\,\phi_\tau(v_l\!\cdot x)$ has orthonormal directions $v_l$ and the same activation. Changing the width $\tau$ thus changes teacher and student together. Inputs are $x\sim\mathcal N(0,I_{N_{\mathrm{in}}})$, labels $y=f_*(x)+\sigma\varepsilon$ with unit-variance noise $\varepsilon$, and the loss is $\frac12(f_w-y)^2$. With the overlaps $n_i\cdot n_j$ and $n_i\cdot v_l$ and the Gaussian kernel $k_\tau(u\cdot u')=\E\,\phi_\tau(u\cdot x)\phi_\tau(u'\cdot x)=\frac2\pi\arcsin\frac{u\cdot u'}{1+\tau^2}$ for unit vectors $u$, $u'$,
\begin{equation}\label{eq:ts-loss}
L(w)=\frac12\sum_{i,j}o_io_j\,k_\tau(n_i\!\cdot n_j)-\sum_{i,l}o_io^*_l\,k_\tau(n_i\!\cdot v_l)+\frac12\sum_{l,l'}o^*_lo^*_{l'}\,k_\tau(v_l\!\cdot v_{l'})+\frac{\sigma^2}{2}.
\end{equation}
We use $N_{\mathrm s}=N_{\mathrm t}=4$, $N_{\mathrm{in}}=16$ ($d=64$) and width $\tau=1$ (wide) or $0.1$ (narrow, anharmonic on a smaller angular scale). The output weights are equal, $o_i=o^*_i=1$, for SGD and graded, $o_i=o^*_i=1,\frac12,\frac14,\frac18$, with memory. Derivatives are automatic derivatives of \eqref{eq:ts-loss} in float64. At the realizable teacher, under these assumptions, the noise covariance is exactly $\Sigma=\sigma^2H/B$. This alignment is a property of the model at the teacher, not a general network identity. Chains run in parallel (128 chains, 20,000 measured steps after 5,000--20,000 steps of burn-in). By \cref{cor:maxwell} only $\gamma$ matters, so we set $\eta=1$.

\paragraph{A symmetry of the label noise.} For this model with inversion-symmetric inputs and no biases, the residual $\Delta f(x)=f_w(x)-f_*(x)$ and the Jacobian $J_w(x)=\nabla_wf_w(x)$ are odd in $x$. Hence the sample gradient $[\Delta f(x)-\sigma\varepsilon]J_w(x)$ is unchanged when $(x,\varepsilon)$ is replaced by $(-x,-\varepsilon)$. Flipping both with an independent random sign therefore leaves every sample gradient, and every trajectory, unchanged. Under the flip the input law stays Gaussian and the label noise becomes symmetrized. The dynamics depends only on the symmetrized label-noise law. Our runs with centred-exponential label noise are therefore runs with a symmetric, non-Gaussian (Laplace-like) noise, not a skewness intervention. Skewness is tested in the toy of \cref{app:toy}, where the kicks are injected directly.

\begin{table}[t]
\centering
\caption{Teacher--student, SGD (7 of 16 runs; $B=16$, Gaussian label noise of level $\sigma$, equal output weights). Mode variances relative to the surrogate \eqref{eq:geo-sgd}: stiff modes (the two upper Hessian levels) and soft modes (44 of 60). In parentheses, soft modes against the contraction-only form $x_+=(1-\gamma-\eta_0h)x-\eta_0\zeta$.}
\label{tab:ts}
\small
\begin{tabular}{ccccccc}
\toprule
$\gamma$ & $\sigma$ & $\Rtop$ & $\|g\|^2/$continuum & $\|g\|^2/$linear FDR & stiff var/pred & soft var/pred\\
\midrule
$10^{-3}$ & 0.30 & 0.142 & 1.157 & $1.0008\pm0.0003$ & 1.000 & 0.997 (1.091)\\
$10^{-3}$ & 0.10 & 0.307 & 1.419 & $1.0005\pm0.0001$ & 0.999 & 0.999 (1.042)\\
$3\cdot10^{-3}$ & 0.10 & 0.389 & 1.598 & $1.0010\pm0.0001$ & 1.002 & 1.011 (1.114)\\
$10^{-2}$ & 0.30 & 0.316 & 1.431 & $1.0044\pm0.0001$ & 1.004 & 1.034 (1.466)\\
$10^{-2}$ & 0.10 & 0.447 & 1.743 & $1.0018\pm0.0001$ & 1.008 & 1.044 (1.351)\\
$3\cdot10^{-2}$ & 0.30 & 0.399 & 1.595 & $1.0104\pm0.0001$ & 1.015 & 1.106 (2.188)\\
$3\cdot10^{-2}$ & 0.10 & 0.475 & 1.795 & $1.0045\pm0.0001$ & 1.021 & 1.108 (2.015)\\
\bottomrule
\end{tabular}
\end{table}

\begin{observation}[SGD: finite-step fluctuations in a smooth nonlinear network]\label{obs:ts-sgd}
With SGD (\cref{tab:ts}), $\|g\|^2$ exceeds the continuous-time value $\frac\eta2\Tr\Sigma H$ by 16--80\%. However, the linear FDR with the edge term holds to 0.05--1\%. The loss-balance ratio equals $1+\rho_0$, and the norm balance holds to $10^{-4}$. The remaining excess, $c-1=(0.15\text{--}0.75)\gamma$, corresponds by \eqref{eq:c-sgd} to a negative tangent remainder, $\rho_{\tan}=-(0.25\text{--}0.86)\gamma$. The top modes ($\Rtop$ up to 0.48) follow $T^{(0)}/(1-R)$ to within 0.1--2\%, where the continuous-time prediction is off by 16--94\%. The soft modes follow the surrogate \eqref{eq:geo-sgd} to within 1.1\% for $\gamma\le3\cdot10^{-3}$ and 3--11\% for $\gamma=10^{-2}$--$3\cdot10^{-2}$. These deviations are the neglected higher-order geometric terms of \cref{approx:geo}. The contraction-only form is off by up to a factor 2.2.
\end{observation}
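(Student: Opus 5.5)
Since this is an empirical observation rather than a theorem, I would establish it by combining the exact identities of \cref{sec:exact} with the harmonic predictions of \cref{sec:linear}, and let the analytic population loss \eqref{eq:ts-loss} carry as much of the load as possible. Because $L$, $g$ and $H$ are available in closed form at $w_t$ and $w_{t+1}$, every term is computed without Monte Carlo error. Only the stationary averages over the 128 chains and the noise covariance $\Sigma$ need estimating. For $\Sigma$ I would use per-sample gradients at the current state, and I would attach batch-means standard errors to every average.

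The order is as follows. First, I check the exact relations, which serve as estimator checks rather than predictions. The norm balance \eqref{eq:radial-sgd} should hold group by group to statistical accuracy, and the loss-balance ratio $\cE$ should equal $\cR=1+\rho_0$ up to the drift $d_L/Q$, by \cref{prop:energy}. Second, I compare $\E\|g\|^2$ with two references: the continuum value $\frac\eta2\Xi_H$, and the linear FDR \eqref{eq:fdr} with the edge term $\frac\eta2B_H$. The first ratio exceeds one by an amount set by $B_H/\Xi_H$. At the teacher, $\Sigma=\sigma^2H/B$, so this is of the order of $\langle b\rangle_\Sigma/V$. The second ratio is $c$ itself. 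Third, I convert the measured $c-1$ into a tangent remainder through the exact SGD relation \eqref{eq:c-sgd}, $\rho_{\tan}=(1-\gamma)c-1$. A value $c-1=(0.15\text{--}0.75)\gamma$ then yields $\rho_{\tan}\approx-(0.25\text{--}0.86)\gamma$. I would cross-check this against a direct evaluation of $\Rem_{\tan}$ along $\Delta_{\tan}=-\eta g_B/(1-\gamma)$, since $L(w_{t+1})=L(w_t+\Delta_{\tan})$ exactly.

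Fourth, for the mode statistics, I diagonalize the angular Hessian at the teacher on the product of spheres. I then project the stationary tangent displacements onto its eigenvectors and compare each variance with $T^{(0)}_k/(h_k(1-R_k))$ from \cref{prop:modes}. This is legitimate because the noise is Hessian-aligned there, so by \cref{prop:coupled} the surrogate covariance is diagonal in that basis, even with unequal radii. The same projections are compared with the contraction-only form $x_+=(1-\gamma-\eta_0h)x-\eta_0\zeta$, whose variance is analytic.

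I expect the main obstacle to be attributing the soft-mode discrepancies of 3--11\% at $\gamma\ge10^{-2}$ to the neglected geometric terms of \cref{approx:geo}, rather than to estimator bias or incomplete equilibration. I would try two checks first. One is to verify that the deviation scales with $\gamma$ and with $\E|x|^2$, consistent with the omitted $2CHC$ correlation and radial fluctuations. The other is to confirm that $\eta_0\E\omega/\gamma\approx1$ and that the one-step regression coefficient matches $1-\eta_0h$. Together, these localize the residual to the higher-order mean-field terms.
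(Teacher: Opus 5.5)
Your plan follows the paper's route. You run SGD on the analytic committee machine with equal output weights, use the norm balance and the identity $\cE=\cR-d_L/Q$ as estimator checks, read off $\rho_{\tan}=(1-\gamma)c-1$ from \eqref{eq:c-sgd}, and compare the projected mode variances with \eqref{eq:geo-sgd} and with the contraction-only form. Your extra diagnostics go beyond the paper's treatment of this network: the direct evaluation of $\Rem_{\tan}$, the one-step regression coefficient and $\eta_0\E\omega/\gamma$. The paper states the attribution of the 3--11\% soft-mode residual to the higher-order terms of \cref{approx:geo} without a separate test here; the corresponding checks are made only in the sphere models of \cref{sec:sphere}.

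One justification is wrong and should be replaced. Hessian-aligned noise does not make the per-mode comparison in the eigenbasis of $\sH$ valid for unequal radii. With $\sS=\alpha\sH$, \cref{prop:coupled} gives $C_y=\alpha(2I-\tilde\sH)^{-1}$, i.e.\ $C=\alpha(2\Lambda^{-1}-\sH)^{-1}$ in the tangent coordinates. This is diagonal in the eigenbasis of $\sH$ only if $\Lambda$ commutes with $\sH$. Alignment buys reversibility, not a common angular step. This is exactly why the single-$\eta_0$ prediction fails, by factors $0.42$--$0.60$ and $2.1$--$9.3$, once the output weights are graded (\cref{obs:ts-coupled}).

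What actually legitimizes your comparison is that these SGD runs use equal output weights. The hidden units then have equal radii in law, so $\Lambda\approx\eta_0I$. For a common step, \eqref{eq:M2} gives the diagonal variances $\eta_0s_k/(2h_k(1-R_k))$, with $s_k$ the diagonal noise entries, with or without alignment.

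A smaller point concerns the size of the gradient excess. The excess of $\|g\|^2$ over $\frac\eta2\Tr\Sigma H$ is $c(1+B_H/\Xi_H)-1$. In the harmonic theory $B_H/\Xi_H$ is the $\mu_k\sigma_k^2$-weighted mean of $R_k/(1-R_k)$. The quantity $\langle b\rangle_\Sigma/V$ you quote is instead the first-order excess of the radius ratio $r^2/r^2_{\mathrm{SDE}}$, a $\sigma_k^2$-weighted mean of $R_k$. It is bounded by $\Rtop\le0.48$ and so cannot account for the 80\% excess.
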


\begin{figure}[t]
\centering
\includegraphics[width=\linewidth]{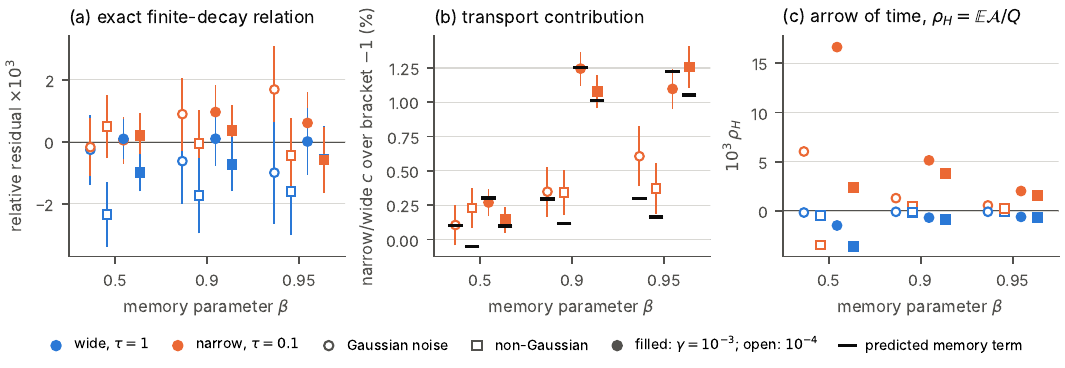}
\caption{Teacher--student network with gradient memory (24 runs; graded output weights, $B=16$, $\sigma=0.3$, 128 chains, 20,000 measured steps), against the memory parameter $\beta$. (a)~Relative residual of the exact finite-decay relation \eqref{eq:F2}: the measured $c$ divided by the right-hand side of \eqref{eq:F2}, evaluated with the measured $U$, $\rho_0$ and $\rho_1$, minus one. (b)~Curvature-transport contribution: the narrow-to-wide ratio $c(\tau{=}0.1)/c(\tau{=}1)$ divided by the ratio of the single-step remainder brackets of \eqref{eq:F2}. It is compared with the predicted memory term $[1-b_2\rhoH(1)]/[1-b_2\rhoH(0.1)]$ (bars). (c)~The signed arrow-of-time observable $\rhoH=\E\cA/Q$. It is negative and geometric for the wide activation and positive for the narrow one at $\beta\ge0.9$. Error bars are one standard error over chains. The full table is in \cref{app:tstable}; \ncheck{ts}.}
\label{fig:ts-momentum}
\end{figure}

\begin{observation}[Gradient memory: exact relations, geometric decay effects and transport]\label{obs:ts-ema}
With memory ($\beta=0.5$, 0.9, 0.95; $\gamma=10^{-4}$ and $10^{-3}$; \cref{fig:ts-momentum}) the linear-FDR ratio ranges from 0.977 to 1.020.
(i)~\emph{Exact relations.} \eqref{eq:F2} holds: measured over predicted $c$ is 0.9998 (s.d.\ 0.0009, range 0.9977--1.0017, RMS 0.8 standard errors over the 24 runs). The transport identity \eqref{eq:transport} holds to within 0.16\%. The norm balance holds to within 0.03\%.
(ii)~\emph{Geometric decay effects.} With the wide activation the remainders are small ($|\rho_0|\le6\cdot10^{-4}$) and largely geometric. Here $c$ falls below 1 as $\beta\gamma/(1-\beta)$ grows, to 0.977--0.980 at $\beta=0.95$, $\gamma=10^{-3}$. The per-mode factor $(1-\beta)/(1-q)$ of \cref{lem:geo} gives the size of the effect (0.981 there). However, $c$ lies below it in all 12 wide runs, by 0.1--4.1 standard errors. With Gaussian label noise at $\gamma=10^{-3}$ the difference is $-(0.1\text{--}1.5)\gamma$. This is the size of the terms that the factor does not capture (\cref{sec:geo}) and that \eqref{eq:F2} accounts for. The curvature change of these runs is itself geometric, $\rhoH\approx-(0.58\text{--}0.99)\,\gamma$ for $\beta\ge0.9$. It comes from the change of the radius along the step. The radial part of $\delta_X$ follows \eqref{eq:dX-rad} at every $\beta$, exceeding it by at most 5.3\% (narrow activation: 14\%). With the symmetric non-Gaussian label noise $c$ is 0.15--0.29\% lower than with Gaussian noise of the same variance in all six wide pairs. This is an effect of higher noise moments. \eqref{eq:F2} holds for both.
(iii)~\emph{Curvature transport.} The narrow activation gives a positive remainder ($\rho_0=1.3\cdot10^{-4}$--$1.8\cdot10^{-2}$). For $\beta\ge0.9$ the ratios are $\rho_1/\rho_0=2.3$--$2.6$ and $\rhoH/\rho_0=1.7$--$3.4$, not the cubic values 3 and 6. Comparing the narrow with the wide activation at the same $(\beta,\gamma,\text{noise})$ approximately removes the leading geometric contribution. The ratio $c(0.1)/c(1)$ should then equal the ratio of the remainder brackets of \eqref{eq:F2} times the memory term $[1-b_2\rhoH(1)]/[1-b_2\rhoH(0.1)]$, $b_2=\beta^2/(2(1-\beta^2))$. This term uses only the difference of the two curvature changes. The measured ratio equals this product within 1.9 standard errors in all 12 pairs (RMS 1.0). At $\gamma=10^{-3}$ and $\beta\ge0.9$ the memory term is 1.01--1.25\%, 4.3--10.6 times the single-step bracket. In these runs omitting it misses the measured ratio by 7.4--10.0 standard errors. At $\beta=0.5$ the memory term is $-0.05$ to $+0.30\%$.
\end{observation}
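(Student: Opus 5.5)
The observation is numerical, so my plan is to separate what is guaranteed by earlier results from what has to be measured.

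\emph{Part (i): exact relations.} The three relations in (i) are consequences of exact identities, so agreement is only a consistency check of the simulation and the estimators. I would:
\begin{itemize}
\item run the 24 settings ($\beta\in\{0.5,0.9,0.95\}$, $\gamma\in\{10^{-4},10^{-3}\}$, $\tau\in\{1,0.1\}$, Gaussian and symmetrized label noise) with 128 parallel chains;
\item evaluate $L$, $g$ and $H$ at both $w_t$ and $w_{t+1}$ analytically from \eqref{eq:ts-loss}, which gives $\Rem(1)$, $\Rem'(1)$, $\DH$, $E_X$, $\cK$ and $\cK_{\mathrm{lin}}$ without Monte Carlo error;
\item insert these into \eqref{eq:F2}, \eqref{eq:transport} and \eqref{eq:radial-ema}.
\end{itemize}
Since \cref{thm:finite} and \cref{prop:transport} hold without stationarity, the finite-window drifts $d_L$, $d_{gm}$ and $d_{\cM}$ can be kept explicitly rather than assumed zero. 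The residual should then be consistent with zero at the level of the chain-to-chain standard error. I would report it in units of that error, since this is the statistic that distinguishes a bookkeeping error from sampling noise.

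\emph{Part (ii): the wide activation.} Here the anharmonicity is small, so \cref{lem:geo} is the relevant prediction. The steps are:
\begin{itemize}
\item compute the per-mode factor $(1-\beta)/(1-q)$ and compare the measured $c$ with it;
\item attribute the difference to the $O(\gamma)$ radial terms described in \cref{sec:geo}. The exact relation \eqref{eq:F2} then accounts for them through $U$ and the geometric parts of $\rho_0$ and $\rho_1$.
\end{itemize}
To support the claim that $\rhoH$ is geometric, I would evaluate the $\gamma$-terms of \eqref{eq:A-decay} separately. I would also compare the radial part of $E_X$ with \eqref{eq:dX-rad} from \cref{app:lyapunov}.

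The claim about symmetric non-Gaussian label noise needs one extra ingredient. By the label-noise symmetry argued above, the centred-exponential runs are runs with symmetric Laplace-like noise. Any shift of $c$ at equal variance must therefore come from the fourth and higher noise moments. I would show it by pairing runs with common $(\beta,\gamma)$.

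\emph{Part (iii): curvature transport.} This is the substantive step. I would start from the exact stationary $\gamma\to0$ form of \eqref{eq:decomp} combined with \eqref{eq:transport}, at finite $\gamma$ using \eqref{eq:F2} instead. Dividing the narrow-activation case by the wide one at equal $(\beta,\gamma,\text{noise})$ should cancel, to leading order:
\begin{itemize}
\item the geometric per-mode factor;
\item $\delta_X$, whose radial part is activation-independent to the stated accuracy.
\end{itemize}
What remains should be the ratio of the single-step remainder brackets times the memory term $[1-b_2\rhoH(1)]/[1-b_2\rhoH(0.1)]$, which involves only the difference of the two curvature changes. I would then test whether the measured ratio equals this product. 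As a control, I would show that dropping the memory term fails by many standard errors at $\beta\ge0.9$, while at $\beta=0.5$ the term is negligible because $b_2$ is small.

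\emph{Main obstacle.} I expect the cancellation in the narrow-to-wide ratio to be the weak point. The geometric contributions to $\rhoH$ and to $\delta_X$ are only approximately independent of $\tau$: the narrow radial $\delta_X$ already deviates by 14\% from \eqref{eq:dX-rad}. The residual geometric mismatch could therefore be comparable to the memory term at $\gamma=10^{-4}$. My first check would be to compute that mismatch directly from the measured $\delta_X$ in both activations. I would restrict the quantitative claim to $\gamma=10^{-3}$, where the memory term dominates the error.
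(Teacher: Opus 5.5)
Your plan is essentially the paper's: exact population derivatives of \eqref{eq:ts-loss} at both ends of each step in the 24 teacher--student runs to check \eqref{eq:F2}, \eqref{eq:transport} and \eqref{eq:radial-ema}; a comparison of the wide runs with the per-mode factor of \cref{lem:geo} and of the radial $\delta_X$ with \eqref{eq:dX-rad}; and a test of the narrow-to-wide ratio of $c$ against the ratio of remainder brackets times $[1-b_2\rhoH(1)]/[1-b_2\rhoH(0.1)]$, with the memory term shown to be needed at $\gamma=10^{-3}$, $\beta\ge0.9$. Your worry about $\gamma=10^{-4}$ is unfounded, because the narrow--wide difference of the radial $\delta_X$ scales with $\gamma$ (at most about $10^{-4}$ there, well below the memory term of $0.12$--$0.30\%$ for $\beta\ge0.9$), so there is no reason to drop those pairs, which the statement includes in its 12-pair agreement.
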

The model thus resolves the curvature-transport contribution and its $\beta$-dependence on a small positive remainder. For $\beta\ge0.9$, the anharmonic transport correction isolated by the narrow--wide comparison is positive. The total ratio can nevertheless lie below one because of geometric decay effects. It tests the endpoint relations, not the cubic-remainder approximation. It does not satisfy the endpoint ratios of that approximation. With equal output weights the mode variances follow \eqref{eq:modes-q} to within 1\% (stiff modes) and 2--4\% (soft modes). Exploratory runs with an unrealizable teacher, widths down to 0.04 and $\gamma$ up to $10^{-2}$ agree with the per-mode factor to within 1.5\% (\ncheck{ts}).

\begin{observation}[Coupled groups with unequal radii]\label{obs:ts-coupled}
With graded output weights $1,\frac12,\frac14,\frac18$ the stationary radii of the four hidden units differ by a factor of 8. For SGD (six runs: four with a realizable teacher at $\gamma=10^{-3}$ and $10^{-2}$, two with an unrealizable one, $N_{\mathrm t}=8$, at $\gamma=10^{-3}$) a common angular step fails. The single-$\eta_0$ mode prediction is off by factors 0.42--0.60 for the stiff modes and 2.1--9.3 for the soft ones. The coupled surrogate of \cref{prop:coupled} is evaluated with the measured radii and the measured $60\times60$ angular noise covariance. It gives the stiff-mode variances within 0.6\% and the soft ones within 0.5--1.2\% at $\gamma=10^{-3}$ (4--9\% at $\gamma=10^{-2}$, the higher-order geometric terms of \cref{obs:ts-sgd}). It also gives the full covariance matrix within 1.3--7.4\% in Frobenius norm at $\gamma=10^{-3}$ (6.7\% and 16.1\% at $10^{-2}$). Near the realizable teacher the noise is Hessian-aligned. By \cref{prop:coupled}(iii) the surrogate is therefore nearly reversible despite the unequal radii (predicted relative current $3\cdot10^{-4}$--$6\cdot10^{-3}$). For the unrealizable teacher the measured noise is aligned as well (predicted $1.5\cdot10^{-5}$). The measured relative current $\|J_1\|/\|\E[x_+x^\top+xx_+^\top]\|$ (here $x$ are the tangent coordinates of all units, not the inputs) is 0.002--0.007. It was recorded in these runs without an estimate of its sampling floor. It is 0.003 also where $1.5\cdot10^{-5}$ is predicted. We therefore do not regard it as resolved.
With memory, the same analysis in phase space $(x,m)$ was applied to the runs of \cref{fig:ts-momentum}. These runs were repeated with the full covariances and a noise-floor estimate of the current recorded (same trajectories). Of the 24 runs, 22 completed. The two non-Gaussian runs at $\beta=0.95$, $\gamma=10^{-3}$ did not finish within the compute window. With the wide activation the stiff modes agree within 0.7\%. The soft ones agree within 1.4\% at $\gamma=10^{-3}$ and are 8\% below the prediction at $\gamma=10^{-4}$, where their relaxation times approach the measurement window. The single-$\eta_0$ prediction is off by factors 0.61--0.62 and 1.9--2.1. The predicted current is at most $3\cdot10^{-4}$. The measured one equals its sampling floor (ratio 0.97--1.04). With the narrow activation the loss is anharmonic on the scale of the fluctuations. The linear surrogate fails for the variances (stiff modes 3--19\% and soft modes 15--77\% above it). However, the predicted current at $\gamma=10^{-3}$, $1.7$--$2.5\cdot10^{-3}$, is resolved. The measured one is $1.2$--$2.1\cdot10^{-3}$, 1.8--2.6 times its floor and 0.74--0.85 times the prediction. At $\gamma=10^{-4}$ the predicted current, $3$--$4\cdot10^{-4}$, is below the floor. The measured one is 1.0--1.2 times the floor.
\end{observation}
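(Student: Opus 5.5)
Since this statement is an empirical observation rather than a theorem, the plan is to establish it by a controlled numerical comparison against \cref{prop:coupled}. No new derivation is needed; the numbers it reports are what the experiment must produce. First, for each run I would record the stationary radii $r_a$ of the four hidden units and check that they spread by a factor of about 8 with the graded output weights. I would verify the norm balance \eqref{eq:radial-sgd} or \eqref{eq:radial-ema} per unit, so that the radii used below are the true stationary ones. From these radii I build the block-diagonal step $\Lambda$. I then project the states onto tangent coordinates at a reference direction near the stationary mean. The angular Hessian $\sH$ follows from the analytic loss \eqref{eq:ts-loss} via automatic differentiation. The $60\times60$ angular noise covariance $\sS$ is estimated from fresh minibatch gradients at sampled stationary states, with the radius factors included so that $\sS$ refers to the unit sphere.

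Second, I would solve the discrete Lyapunov equation \eqref{eq:M1} directly, or equivalently \eqref{eq:M2} in the $\tilde\sH$ eigenbasis. I compare the predicted covariance with the chain-averaged empirical covariance of the tangent coordinates, split into stiff and soft Hessian levels, and report the Frobenius-norm error. As the contrast, I replace $\Lambda$ by a single $\eta_0$ computed from a mean radius and apply \cref{prop:modes} modewise. This is expected to be off by the stated factors, with stiff modes overpredicted and soft modes underpredicted, because the small-radius units take larger angular steps. The deviations at $\gamma=10^{-2}$ I would attribute to the neglected geometric terms of \cref{approx:geo}, by checking that they scale like those in \cref{obs:ts-sgd}.

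Third, for reversibility I would compute the predicted lag-antisymmetric covariance $J_1$ from $C_y$ and $\tilde\sH$, using \cref{prop:coupled}(ii). Its relative size should reflect the near-commutation $[\tilde\sH,\tilde\sS]\approx0$ that follows from $\Sigma=\sigma^2H/B$ at the realizable teacher, as in \cref{prop:coupled}(iii). I then measure $\|J_1\|/\|\E[x_+x^\top+xx_+^\top]\|$ from consecutive states. The main obstacle is that this current is tiny, so its sampling floor must be estimated before any claim is made. I would do this first by computing the same statistic on the 128 independent chains split into batches, and on time-reversed or shuffled pairs, to obtain the null magnitude. In the SGD runs no floor was recorded, so I would conclude only that the current is unresolved.

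For memory, I would repeat the construction in phase space $(x,m)$ with the per-group shifts of \cref{approx:geo}, using the $4\times4$-block Lyapunov equation of \cref{lem:geo} coupled across groups. This gives both the variances and the predicted current. For the narrow activation I expect the linear covariance to fail because of anharmonicity, while the current may still be resolved. I would test this by requiring that the measured current exceed roughly twice the floor, and then comparing it with the prediction.
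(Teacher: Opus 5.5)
Your plan is the numerical comparison the paper itself performs: it evaluates the Lyapunov surrogate of \cref{prop:coupled} with the measured radii and the measured $60\times60$ angular noise covariance, and with memory its $(x,m)$ analogue with the per-group shifts of \cref{approx:geo}, where the paper treats $x$ and $m$ as even variables when defining the current; it then contrasts this with a single-$\eta_0$ prediction and compares predicted and measured currents against a sampling floor. Form the predicted and measured normalized currents in the same coordinates, since $\|J_1\|/\|\E[x_+x^\top+xx_+^\top]\|$ is not invariant under $y=\Lambda^{-1/2}x$ when the radii differ; and take the floor from the scatter between independent chain batches, because reversing the pairs merely flips the sign of $J_1$.
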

A minimal model makes the distinction concrete. It has two coupled spheres with two tangent coordinates each, a cross-coupled quadratic loss and isotropic noise of angular variance $s_1$, $s_2$ (4,000 chains, $4\cdot10^4$ measured steps; \ncheck{coupled}). With nearly equal noise temperatures ($\alpha_1s_1=\alpha_2s_2$ to within 1\%) the antisymmetric lag covariance between the coupled coordinates is $-4.6\cdot10^{-6}$ against $-5.0\cdot10^{-6}$ predicted for the residual mismatch. With temperature ratios 2.1 and 4.2 it is $-2.96\cdot10^{-4}$ and $-5.02\cdot10^{-4}$, against $-2.98\cdot10^{-4}$ and $-5.11\cdot10^{-4}$ from \cref{prop:coupled}. Both groups satisfy their norm balance exactly. The stationary state nonetheless carries a probability current set by the mismatch of the local noise temperatures.

\paragraph{Arrow of time.} The mean of the pair observable, $\E\cA=Q\rhoH$, differs from zero in every run, by 13 to 277 standard errors (\cref{fig:ts-momentum}c). The parameter trajectory thus shows a resolved time-reversal asymmetry. With the wide activation $\E\cA$ is negative and geometric (\cref{obs:ts-ema}\,(ii)). With the narrow activation it is positive for $\beta\ge0.9$. Its excess over the wide value is the transport of \cref{obs:ts-ema}\,(iii). For SGD the radius of each hidden unit changes with probability one. The pair law of the parameters is then singular with respect to its reversal, and \eqref{eq:I2} carries no information (\cref{app:proofs}). With memory the radial part of the step contains the memory, $w_t\cdot w_{t+1}=(1-\gamma)\|w_t\|^2-\eta\beta\,w_t\cdot m_t$ per group. That argument therefore does not apply. Here \eqref{eq:I2} gives $I_2\ge8\cdot10^{-5}$--$3\cdot10^{-2}$ per step in the 22 repeated runs, $1.7\cdot10^{2}$ to $7.6\cdot10^{4}$ times its noise floor. The two activations give bounds of the same size. The bound therefore does not separate geometric from anharmonic irreversibility. In addition, it concerns the parameter pairs alone, not the entropy production of the full optimizer.

\paragraph{ReLU.} With ReLU units the singular curvature along the step is measured as the population Hessian minus the regular Hessian of an independent batch. Along an SGD trajectory at $\gamma=10^{-3}$ it is $-2.127\%\pm0.018\%$ of $\Delta w^\top H\Delta w$, and \cref{prop:relu} gives $-2.131\%\pm0.003\%$. It is negative, as in the batch-normalized ResNets of \cref{sec:networks}, but 3.3--5.6 times smaller.

The smooth teacher--student model is therefore nearly harmonic per step. The departures of the networks below need an ingredient it lacks. Batch statistics, cross-entropy far from its minimum, depth and a finite training set are candidates. The explored family does not establish which of them is necessary.

\subsection{Scale-invariant convolutional networks on CIFAR-10 and CIFAR-100}\label{sec:networks}
\paragraph{Setup.}\label{sec:setup}
CIFAR-10 and, in campaign v4, CIFAR-100 \citep{krizhevsky2009cifar} (each normalized with its own channel statistics), batch size 128 (each batch drawn uniformly with replacement from the training set, independently at every step, with independent random crops where used; there are no epochs), batch normalization with batch statistics. \emph{Thin SI ResNet-18}: pre-activation ResNet-18 \citep{he2016identity} with widths $(4,8,16,32)$, every convolution followed by an affine-free batch normalization, no biases and a fixed random head. It has 43,692 parameters in 300 channel groups and is the fully invariant model of \citet{sadrtdinov2025idealgas}. \emph{SI ConvNet}: four $3\times3$ convolutions of widths $(8,16,32,64)$ with affine-free batch normalization, max-pooling and a fixed head. It has 24,408 parameters in 120 groups. \emph{Practical ResNet-18} of \citep{sadrtdinov2025idealgas}: width 8, normalization with affine parameters, trainable head, random crops. Of its 176,178 parameters, 73,728 form 240 per-channel groups, and 100,824 are invariant only jointly (one group per stage's residual stream). The remaining 1,626 are Euclidean (\cref{sec:mixture}). Activations are ReLU or the shifted Softplus $\phi_\tau$, applied to normalized pre-activations. Hence $\tau$ is in units of the pre-activation scale. Scale invariance is then preserved up to the normalization constant $\epsilon_{\mathrm{BN}}$ (\cref{sec:mixture}). We measure the invariance directly at every diagnostic point (\ncheck{invariance}). The first measure is the relative response factor $s=\epsilon_{\mathrm{BN}}/(v_B+\epsilon_{\mathrm{BN}})$ of the weakest channel on the diagnostic batch. A rescaling of the channel by $c$ changes a normalized output $\hat z$ by $d\hat z/d\log c=s\,\hat z$. The second is the Euler residual $w\cdot g$ of the invariant parameters, which vanishes for exact invariance. In the scale-invariant models of v1--v3, the response factor of the weakest channel has median $3\cdot10^{-4}$ and maximum $1.6\cdot10^{-2}$ over 2,176 diagnostic points of 128 runs. In these models $|w\cdot g|\le5\cdot10^{-5}\,\|w\|\|g\|$, and the Euler residual contributes at most $2.3\cdot10^{-3}$ of the norm balance. In the CIFAR-100 runs the response factor stays below $1.1\cdot10^{-3}$, and the Euler residual of each group is at most $2.4\cdot10^{-3}\,\|w_a\|\|g_a\|$ (median $3\cdot10^{-7}$). In the practical ResNet the weakest channel is far from invariant. Its response factor exceeds $10^{-2}$ at 97\% of the points (median 0.36; dead channels in 8 of 31 runs). The joint Euler residual of all invariant groups stays below $5\cdot10^{-3}\,\|w\|\|g\|$ and contributes up to 12\% (median 0.8\%) of the norm balance. This aggregate can hide cancellations between groups. The response factor is not a bound on the residual of $w_a\cdot g_a=0$, which also involves downstream derivatives. We therefore report both.

Unless stated otherwise, a run of v1--v3 has $\mathrm{clip}(10/\gamma,\,2\cdot10^5,\,10^6)$ steps (the radius relaxes at rate $\approx4\gamma$). Diagnostics are taken at 16 points (32 in v3 and v4), evenly spaced in the last 40\% of the run. Campaign~v1 (119 runs): thin ResNet with SGD ($\lambda=0.01$, $\eta\in[10^{-3},10]$, 13 values, 2 seeds), EMA momentum ($\lambda=10^{-3}$, $\eta\in[10^{-2},10^2]$), AdamW ($\lambda=0.1$, $\eta\in[10^{-4},10^{-1}]$), symmetry pairs at fixed $\eta\lambda$ for SGD and AdamW, the SI ConvNet ($2\cdot10^6$ steps), and the practical ResNet with SGD and AdamW. Campaign~v2 (28 runs) addresses the finite-step remainder, with ReLU against Softplus $\tau\in\{0.5,0.05\}$. It comprises the thin ResNet with SGD at $\eta\lambda\in\{3\cdot10^{-5},10^{-4},10^{-3}\}$ (2 seeds), EMA and AdamW at $\eta\lambda\in\{10^{-4},10^{-3}\}$, and the practical ResNet with SGD at $\eta\lambda=10^{-4}$. Campaign~v3 (12 runs): EMA momentum in the thin ResNet ($\lambda=10^{-3}$, $\eta\in\{0.1,1\}$, ReLU and Softplus 0.5, 2 seeds) and in the practical ResNet ($\lambda=10^{-2}$, $\eta\in\{0.01,0.1\}$), $2\cdot10^5$ steps. Campaign~v4 (16 runs): the thin-ResNet settings of v3 on CIFAR-100, with the fixed head of Frobenius norm 10 of \citet{sadrtdinov2025idealgas}. The 100 rows of this head have norm $\approx1$ instead of $\approx3.2$. This keeps the logits small (training loss 3.5--3.9, test accuracy 19--29\%). We record the head scale rather than retune it. Differences between v3 and v4 therefore reflect the dataset and the head scale together. The accuracies are context, not a benchmark. Runs of $4\cdot10^5$ steps (seed 0) showed ReLU at $\gamma=10^{-4}$ still sharpening. In this setting $\|g\|^2$, $\Xi_H$ and $B_H$ rose by 10--59\% over the window (3.1--3.5 standard errors each). All settings were therefore rerun with $2\cdot10^6$ steps (3 seeds), the CIFAR-100 length of \citet{sadrtdinov2025idealgas}. Below, v4 denotes these runs. Altogether the four campaigns comprise 175 runs. In the 28 runs of v3 and v4, all with gradient memory, every term of the exact identities is measured directly: 24 production runs, twelve on each dataset, and the four shorter CIFAR-100 runs.

The estimators are described in \cref{app:estimators}. Here $\|g\|^2$, $\Tr\Sigma$, $\Tr\Sigma H$ and the Hessian Gram matrix in $\mathrm{span}(g,m,\theta)$ come from 400 independent batch gradients and 64--128 Hessian--vector products on fresh batches. The finite-sample biases are removed. As a check, the Gram entry $g^\top Hw$ equals $-\|g\|^2$ by \eqref{eq:si}. Over 1,920 diagnostic points of the SI models, the ratio has median 0.998 (10--90\%: 0.955--1.045). The top three Hessian eigenpairs come from Lanczos on 8 fixed batches. Their Rayleigh quotients are re-measured on 32 fresh batches, and $\Rtop$ uses the re-measured value. This value is about 0.79 of the fixed-batch value (median). Mode variances come from 1,000 weight snapshots, taken every 25 steps, projected on the top eigenvectors. The remainder \eqref{eq:remainder} is computed in float64 for $f\in\{1,\frac12,\frac14,\frac18,\frac1{16}\}$, with an evaluation batch independent of the step. It is averaged over 128 draws of the step per diagnostic point. In v3 and v4, for 64 draws of the step per point, the loss, gradient and Hessian--vector products of one evaluation batch are computed in float64 at \emph{both} ends of the step the optimizer would take. This gives $\Rem(1)$, $\Rem'(1)$, $\cA$, $\cK$ and $\DH$ directly. The drifts come from the trajectory (\cref{app:estimators}). The v3 and v4 uncertainties are joint (block bootstrap, \cref{app:estimators}).

\subsubsection{Finite steps without memory: edge term, radius and mode temperatures}\label{sec:v1}

\begin{figure}[t]
\centering
\includegraphics[width=\linewidth]{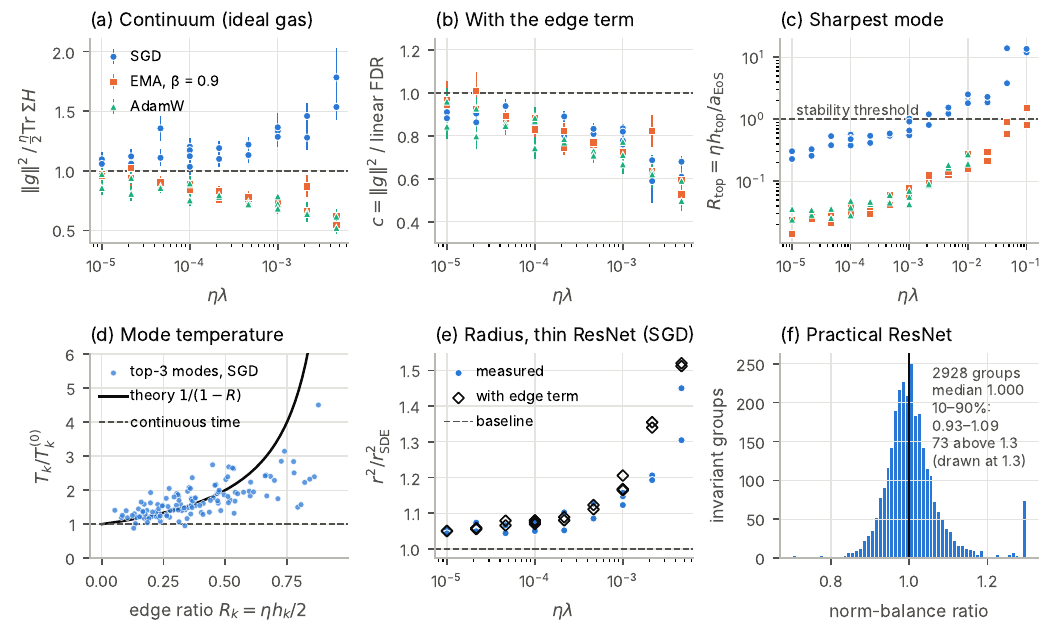}
\caption{Campaign v1 (thin SI ResNet-18 unless stated). (a)~$\|g\|^2$ relative to the continuous-time (ideal-gas) value $\frac\eta2\Tr\Sigma H$ (``SDE''). It is above 1 for SGD and below 1 with memory and AdamW. (b)~Relative to the linear FDR \eqref{eq:fdr} with the edge term. SGD comes down to 0.8--0.9 at small $\eta\lambda$. Memory and AdamW are unchanged because their edge term is negligible. (c)~Edge ratio of the sharpest mode (curvature re-measured on fresh batches). (d)~Variance of the top-3 modes (three SGD models) relative to the continuous-time value, against $1/(1-R)$. (e)~Radius of the thin ResNet with SGD, measured and predicted with the edge term, relative to the measured-noise baseline $r^2_{\mathrm{SDE}}$. (f)~Practical ResNet-18: ratio of the two sides of the norm balance \eqref{eq:radial-sgd} for 2,928 invariant groups (per channel and joint) in 12 runs.}
\label{fig:v1}
\end{figure}

\begin{table}[t]
\centering
\caption{Thin SI ResNet-18, medians per $\eta\lambda$ bin. ``SDE'': $\|g\|^2/(\frac\eta2\Tr\Sigma H)$. ``Edge'': share of the edge term $\frac\eta\aeos g^\top Hg$ in the linear FDR. $c$: $\|g\|^2$ relative to the linear FDR \eqref{eq:fdr}. $\Rtop$: edge ratio of the sharpest mode. Runs per bin: SGD 4, 6, 4, 4, 4; EMA 4, 4, 4, 2, 4; AdamW 4, 5, 4, 3, 4.}
\label{tab:v1}
\small
\begin{tabular}{lcccccccccc}
\toprule
& \multicolumn{4}{c}{SGD} & \multicolumn{3}{c}{EMA momentum} & \multicolumn{3}{c}{AdamW}\\
\cmidrule(lr){2-5}\cmidrule(lr){6-8}\cmidrule(lr){9-11}
$\eta\lambda$ & SDE & Edge & $c$ & $\Rtop$ & SDE & $c$ & $\Rtop$ & SDE & $c$ & $\Rtop$\\
\midrule
$1$--$2\cdot10^{-5}$ & 1.08 & 17\% & 0.89 & 0.28 & 0.97 & 0.95 & 0.02 & 0.90 & 0.89 & 0.03\\
$5\cdot10^{-5}$--$10^{-4}$ & 1.15 & 23\% & 0.88 & 0.49 & 0.89 & 0.88 & 0.03 & 0.87 & 0.86 & 0.04\\
$2$--$5\cdot10^{-4}$ & 1.16 & 26\% & 0.82 & 0.54 & 0.78 & 0.76 & 0.04 & 0.76 & 0.75 & 0.05\\
$10^{-3}$ & 1.35 & 36\% & 0.80 & 0.79 & 0.74 & 0.72 & 0.07 & 0.72 & 0.71 & 0.06\\
$2$--$5\cdot10^{-3}$ & 1.50 & 50\% & 0.64 & 1.22 & 0.64 & 0.61 & 0.13 & 0.63 & 0.61 & 0.13\\
\bottomrule
\end{tabular}
\end{table}

\begin{observation}[Continuum limit and edge term]\label{obs:v1}
(i)~With SGD, $\|g\|^2$ exceeds the continuous-time value by 8--50\%, increasingly with $\eta\lambda$. At the same time the sharpest mode moves towards the threshold ($\Rtop=0.28\to1.22$; \cref{fig:v1}, \cref{tab:v1}). The edge term carries 17--50\% of the linear FDR. With it, $c=0.80$--$0.89$ for $\eta\lambda\le10^{-3}$. The edge term thus removes the excess. The remaining 10--20\% is the finite-step remainder (\cref{sec:v2}). (ii)~With memory and with AdamW the sharpest mode stays far from the threshold ($\Rtop\le0.13$), and the edge term is negligible. In these runs $\|g\|^2$ is 3--37\% \emph{below} the continuous-time value. The ideal gas thus fails in opposite directions with and without memory.
\end{observation}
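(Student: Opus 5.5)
The statement is empirical, so the plan is to establish it by measurement and to make sure each quantity it quotes is the one the theory defines. The unbiased-estimator structure of \cref{app:estimators} is used throughout. First, at every diagnostic point of campaign v1, I would estimate $\|g\|^2$, $\Tr\Sigma H$ and $g^\top Hg$ from the 400 independent batch gradients and the Hessian--vector products, with the finite-sample biases removed. From these I would form three quantities: the continuous-time ratio $\|g\|^2/(\frac\eta2\Tr\Sigma H)$; the edge share $(\eta/\aeos)\,g^\top Hg$ divided by the right-hand side of \eqref{eq:fdr}; and $c$ as $\|g\|^2$ over that full right-hand side. Medians would then be taken per $\eta\lambda$ bin. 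By \cref{cor:maxwell}, these ratios depend on $(\eta,\lambda)$ only through $\gamma$, so binning by $\eta\lambda$ is legitimate. The symmetry pairs at fixed $\eta\lambda$ in v1 serve as a direct check of that reduction before any pooling. In parallel I would measure $\Rtop$ from the re-measured Rayleigh quotient, using $\aeos=2$ for SGD and $\aeos=38$ for $\beta=0.9$.

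For part (i), the argument is an accounting through \eqref{eq:fdr}. The excess over the continuous-time value equals the edge share plus the residual $c-1$. The claim ``the edge term removes the excess'' therefore amounts to two checks. The edge share must grow with $\eta\lambda$ together with $\Rtop$. And $c$ must lie on the opposite side of 1, at 0.80--0.89, rather than above it. To attribute the remaining 10--20\% to the finite-step remainder, I would invoke \eqref{eq:energy-sgd}, which gives $c=(1+\rho_0)/[1+\gamma(1+\rho_0)]$ at stationarity. This reduces the attribution to measuring $\rho_0<0$, which is deferred to \cref{sec:v2}.

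For part (ii), I would show that the edge term is negligible with memory. Its coefficient is $\eta/\aeos$ with $\aeos=38$, equivalently $\kappa B_H$ versus $\Xi_H$ in $c$. Combined with $\Rtop\le0.13$, this bounds its share by roughly $\Rtop$ times a ratio of order one, i.e.\ a few per cent. For AdamW I would apply the same bound in the frozen-preconditioner coordinates. The sign reversal ($\|g\|^2$ below the continuous-time value) then follows from $c<1$, which by \eqref{eq:decomp} and \eqref{eq:transport} reflects $U<1$ and the amplified remainder terms.

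The main obstacle I expect is stationarity in the largest-$\eta\lambda$ SGD bin, where $\Rtop>1$. There the harmonic FDR \eqref{eq:fdr} is formally outside its hypotheses, so $c$ is only a diagnostic ratio, not a test of that relation. I would flag this bin explicitly and check the drift $d_L$ through $\cE=\cR-d_L/Q$ before quoting medians.
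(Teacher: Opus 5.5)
Your plan is essentially the paper's. The observation rests on the per-bin medians of campaign v1 in \cref{tab:v1} and \cref{fig:v1}: SDE ratio, edge share, $c$ and the re-measured $\Rtop$, all from the bias-corrected estimators of \cref{app:estimators}. The residual $1-c$ is attributed, through the SGD loss balance $c=(1+\rho_0)/[1+\gamma(1+\rho_0)]$, to the remainder measured separately in campaign v2.

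One correction to your accounting: the ratios combine multiplicatively, not additively. With $e$ the edge share, $\|g\|^2/(\frac\eta2\Tr\Sigma H)=c/(1-e)$, so the excess over the continuous-time value is $(e+c-1)/(1-e)$ rather than $e+(c-1)$. This matters in the top bin, where $e\approx0.5$, but it does not affect your two qualitative checks.
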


\paragraph{Radius.} For the thin ResNet with SGD, $r^2/r^2_{\mathrm{SDE}}=1.04$--$1.09$ for $\eta\lambda\le10^{-4}$, $1.12$--$1.16$ at $10^{-3}$ and $1.19$--$1.45$ at $2$--$5\cdot10^{-3}$. The prediction with the edge term is \eqref{eq:radial-sgd} with $\|g\|^2$ from \eqref{eq:fdr}. This is a consistency check at the level of second moments. The prediction is within 0.97--1.01 of the measurement up to $5\cdot10^{-4}$, 0.93--0.99 at $10^{-3}$ and 0.86--0.96 beyond. The exact one-step norm balance (\cref{prop:radial}) holds with median 1.004 (10--90\%: 0.981--1.033) over the 114 runs that learn (test accuracy $>0.2$). For the practical model it takes the general form \eqref{eq:radial-adam} with the cross term. The balance also holds for each invariant group of the practical ResNet (median 1.000, 10--90\% 0.93--1.09 over 2,928 groups: 240 channels and 4 jointly invariant groups per run).

\paragraph{Symmetry.} At fixed $\eta\lambda$ the radius scales as \cref{prop:symmetry} predicts. The ratios $r^2/\eta$ for SGD and $r^2/\eta^2$ for AdamW agree within 0.8--4.3\% over $\eta$ ranging by a factor 100. All these runs use seed 0, so the test is that the radius forgets its initial scale. Runs at the \emph{same} $(\eta,\lambda)$ with different seeds differ by 3--10\% in $r^2$ with SGD and memory (13\% for SGD at $\eta\lambda=10^{-2}$), against standard errors of 0.1--1\%. AdamW seeds agree within 2\%. Either the windows are too short for full equilibration, or independently initialized runs reach different stationary states. In either case \cref{cor:maxwell} applies to corresponding states.

\begin{table}[t]
\centering
\caption{Mode test, SGD, $\eta\lambda\le10^{-3}$ (thin ResNet, ConvNet and practical ResNet). Variance of the weight snapshots along the top-3 Hessian eigenvectors, relative to the discrete prediction $\eta\sigma_k^2/(2\mu_k(1-R_k))$ (``theory'') and to the continuous-time value $\eta\sigma_k^2/(2\mu_k)$. Here $\sigma_k^2=v_k^\top\Sigma v_k$.}
\label{tab:modes}
\small
\begin{tabular}{lcccc}
\toprule
$R_k$ & modes & var/theory, median & mean $\pm$ s.e. & var/continuum, median\\
\midrule
$[0,0.3)$ & 46 & 1.07 & $1.09\pm0.03$ & 1.32\\
$[0.3,0.5)$ & 47 & 0.98 & $1.03\pm0.03$ & 1.68\\
$[0.5,0.7)$ & 24 & 0.82 & $0.87\pm0.04$ & 1.96\\
$[0.7,0.95)$ & 11 & 0.50 & $0.50\pm0.06$ & 2.39\\
\bottomrule
\end{tabular}
\end{table}

\paragraph{Mode temperatures.} The top modes are hotter than the continuous-time prediction, by factors 1.3 to 2.4 that grow with $R$ (\cref{tab:modes}). The discrete law $T^{(0)}/(1-R)$ accounts for this up to $R\approx0.5$. Above, the modes are colder than predicted, by a factor 2 at $R=0.7$--$0.95$. The harmonic theory does not contain this saturation.

\paragraph{Other models.} In the SI ConvNet, $\|g\|^2/\mathrm{SDE}=1.18$--$2.51$. Its linear-FDR ratio falls from 1.10 to 0.46 as $\Rtop$ grows from 0.29 to 1.89. In the practical ResNet with SGD, $\|g\|^2/\mathrm{SDE}=0.68$--$1.51$ and the FDR ratio is 0.32--0.89. Up to 77 of 244 groups are dead there at the largest decay rates. In the thin ResNet with SGD, test accuracy is 0.66--0.71 for $\eta\lambda\le10^{-3}$ and collapses beyond. The range beyond is the third regime of \citet{kodryan2022three}, where $\Rtop>1$ and the harmonic theory does not apply.

\subsubsection{The finite-step remainder: activation boundaries and smooth nonlinearity}\label{sec:v2}
\begin{figure}[t]
\centering
\includegraphics[width=\linewidth]{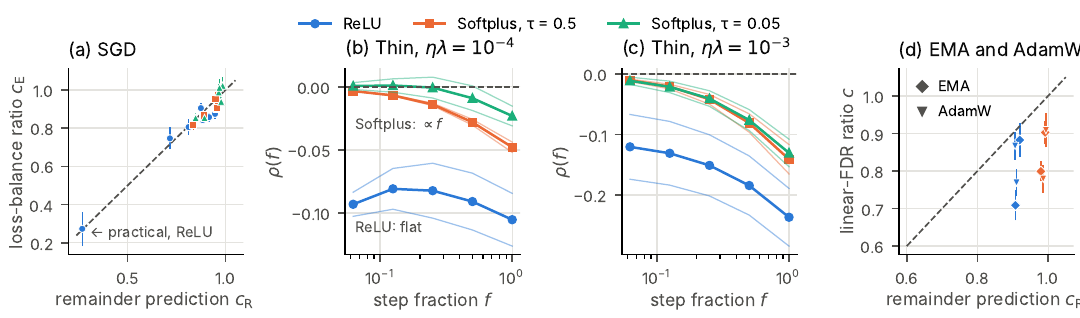}
\caption{Campaign v2. (a)~SGD: the remainder prediction $\cR$ against the measured loss-balance ratio $\cE$, for all 20 SGD runs (thin and practical ResNet; ReLU and Softplus). (b,c)~Remainder profiles $\rho(f)=\Rem(f)/(\frac{f^2}2\Delta\theta^\top H\Delta\theta)$ of the thin ResNet with SGD at $\eta\lambda=10^{-4}$ and $10^{-3}$ (thin lines: individual runs; thick: means over seeds). ReLU: $\rho$ flat in $f$ (activation boundaries, $\Rem\propto f^2$). Softplus: $\rho\propto f$ ($\Rem\propto f^3$). (d)~Memory and AdamW: the linear-FDR ratio lies well below the single-step remainder prediction.}
\label{fig:v2}
\end{figure}

\begin{table}[t]
\centering
\caption{Campaign v2, SGD, seed means (2 seeds for the thin ResNet, 1 for the practical one). $\cE$: loss-balance ratio. $\cR=1+\rho(1)$: its prediction from the remainder. $\rho(1)$, $\rho(\frac1{16})$: remainder profile. Flips: fraction of units that change sign along the full step. $\chi$: RMS pre-activation change in units of $\tau$.}
\label{tab:v2}
\small\setlength{\tabcolsep}{4.5pt}
\begin{tabular}{llccccccccc}
\toprule
model & $\eta\lambda$ & activation & $\cE$ & $\cR$ & $\rho(1)$ & $\rho(\frac1{16})$ & $\Rtop$ & flips & $\chi$ & test acc.\\
\midrule
thin & $3\cdot10^{-5}$ & ReLU & 0.858 & 0.914 & $-0.086$ & $-0.071$ & 0.36 & 1.3\% & -- & 0.678\\
 & & Softplus 0.5 & 0.909 & 0.959 & $-0.041$ & $-0.004$ & 0.31 & 1.5\% & 0.11 & 0.714\\
 & & Softplus 0.05 & 0.963 & 0.971 & $-0.029$ & $-0.002$ & 0.31 & 1.4\% & 1.02 & 0.680\\
\addlinespace
 & $10^{-4}$ & ReLU & 0.881 & 0.895 & $-0.105$ & $-0.093$ & 0.48 & 2.1\% & -- & 0.706\\
 & & Softplus 0.5 & 0.929 & 0.952 & $-0.048$ & $-0.003$ & 0.37 & 2.5\% & 0.17 & 0.731\\
 & & Softplus 0.05 & 1.017 & 0.977 & $-0.023$ & $+0.001$ & 0.33 & 2.4\% & 1.71 & 0.703\\
\addlinespace
 & $10^{-3}$ & ReLU & 0.776 & 0.763 & $-0.237$ & $-0.120$ & 0.91 & 6.3\% & -- & 0.699\\
 & & Softplus 0.5 & 0.838 & 0.859 & $-0.141$ & $-0.010$ & 0.70 & 6.5\% & 0.42 & 0.732\\
 & & Softplus 0.05 & 0.854 & 0.870 & $-0.130$ & $-0.011$ & 0.50 & 5.6\% & 3.81 & 0.710\\
\midrule
practical & $10^{-4}$ & ReLU & 0.273 & 0.268 & $-0.732$ & $-0.600$ & 0.24 & 3.4\% & -- & 0.801\\
 & & Softplus 0.5 & 0.866 & 0.891 & $-0.109$ & $-0.008$ & 0.45 & 5.2\% & 0.10 & 0.752\\
\bottomrule
\end{tabular}
\end{table}

\begin{observation}[The remainder accounts for the SGD loss balance]\label{obs:closure}
Over the 20 SGD runs (\cref{tab:v2}, \cref{fig:v2}a), the independently estimated terms satisfy the loss balance with an RMS discrepancy $\cR-\cE$ of 0.038 (mean $+0.015$). The median $|z|$-score is 1.12 ($\chi^2/n=1.85$). The correlation is 0.974 over the range $\cE=0.27$--$1.02$ (0.87 without the practical ReLU run). For the two practical runs, which are not fully scale-invariant, both ratios use the general form \eqref{eq:energy-general}. Since $\chi^2/n=1.85$, either our standard errors are about 35\% too small or a small additional effect is present.
\end{observation}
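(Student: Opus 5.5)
The plan is to reduce the observation to the exact loss balance of \cref{prop:energy} and then check it statistically. For every network and every window, \eqref{eq:energy-general} gives the identity $\cE=\cR-d_L/Q$. For the fully scale-invariant thin ResNets I would use the reduced form \eqref{eq:energy-sgd}. For the two practical runs, where $g\cdot w\neq0$ through the head and the affine parameters, I would use the general form with $v=g_B+\lambda\theta$. The expected discrepancy $\cR-\cE$ is therefore $d_L/Q$ plus estimator error, and the claim is the statement that both are small and consistent with one another. The first step is to write the two ratios with the estimators of \cref{app:estimators}. For the thin runs these are $\cE=G/[\frac\eta2(\Xi_H+B_H)]$, up to the factor $(1+\gamma)$, and $\cR=1+\rho(1)$. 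The numerator and denominator of each ratio come from disjoint sets of batches, and the remainder uses an evaluation batch independent of the step.

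The second step bounds the drift term. In each run I would estimate $d_L$ from the training-loss trajectory over the last 40\% of the run, as a per-step slope of $L$. This gives a value of $d_L/Q$ that is $O(10^{-3})$ or smaller in every run. Since this is well below the observed RMS of 0.038, the drift cannot account for the discrepancy. Any residual must then be estimator error or a failure of the assumptions behind \eqref{eq:energy-general}: conditionally unbiased fresh batches, which holds by the sampling with replacement, and $B'$ independent of the step, which holds by construction.

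The third step is the statistics. For each run I would form $z=(\cR-\cE)/\mathrm{se}$, with the joint standard error obtained from the seed and diagnostic-point spread of both ratios. Over the 20 runs I would then compute the RMS discrepancy, the mean, the median $|z|$, $\chi^2/n$ and the Pearson correlation between $\cR$ and $\cE$, once with and once without the practical ReLU run. That run sets the lower end $\cE\approx0.27$ and therefore dominates the correlation.

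The main obstacle is interpreting $\chi^2/n=1.85>1$. Because \eqref{eq:energy-general} is exact, an excess must come from one of two sources. Either the standard errors are underestimated, for instance by correlations between diagnostic points or by the finite-sample bias corrections of $\Xi_H$ and $B_H$, or there is a small unmodelled effect such as residual nonstationarity or the Euler residual $w\cdot g$. I would first rescale the standard errors by $\sqrt{1.85}\approx1.36$ and check that the $z$-scores become consistent with unit variance. I would then check whether the runs with large $|z|$ coincide with those showing the largest measured drift or Euler residual. If the excess cannot be localized in this way, I would report both possibilities, as the statement does.
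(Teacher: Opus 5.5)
Your skeleton matches the paper's. \cref{prop:energy} makes $\cE=\cR-d_L/Q$ an identity, the two ratios are estimated independently with the estimators of \cref{app:estimators}, the general form \eqref{eq:energy-general} is used for the two practical runs, and the observation summarizes the differences over the 20 runs.

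The substantive departure is your step~2.
\begin{itemize}
\item \emph{The drift is not measured in v2.} The paper does not estimate $d_L$ separately there. It treats the agreement $\cE\approx\cR$ as a joint check of stationarity and of the estimators. Consistently with this, its reading of $\chi^2/n=1.85$ leaves ``a small additional effect'' open.
\item \emph{Your bound is assumed, not established.} The statement that $d_L/Q$ is $O(10^{-3})$ in every run is an anticipated result. The conclusion that drift cannot explain the discrepancy is therefore stronger than the evidence supports. Measuring $d_L$ as a least-squares slope over the diagnostic states, as the paper does for the window drifts in v3 and v4, is the right procedure, but its outcome must be reported.
\item \emph{It conflicts with step~4.} The identity holds for window averages, with $d_L$ the window drift. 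Once $d_L$ is measured and found small, residual nonstationarity has no further route into $\cR-\cE$. It cannot then reappear among your candidate explanations for the excess $\chi^2$.
\end{itemize}

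Two smaller points.
\begin{itemize}
\item \emph{Standard errors.} The paper's per-run $z$-scores use delta-method errors from the scatter of that run's 16 diagnostic points, with no seed component. The 20 runs enter individually, and the practical runs have a single seed. The neglected correlation between consecutive diagnostic points is exactly why the paper suspects its errors are too small.
\item \emph{The rescaling check is circular.} Rescaling all errors by $\sqrt{1.85}$ gives $\chi^2/n=1$ by construction. Only the shape of the $z$ distribution, for instance the median $|z|$, could be tested that way.
\end{itemize}
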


\begin{observation}[Activation boundaries and smooth nonlinearity]\label{obs:kink}
In the thin ResNet with ReLU, $\rho(f)$ is flat in $f$ at small $\eta\lambda$ (seed means $-0.07$ to $-0.09$ at $f=\frac1{16}$ and $-0.09$ to $-0.11$ at $f=1$). The remainder is thus $\propto f^2$, an activation-boundary term. With Softplus, $\rho$ halves with every halving of $f$ ($\Rem\propto f^3$) and vanishes as $f\to0$ ($|\rho(\frac1{16})|\le0.011$). This holds for $\tau=0.5$ ($\chi\approx0.1$--$0.4$) and for $\tau=0.05$ ($\chi\approx1$--$3.8$, steps much larger than $\tau$). \Cref{lem:kink} predicts this for any fixed width. At $\eta\lambda=10^{-3}$ ReLU has both: a boundary plateau $-0.12$ and a smooth part reaching $-0.24$ at $f=1$. The singular curvature is \emph{negative} here. The quadratic model thus overestimates the curvature along the step by 7--12\%.
\end{observation}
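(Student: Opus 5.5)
The observation is empirical, so the plan is to turn \cref{lem:kink} into a scaling test of the measured remainder profile $\rho(f)$. I would then read the entries of \cref{tab:v2} against that test and check which of the two predicted regimes each run falls into. First I would sum \eqref{eq:kink} over units and examples, with $\delta$ replaced by $f\delta$ along the step. For ReLU this gives $\E\Rem(f)=\frac{f^2}{2}\sum p(0)\delta^2+O(f^3)$. The profile is therefore $\rho(f)=\rho_{\mathrm{sing}}+O(f)$, where the plateau $\rho_{\mathrm{sing}}$ is the singular curvature $\Delta\theta^\top\Hsing\Delta\theta$ relative to $\Delta\theta^\top H\Delta\theta$. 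For Softplus, the second-order term of the regular Hessian is exact in expectation. Hence $\rho(f)=a f+O(f^2)$, and the bound $|\delta|^3\|p'\|_\infty/6$ holds uniformly in $\tau$. The bound follows because $\phi_\tau''\ge0$ has unit mass, so the smooth prediction does not depend on whether $\chi=\mathrm{RMS}(\delta)/\tau$ is small or large.

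Second, I would evaluate the estimator of \eqref{eq:remainder} in float64 at $f\in\{1,\frac12,\frac14,\frac18,\frac1{16}\}$. Each point averages 128 step draws, and each remainder uses an evaluation batch independent of the step. The test statistics are the ratios $\rho(f)/\rho(f/2)$. These should be $\approx1$ for a boundary term and $\approx2$ for a cubic one, and the limit $\rho(\frac1{16})$ should tend to zero for Softplus. I would make this comparison for ReLU and for Softplus at $\tau=0.5$ and $\tau=0.05$, at matched $\eta\lambda$ and with two seeds. The $\tau=0.05$ runs, where $\chi\approx1$--$3.8$, are the decisive case. In them the steps are much larger than the smoothing width, yet the lemma still predicts no plateau.

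Third, at $\eta\lambda=10^{-3}$ I would split the ReLU profile into a plateau and a smooth part. The plateau is read off at small $f$ ($\rho(\frac1{16})\approx-0.12$). The difference $\rho(1)-\rho(\frac1{16})$ is the smooth contribution growing linearly in $f$, and it should have the size of the Softplus remainder at the same setting. The sign of the plateau gives the sign of the singular curvature along the step. Its magnitude, $0.07$--$0.12$ across settings, is the fraction by which the quadratic model overestimates the curvature along the step.

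I expect the main obstacle to be statistical resolution at small $f$. There $\Rem(f)$ is a difference of $O(1)$ losses with cancellation down to $O(f^3)$, so the float64 evaluation and the seed-to-seed spread decide whether the Softplus limit $|\rho(\frac1{16})|\le0.011$ is distinguishable from a small plateau. If the resolution is marginal, I would check the $f^2$ and $f^3$ laws through the halving ratios at intermediate $f$, where they are better resolved. As a cross-check of the plateau sign, I would use the independent population-minus-regular estimate of $\Hsing$ and the sign found in \cref{prop:relu}.
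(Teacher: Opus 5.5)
Your route matches the paper's. \cref{lem:kink} supplies the two scaling laws: a flat $\rho$ for ReLU, and $\rho\propto f$ for Softplus at any fixed $\tau$, uniformly in $\tau$. The observation is then read off the $\rho(1)$, $\rho(\frac1{16})$ and $\chi$ columns of \cref{tab:v2} and the profiles of \cref{fig:v2}b,c. This includes the plateau-plus-linear reading at $\eta\lambda=10^{-3}$ and the identification of $|\rho(\frac1{16})|=0.07$--$0.12$ with the overestimate of the curvature along the step. Two steps of your argument should be corrected, although neither changes the reading of the data.

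\textbf{The summed lemma has no sign.} As you write it, $\E\Rem(f)=\frac{f^2}{2}\sum p(0)\delta^2+O(f^3)$ is nonnegative term by term, so it predicts a positive plateau, the opposite of what is measured. In a network the Taylor remainder of an activation enters the loss multiplied by the derivative of the loss with respect to that unit's output, $\partial L/\partial\phi$, taken on the boundary $z=0$. In \cref{prop:relu} this weight is the residual times the output weight. The leading term is therefore $\frac{f^2}{2}\sum p(0)\,\E[(\partial L/\partial\phi)\,\delta^2\,|\,z=0]$. This weight, not the lemma, fixes the sign, which is why the paper states the negativity as a measured fact.

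\textbf{The cross-check via \cref{prop:relu} is not available.} That proposition concerns Gaussian-input committees and gives no universal sign: its committee eigenvalues have both signs, and the weight-normalized CNN of \cref{app:pilot} has a positive plateau. Moreover, for the CIFAR-10 ResNet there is no population Hessian independent of the remainder. The plateau is itself the estimate of $\Delta\theta^\top\Hsing\Delta\theta$.

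\textbf{The plateau-plus-smooth split is only a reading.} Nothing requires the linear part of the ReLU profile at $\eta\lambda=10^{-3}$ to match the Softplus remainder, since the networks differ (e.g.\ $\Rtop=0.91$ against $0.50$--$0.70$). That part also contains the boundary cubic term $-\frac{\delta^3}{6}\int p'\phi''$ of \eqref{eq:kink}. The split is therefore a reading of the profile rather than an identification of the smooth nonlinearity, which is how the paper uses it.
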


In the practical ResNet with ReLU the effect is extreme. Here $\rho$ lies between $-0.56$ and $-0.73$ at every $f$, and $\cE=0.273$ against $\cR=0.268$. The quadratic model overestimates the curvature along the step by a factor of about four, and the remainder accounts for it. With Softplus, $\rho(\frac1{16})=-0.008$ and $\cE=0.866$ against $\cR=0.891$. In this ReLU run the Rayleigh quotient along the mean gradient corresponds to $\eta h_g/2=2.66$, while Lanczos gives $\Rtop=0.24$. This is impossible for a fixed matrix. Lanczos on eight batches may miss the top of the averaged spectrum, or $\|g\|^2$ is underestimated. We exclude this run from spectral statements.

\subsubsection{Gradient memory: the departure and its direct accounting}\label{sec:momentum}
With memory and with AdamW the linear-FDR ratio lies well below the single-step remainder prediction (\cref{fig:v2}d). For the eight v2 runs, $\cR=0.91$--$0.99$ against $c=0.71$--$0.91$. Over the 30 EMA runs of v1 and v2, the first-moment relations $\E[g\cdot m]=-\kappa\E\|g\|^2$ (\cref{thm:finite}) and $\E[g^\top Hm]=-\kappa\E g^\top Hg$ hold within the noise (median ratios 0.98 and 0.95). In contrast, $\E[m^\top Hm]$ falls below its linear value by about the factor by which $\|g\|^2$ falls below the linear FDR (correlation 0.88 over the 24 runs with $\eta\lambda\le10^{-2}$). In addition, the per-mode factor of \cref{lem:geo} is $\ge0.991$ for $\eta\lambda\le10^{-3}$. By \eqref{eq:F2}, the departure therefore sits in the update curvature $U$.

\begin{figure}[t]
\centering
\includegraphics[width=\linewidth]{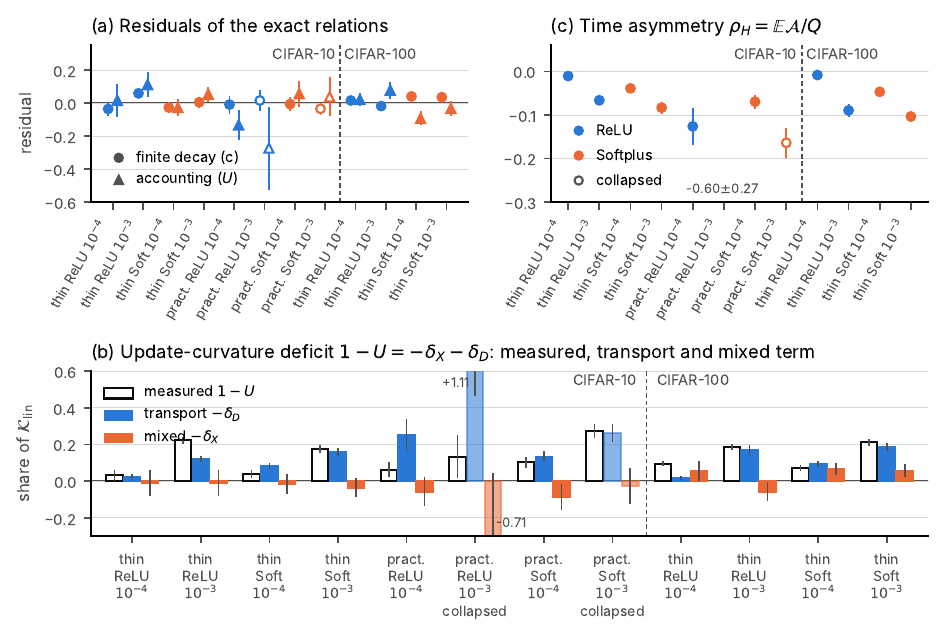}
\caption{Direct endpoint measurements with gradient memory (campaigns v3, CIFAR-10, and v4, CIFAR-100; $\beta=0.9$). One point per setting; seed means for the thin ResNet. Open symbols and faded bars: runs with collapsed channels. (a)~Residuals of the finite-decay relation in window form (as in \cref{tab:endpoint}) and of the accounting \eqref{eq:accounting} in units of $\cK_{\mathrm{lin}}$. (b)~Accounting of the update-curvature deficit: measured $1-U$, transport $-\delta_D$ and mixed term $-\delta_X$. (c)~Signed time asymmetry $\rhoH=\E\cA/Q$. Error bars: one standard error, as in \cref{tab:endpoint} (joint bootstrap). For seed means, the error is the larger of the propagated and the seed-scatter error.}
\label{fig:endpoint}
\end{figure}

\begin{table}[t]
\centering
\caption{Direct endpoint measurements with gradient memory ($\beta=0.9$; thin SI ResNet-18: means of two seeds on CIFAR-10, three on CIFAR-100, with the larger of the propagated and the seed-scatter error; practical ResNet-18: one run each; joint bootstrap standard errors in parentheses, in units of the last digit). Residual of \eqref{eq:F2} (thin) or \eqref{eq:finite} with $\lambda\bar e$ (practical), relative to its left-hand side; deficit $1-U$, transport $-\delta_D$ of \eqref{eq:accounting} and the gap $(1-U)-(-\delta_D)$ it leaves; $\rho_0$, $\rho_1$, $\rhoH$ of \eqref{eq:rhos}, ratios averaged over runs with $\rho_0$ resolved ($>3$ s.e.; -- if the mean $\rho_0$ is not resolved or changes sign between seeds). Softpl.: Softplus 0.5. $^*$Collapsed channels.}
\label{tab:endpoint}
\footnotesize\setlength{\tabcolsep}{2.4pt}
\begin{tabular}{lcccccccc}
\toprule
setting, $\gamma$ & $c$ & residual & $1-U$ & $-\delta_D$ & gap & $\rho_1/\rho_0$ & $\rhoH/\rho_0$ & $\rhoH$\\
\midrule
\multicolumn{9}{l}{\emph{CIFAR-10}}\\
thin ReLU $10^{-4}$ & $0.830(25)$ & $-0.035(41)$ & $0.032(29)$ & $0.024(14)$ & $0.008(30)$ & $2.02(8)$ & $0.1(1)$ & $-0.011(7)$\\
thin ReLU $10^{-3}$ & $0.715(21)$ & $0.059(33)$ & $0.224(20)$ & $0.122(18)$ & $0.102(27)$ & $2.17(6)$ & $0.5(1)$ & $-0.066(11)$\\
thin Softpl. $10^{-4}$ & $0.925(23)$ & $-0.027(33)$ & $0.039(23)$ & $0.081(16)$ & $-0.042(34)$ & $2.84(3)$ & $5.0(2)$ & $-0.039(8)$\\
thin Softpl. $10^{-3}$ & $0.808(18)$ & $0.006(33)$ & $0.178(22)$ & $0.159(23)$ & $0.019(34)$ & $2.90(2)$ & $5.4(2)$ & $-0.083(14)$\\
pract. ReLU $10^{-4}$ & $0.791(23)$ & $-0.009(54)$ & $0.062(41)$ & $0.253(88)$ & $-0.191(112)$ & $2.12(10)$ & $0.9(2)$ & $-0.126(42)$\\
pract. ReLU$^*$ $10^{-3}$ & $0.651(40)$ & $0.016(66)$ & $0.134(118)$ & $1.114(651)$ & $-0.980(768)$ & -- & -- & $-0.599(274)$\\
pract. Softpl. $10^{-4}$ & $0.875(22)$ & $-0.006(43)$ & $0.103(30)$ & $0.134(31)$ & $-0.031(46)$ & $2.83(3)$ & $5.0(2)$ & $-0.069(16)$\\
pract. Softpl.$^*$ $10^{-3}$ & $0.678(49)$ & $-0.034(38)$ & $0.273(39)$ & $0.262(49)$ & $0.011(58)$ & $3.13(18)$ & $6.7(11)$ & $-0.164(34)$\\
\midrule
\multicolumn{9}{l}{\emph{CIFAR-100}}\\
thin ReLU $10^{-4}$ & $0.910(26)$ & $0.016(21)$ & $0.094(13)$ & $0.017(13)$ & $0.077(19)$ & -- & -- & $-0.008(7)$\\
thin ReLU $10^{-3}$ & $0.782(21)$ & $-0.018(28)$ & $0.188(17)$ & $0.168(27)$ & $0.020(38)$ & -- & -- & $-0.089(15)$\\
thin Softpl. $10^{-4}$ & $0.957(18)$ & $0.040(29)$ & $0.070(17)$ & $0.094(15)$ & $-0.024(26)$ & $3.40(7)$ & $8.4(4)$ & $-0.047(7)$\\
thin Softpl. $10^{-3}$ & $0.798(19)$ & $0.036(31)$ & $0.212(21)$ & $0.187(24)$ & $0.025(40)$ & $3.38(7)$ & $8.3(4)$ & $-0.103(12)$\\
\bottomrule
\end{tabular}
\end{table}

\paragraph{Direct endpoint measurements.} Campaigns v3 (CIFAR-10) and v4 (CIFAR-100) evaluate every term of \eqref{eq:finite} and \eqref{eq:transport} at both ends of updates sampled from the optimizer's conditional transition law at the saved states (\cref{fig:endpoint}, \cref{tab:endpoint}).

\begin{observation}[Exact relations and the update curvature]\label{obs:endpoint-exact}
(i)~The finite-decay relation holds in window form, with the drifts of the actual trajectory. All twelve v3 residuals lie within 1.6 standard errors ($\chi^2=8.5$ for 12 runs). The v4 residuals give $\chi^2=12.5$ for 12 runs (largest 2.3 standard errors).
(ii)~The update curvature measured at the endpoints agrees with the Gram estimate of the earlier campaigns. Their difference lies between $-0.007$ and $+0.017$ in v3 and within $\pm0.006$ in v4 ($\chi^2=7.0$ and 4.9 for 12 runs each). The two estimators thus agree within the measured uncertainty, and we find no evidence of a bias of the Gram estimator at this resolution. This supports the interpretation that the discrepancy found in campaign v2 arose from the cubic-remainder reconstruction of the profile rather than from the Gram estimate (\cref{app:cubic}). The v2 runs themselves were not re-measured.
(iii)~The accounting \eqref{eq:accounting} holds with $\chi^2=16.6$ and $20.6$ for 12 runs each (largest residual 2.5 standard errors). Its drift term uses the window drift $d_{\cM}$, a least-squares slope over the diagnostic states (\cref{app:estimators}).
(iv)~The windows are compatible with stationarity at the measured resolution, with the qualifications that follow. The finite-window identities do not require stationarity, so their accuracy does not establish it. We therefore regress the quantities entering the identities ($G$, $\Xi_H$, $B_H$, $\cM$, $Q$, $\cK$, $\E\cA$, $c$, $U$) on time. The largest single drift is 2.7 standard errors. The sum of the 108 squared standardized drifts is 109 in v3 and 124 in v4. These drifts are measured on the same states and are correlated, so the sum is a diagnostic score, not a $\chi^2$ statistic. Calibrated with the joint bootstrap, its 95\% point under the hypothesis of no drift is about 200 ($p=0.37$ and 0.26; \cref{app:estimators}). In v4, however, $G$, $\Xi_H$, $B_H$, $\cM$, $Q$ and $\cK$ rise together. Combined over the 12 independent runs, each of these drifts lies 2.2--3.1 standard errors above zero, with a median rise of 4--19\% over the window. This residual sharpening is a quarter to a half of that in the $4\cdot10^5$-step run. The drifts of $c$ and $U$ are not resolved (1.1 and 0.8 standard errors, combined). The CIFAR-100 seeds also scatter beyond their errors (gap: $\chi^2=18.7$ for 8 degrees of freedom); the seed-mean errors include this scatter. In v3 the change of $U$ between the two halves of the windows exceeds its errors ($\chi^2=23$ for 12 runs, $p\approx0.03$). This is a $2\sigma$ indication that $U$ varies slowly within the windows. In v4 neither $U$ nor $c$ changes between the halves ($\chi^2=6.6$ and 9.5).
\end{observation}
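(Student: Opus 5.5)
Since \cref{obs:endpoint-exact} is an empirical statement, establishing it means building estimators whose residuals against exact identities can be tested, not deriving an analytic bound. The plan starts from the exact identities already proved, \cref{thm:finite} in window form \eqref{eq:finite} and \cref{prop:transport} with its accounting \eqref{eq:accounting}. At each saved diagnostic state of the v3 and v4 runs, I would draw 64 optimizer transitions from the conditional law, with a fresh minibatch for $g_B$ and the memory update taken from the saved $m_t$. For each transition I would evaluate loss, gradient and Hessian--vector products on one independent evaluation batch $B'$ at both $\theta_t$ and $\theta_{t+1}$, in float64. This yields sample-wise $\Rem(1)$, $\Rem'(1)$, $\cA$, $m_{t+1}^\top H(\theta_t)m_{t+1}$ and $m_{t+1}^\top(H_+-H)m_{t+1}$, and hence $\bar\Rem_0$, $\bar\Rem_1$, $\cK$ and $\DH$ as state averages. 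The drifts $d_L$, $d_{gm}$, $d_e$ and $d_{\cM}$ cannot come from single transitions, because they refer to the actual trajectory. I would therefore estimate them as least-squares slopes of the corresponding per-state quantities over the diagnostic times, rescaled to one step.

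Next come the uncertainties, which must be joint: every term of \eqref{eq:finite} and \eqref{eq:accounting} is computed from the same states, so errors are strongly correlated. I would use a block bootstrap over diagnostic states, resampling contiguous blocks so that the serial correlation of the chain is kept. Within each resample I would recompute the full residual, meaning left minus right side normalized by the left side, or in units of $\cK_{\mathrm{lin}}$ for the accounting. Each residual then gets one standard error that already contains the covariances. Items (i) and (iii) follow by forming $z$-scores and $\chi^2$ over the 12 runs per dataset. Item (ii) compares the endpoint $\cK$ with the Gram estimate from the span $(g,m,\theta)$ at the same states, with the difference bootstrapped jointly. For seed means I would take the larger of the propagated error and the seed scatter, since the CIFAR-100 seeds disagree beyond their errors.

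Item (iv) is the delicate one and, I expect, the main obstacle. Agreement of the finite-window identities says nothing about stationarity, so a separate test is needed. For each run I would regress each of the nine quantities on time and standardize the slope. The sum of the 108 squared slopes is not $\chi^2$-distributed because the slopes share states. I would calibrate its null distribution by applying the same block bootstrap to time-detrended data, obtaining a 95\% point near 200 and the quoted $p$-values. Because this omnibus score can hide a coherent drift, I would also combine each quantity across the 12 independent runs, which exposes the common rise of $G,\Xi_H,B_H,\cM,Q,\cK$ in v4. Finally, a half-window split with a $\chi^2$ on the change of $U$ and $c$ tests slow variation within the windows. The risk is that the bootstrap block length, if too short, underestimates errors and inflates all these statistics. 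I would check robustness by doubling the block length before reporting.
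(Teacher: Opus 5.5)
Your plan reproduces the paper's procedure essentially step for step. It has 64 sampled transitions per saved state, evaluated in float64 on one independent batch at both ends, and window drifts taken as least-squares slopes over the diagnostic states. It uses a joint moving-block bootstrap that recomputes every residual per replicate, including the refitted drifts, and takes the larger of the propagated and seed-scatter errors for seed means. Item (iv) is handled as in the paper, by a bootstrap-calibrated drift score supplemented by run-combined drifts, a half-window split and a block-length sensitivity check.

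The paper's refinements are minor:
\begin{itemize}
\item Each replicate also resamples the draws within each state (blocks of four states, original times kept).
\item The null of the drift score comes from the centred replicates. This is essentially equivalent to your detrended bootstrap, because a least-squares slope is linear in the data.
\item $G$, $\Xi_H$, $B_H$, $\E[g^\top Hm_t]$ and $\cM$ enter the window tests from the separate 400-gradient and Gram estimators, not from the endpoint draws, so that (i)--(iii) compare independently estimated terms.
\end{itemize}
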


\begin{observation}[Accounting of the update-curvature deficit]\label{obs:endpoint-account}
The transport contribution is negative in all twelve settings. In seven of the ten settings without collapsed channels it lies 4.4--7.8 standard errors from zero. It lies at 2.9 standard errors in the practical ReLU network at $\gamma=10^{-4}$, and at 1.7 and 1.4 in the thin ReLU networks at $\gamma=10^{-4}$ (\cref{fig:endpoint}b). In the Softplus networks the transport is consistent with the deficit on both datasets. The gap $(1-U)-(-\delta_D)$ of \cref{tab:endpoint} lies within 1.3 joint standard errors of zero in all six Softplus settings. In the thin network on CIFAR-10 at $\gamma=10^{-4}$ the deficit itself is not resolved, and the transport is about twice it. In the ReLU networks a resolved gap remains in one setting per dataset. It is $0.102\pm0.027$ at $\gamma=10^{-3}$ on CIFAR-10 (transport $0.122\pm0.018$ of $0.224\pm0.020$) and $0.077\pm0.019$ at $\gamma=10^{-4}$ on CIFAR-100 (transport $0.017\pm0.013$ of $0.094\pm0.013$). By \eqref{eq:accounting}, this gap equals the mixed contribution $-\delta_X$. On CIFAR-100 its direct estimate is $-\delta_X=0.055\pm0.053$, consistent with the gap but not resolved. The other thin ReLU settings close within the errors. In the practical ReLU network at $\gamma=10^{-4}$ the transport exceeds the deficit. The mixed term is not resolved beyond 2.2 standard errors in any setting (standard errors 0.03--0.10; 0.75 in the collapsed ReLU run), and it limits the accounting. The drift term $d_{\cM}$ contributes less than $10^{-4}$. The individual runs show the same pattern, within the seed scatter noted in \cref{obs:endpoint-exact}. Eleven of the twelve Softplus runs have gaps within 1.5 standard errors of zero, and the twelfth lies at 2.2. The CIFAR-10 ReLU gap lies at 3.3 and 2.0 standard errors in its two seeds. The CIFAR-100 ReLU gap is resolved in two of the three seeds (4.6 and 5.1 standard errors; 1.8 in the third). Stationarity matters here. For ReLU at $\gamma=10^{-4}$ on CIFAR-100 the gap was $-0.035\pm0.045$ in the $4\cdot10^5$-step run of seed~0, which was still sharpening, and $0.092\pm0.020$ after $2\cdot10^6$ steps.
\end{observation}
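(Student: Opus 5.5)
Since this is an empirical observation, the plan establishes it from the endpoint measurements of campaigns v3 and v4, with \cref{prop:transport} fixing what each measured quantity can mean. I would start from the exact accounting \eqref{eq:accounting}, $U-1=\delta_X+\delta_D$. It holds on any network without stationarity, so its checked validity (\cref{obs:endpoint-exact}\,(iii)) lets me assign any part of the deficit $1-U$ not carried by $-\delta_D$ to the mixed term. This is purely an identity: the gap $(1-U)-(-\delta_D)$ equals $-\delta_X$ up to the measured accounting residual.

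The steps would be as follows.
\begin{enumerate}[leftmargin=*,itemsep=1pt]
\item For every setting, evaluate $\DH$ directly from the 64 sampled steps per diagnostic point. Here $m_{t+1}$ is drawn from the optimizer's conditional transition law, and $H(\theta_{t+1})$ and $H(\theta_t)$ act through float64 Hessian--vector products on one evaluation batch.
\item Take the window drift $d_{\cM}$ as the least-squares slope over the diagnostic states, and $\cK_{\mathrm{lin}}=\kappa\Xi_H+\kappa^2B_H$ from the Gram estimates.
\item Form $-\delta_D$, $1-U$ and the gap with joint block-bootstrap errors. For seed means, use the larger of the propagated and seed-scatter errors.
\item Check the sign and significance of $-\delta_D$ in each of the twelve settings, and locate the gap relative to zero in units of its joint error.
\item Compute $\delta_X$ independently from $E_X=\E[g^\top Hm_t]+\kappa B_H$ as a consistency check on the gap, without treating it as a resolved measurement.
\item Repeat the gap test run by run to confirm that the pattern is not an artefact of seed averaging.
\item Verify separately that $d_{\cM}$ contributes negligibly to $\delta_D$.
\end{enumerate}

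The Softplus and ReLU statements would then be read off from these tables. For Softplus, the gap lies within about one joint standard error of zero. For ReLU, two gaps are resolved, at $\gamma=10^{-3}$ on CIFAR-10 and at $\gamma=10^{-4}$ on CIFAR-100, and by the identity both are mixed contributions. The direct $-\delta_X$ estimate on CIFAR-100 is compatible with the gap but not resolved.

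The main obstacle will be the error budget rather than any algebra. $\delta_X$ is a difference of nearly cancelling terms, and its standard error (0.03--0.10) is comparable to the gaps themselves. This is why the mixed contribution must be inferred from the identity, not measured directly. Non-stationarity is a second risk, since $E_X$ and the drift terms are sensitive to slow sharpening. I would therefore compare the $4\cdot10^5$- and $2\cdot10^6$-step CIFAR-100 ReLU runs and show that the gap is resolved only after the longer relaxation. Finally, I would state explicitly that agreement of the accounting only validates the estimators. The substantive content is the size and sign of the separate terms.
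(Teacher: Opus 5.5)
Your proposal is correct and follows essentially the paper's own route. It uses the v3/v4 endpoint estimators at both ends of sampled optimizer steps and a joint block bootstrap, with the larger of the propagated and seed-scatter errors for seed means. It reads a resolved gap as $-\delta_X$ through \eqref{eq:accounting}, uses the direct $E_X$ estimate only as a consistency check, repeats the test run by run, and compares the $4\cdot10^5$- and $2\cdot10^6$-step runs for the stationarity caveat. One minor point of precision: the update curvature $\cK$ entering $U$ is itself measured at the endpoints and cross-checked against the Gram estimate, and $\Xi_H$ in $\cK_{\mathrm{lin}}$ comes from the Hutchinson-type noise-probe estimator rather than the Gram matrix; neither affects your argument.
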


\begin{observation}[Smooth and boundary-dominated remainders; time asymmetry]\label{obs:endpoint-remainder}
The endpoint ratios separate the two activations (\cref{tab:endpoint}, \cref{fig:overview}c). With Softplus, the runs with $\rho_0$ resolved give $\rho_1/\rho_0=2.83$--$2.91$ and $\rhoH/\rho_0=5.0$--$5.5$ on CIFAR-10 (3.1 and 6.7 in the collapsed run). On CIFAR-100 they give $3.35$--$3.46$ and $8.1$--$8.7$. The runs without collapsed channels obey $\rhoH=6\rho_1-12\rho_0$ to within 0.9\% of $\rhoH$ (\cref{app:cubic}). This is the relation of a remainder $af^3+bf^4$ along the step. Within this approximation the effective quartic fraction $b/(a+b)=\rho_1/\rho_0-3$ is $-0.17$ to $-0.09$ on CIFAR-10 and $0.35$--$0.46$ on CIFAR-100. Accordingly, the $U$ implied by the cubic-remainder approximation differs from the measured one by $-0.05$ to $+0.02$ on CIFAR-10 (at most 1.6 joint standard errors). On CIFAR-100 it differs by $0.077\pm0.030$ at $\gamma=10^{-3}$.
With ReLU on CIFAR-10, $\rho_1/\rho_0=2.0$--$2.2$ (setting means). This is close to the value 2 of a boundary remainder $\propto f^2$ (\cref{lem:kink}). The change of the regular curvature along the step is small against it ($\rhoH/\rho_0=0.1$--$0.9$). The cubic approximation, derived for smooth remainders, would give $U=0.36$--$0.42$ instead of the measured $0.78$--$0.97$. On CIFAR-100 the ReLU remainder differs in sign between two seeds at $\gamma=10^{-4}$ ($\rho_0=+0.034\pm0.008$ and $-0.033\pm0.012$, at 4.1 and 2.7 standard errors). It is not resolved in the other runs. At $\gamma=10^{-3}$ there, $|\rhoH|\gg|\rho_0|$, so boundary and smooth parts are not separated. In all ReLU runs the cubic-plus-quartic relation fails draw by draw.
The normalized mean $\rhoH=\E\cA/Q$ of the time-antisymmetric observable is negative in all twelve v3 runs and in eleven of the twelve v4 runs. It is resolved (3.3--9.2 standard errors) in seven of the ten v3 runs without collapsed channels and in nine v4 runs. The others are four of the five thin ReLU runs at $\gamma=10^{-4}$, the practical ReLU run at $\gamma=10^{-4}$ (2.98 standard errors) and one Softplus run at $\gamma=10^{-4}$ (\cref{fig:endpoint}c). The radial terms of \eqref{eq:A-decay} have the opposite sign and at most 13\% of the magnitude of $\E\cA$. For the bound \eqref{eq:I2}, $I_2\ge B=2(\E\cA)^2/\E\cA^2$, we take the second moment over individual transitions. Each transition uses one evaluation batch, whose noise adds to $\E\cA^2$ and makes $B$ conservative. We remove the upward bias of the squared mean. The resulting estimate of $B$ is 0.07--0.22 (v3) and 0.06--0.54 (v4) in the resolved runs without collapsed channels. A nominal one-sided 95\% lower confidence bound, built from two 97.5\% bootstrap bounds (\cref{app:estimators}), is 0.01--0.09 and 0.01--0.26.
\end{observation}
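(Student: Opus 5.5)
This observation is an empirical statement, so the plan is to establish it from the v3 and v4 endpoint measurements rather than to derive it. Each claim will be traced to an estimator whose uncertainty is controlled by the joint block bootstrap of \cref{app:estimators}.

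\textbf{Endpoint ratios.} For each diagnostic state I would draw 64 steps from the optimizer's transition law and evaluate the float64 loss, gradient and Hessian--vector products of one independent batch at both ends. This gives $\Rem(1)$, $\Rem'(1)$ and $\cA$ draw by draw. Normalizing by $Q$ gives $\rho_0$, $\rho_1$ and $\rhoH$ per run. I would form the ratios $\rho_1/\rho_0$ and $\rhoH/\rho_0$ only where $\rho_0$ exceeds three standard errors, and average within each setting. Resampling the ratio itself in the bootstrap avoids linearizing it. The following checks then read off the remaining claims.
\begin{itemize}[leftmargin=*,itemsep=1pt]
\item \emph{Softplus runs.} I would compare the ratios with the pair $(3+\varphi,\,6+6\varphi)$ implied by a remainder $af^3+bf^4$ along the step. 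I would test \eqref{eq:quartic} through the residual $\rhoH-6\rho_1+12\rho_0$ relative to $\rhoH$, and report $\varphi=\rho_1/\rho_0-3$.
\item \emph{Cubic-implied $U$.} I would insert $\rhoH=6\rho_0$ into the stationary $\gamma\to0$ form of \eqref{eq:transport}, $U=(1+\delta_X)/(1-b_2\rhoH)$, with the measured $\delta_X$. The result is compared with the measured $U$ using joint errors.
\item \emph{ReLU runs.} I would compare $\rho_1/\rho_0$ with the value $2$ of a pure $f^2$ boundary remainder (\cref{lem:kink}). This uses the fact that $\rho_1=n\rho_0$ survives for kinks, whereas $\rhoH=n(n-1)\rho_0$ does not.
\item \emph{Draw-by-draw test.} I would evaluate \eqref{eq:quartic} on individual transitions, which exposes its failure for ReLU (\cref{app:cubic}).
\end{itemize}

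\textbf{Time asymmetry.} By \cref{prop:antisym}(iii) the per-draw mean of $\cA$ equals $Q\rhoH$. I would report its sign and its significance per run. Separately, I would evaluate the two $\gamma$-dependent terms of \eqref{eq:A-decay} from the same draws, to show that they have the opposite sign and at most a small fraction of the magnitude of $\E\cA$. For the bound \eqref{eq:I2}, $\E\cA^2$ is taken over individual transitions. Each transition uses one evaluation batch, and that batch noise only inflates the denominator, so the bound stays conservative. The upward bias $2\,\mathrm{se}(\E\cA)^2/\E\cA^2$ of the squared mean is subtracted. A one-sided 95\% lower confidence bound is then built by combining a 97.5\% lower bootstrap bound on $|\E\cA|$ with a 97.5\% upper bound on $\E\cA^2$, using a Bonferroni argument.

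\textbf{Main obstacle.} The delicate step is the significance of the ratios and of the ReLU CIFAR-100 remainder. Where $\rho_0$ is small, or changes sign between seeds, the ratios are ill-conditioned. I would therefore first check seed-level sign consistency and withhold a ratio (``--'') whenever it fails. Where $|\rhoH|\gg|\rho_0|$, I would state explicitly that the boundary and smooth parts cannot be separated, rather than forcing a fit.
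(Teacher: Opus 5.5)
Your plan follows the paper's own procedure in nearly every step. It uses 64 float64 endpoint draws per diagnostic state, each with one independent evaluation batch. It forms the ratios only where $\rho_0$ is resolved at three standard errors, and computes them and $X_4=\rhoH-6\rho_1+12\rho_0$ inside the same joint bootstrap resample. It takes the radial terms of \eqref{eq:A-decay} from the quadratic forms in $\mathrm{span}(\theta,m_{t+1})$, and it builds the bias-corrected $B$ and its Bonferroni lower bound. The one step that does not work as written is the cubic-implied $U$.

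\textbf{The cubic-implied $U$.} In \cref{sec:closure} the cubic-remainder approximation is \eqref{eq:closure} with $n=3$. Besides $\rhoH=6\rho_0$, it sets $\delta_X\approx0$ and $\gamma\to0$. The $U$ it implies is therefore a prediction from $\rho_0$ alone, $U_{\mathrm{cub}}=1/D_0=1/(1-\frac{3\beta^2}{1-\beta^2}\rho_0)$. This is the convention behind the statement's numbers, as three checks show.
\begin{itemize}
\item The v2 values $U=0.903$ and $0.792$ in \cref{app:cubic} are exactly $1/D_0$ for the $\rho_0$ that reproduces the quoted $c=0.891$ and $0.768$.
\item In \cref{tab:endpoint}, CIFAR-100 thin Softplus at $\gamma=10^{-3}$ has $\rho_0\approx\rhoH/(\rhoH/\rho_0)\approx-0.0124$. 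Then $1/D_0\approx0.863$ against the measured $U=0.788$, a difference of about $0.075$, essentially the quoted $0.077$.
\item The ReLU range $0.36$--$0.42$ is roughly what $1/D_0$ gives for the $\rho_0\approx-0.11$ to $-0.14$ implied by the table.
\end{itemize}
Your $(1+\delta_X)/D_0$, with the measured mixed term, is a different quantity, and it would not reproduce these numbers. Judging from the gaps, $\delta_X$ ranges from about $-0.10$ to $+0.19$ in the ReLU settings, so it would move those values by up to about 20\%. In the Softplus settings it is a few per cent, which is as large as the differences $-0.05$ to $+0.02$ you are trying to establish. It would also import the poorly resolved mixed term (standard errors 0.03--0.10) into a comparison quoted at $\pm0.030$. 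Finally, it would stop testing what the ReLU sentence asserts: that a prediction from the single-step remainder alone overestimates the deficit several-fold. Set $\delta_X=0$ in this step.

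\textbf{The draw-by-draw test.} A smaller point: this test cannot be run on the $\rho$'s themselves, because they are normalized by the window mean $Q$. Use the unnormalized per-transition combination $E_4=\cA-6\Rem'(1)+12\Rem(1)$ and compare it with $|\cA|$ of the same transition; its median is 0.4--1.4\% for Softplus and 4--20 for ReLU. This per-draw comparison is what actually separates the activations. The ReLU window means also satisfy $X_4\approx0$, since the pure boundary point $(\rho_1/\rho_0,\rhoH/\rho_0)=(2,0)$ lies on the line $\rhoH=6\rho_1-12\rho_0$.
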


In the practical ResNet the head term $\lambda\bar e/G$ is $-0.10\%$ and $-0.09\%$ without collapsed channels ($-0.22\%$ and $-0.33\%$ with them). Half to two thirds of it comes from the normalization coefficients. It lies far below the 4--6\% precision of the window test. The non-invariant sector is therefore not resolved in these networks. The teacher--student test of \cref{thm:finite} resolves it. The two practical runs at $\eta=0.1$ have collapsed channels (8 and 119 groups with $r_a\to0$). There \eqref{eq:F2} does not apply, but \eqref{eq:finite} and \eqref{eq:transport} do. These two runs test the general identities only, not exact scale invariance of the hidden groups. Why the transport is negative in these networks but positive in the teacher--student network (\cref{sec:ts}) remains open.
\section{Discussion}\label{sec:status}

The thermodynamic-looking relations considered here have different origins. The entropy-scaling relation follows from an exact rescaling of the stochastic dynamics (\cref{prop:symmetry,cor:maxwell}). The equation of state and the fluctuation spectrum require additional assumptions about the noise and the local loss landscape. This distinction answers the question posed in the Introduction. The entropy relation can remain valid when the fixed-noise ideal-gas description is no longer adequate, because its validity does not rest on equilibrium. For corresponding stationary states, the scaling argument requires neither a Gibbs distribution nor detailed balance. It also survives a trained non-invariant output layer when the hyperparameters of that layer are held fixed under the rescaling.

The harmonic approximation provides a calculable reference for the fluctuations, rather than an equilibrium assumption for the full dynamics. Finite update steps modify the mode variances, and the resulting fluctuations feed back on the weight norms. For SGD with isotropic angular noise, this gives the spectral equation of state of \cref{sec:eos}. Its virial coefficients are set by moments of the curvature spectrum, and it contains no attractive term within the stated approximation. In the SGD networks with $\eta\lambda\le10^{-3}$, the discrete mode temperatures and the edge term account for the excess of the gradient power and of the radius over their ideal-gas values. Interacting groups require a joint covariance description. In the Gaussian harmonic surrogate of SGD with frozen radii, the law of consecutive states is reversible if and only if the transformed curvature and noise covariance commute (\cref{prop:coupled}). Exact norm balances can therefore coexist with stationary probability currents, and reversibility alone does not imply a common mode temperature.

Beyond the harmonic approximation, the principal result is a measurable decomposition of the fluctuation correction. The loss balance retains the finite-step remainder. With gradient memory, the curvature-transport identity separates the change of curvature along successive updates from the mixed gradient--memory contribution and the finite-window drift. These identities determine how the terms combine, but not their signs or magnitudes. Numerical agreement with an identity tests the estimators and the accounting of its terms. The physical information comes from measuring the individual contributions and from testing approximations that predict them. The identities thus determine how the correction is partitioned, the measurements determine which contribution matters, and the approximations determine whether that contribution can be predicted from fewer observables.

\begin{table}[t]
\centering\small
\caption{Main statements, the level at which they hold, and the numerical evidence.}
\label{tab:status}\footnotesize
\setlength{\tabcolsep}{4pt}\begin{tabular}{p{0.31\linewidth}p{0.25\linewidth}p{0.37\linewidth}}
\toprule
Statement & Level & Evidence\\
\midrule
Norm balance per invariant group (\cref{prop:radial}) & exact (invariant group) & networks: median 1.004; per group 1.000\\
Scaling symmetry, entropy relation (\cref{prop:symmetry}, \cref{cor:maxwell}) & exact (invariant groups; trained head at fixed hyperparameters) & sphere 4.000/16.007; networks 0.8--4.3\%; trained head $10^{-15}$\\
Loss balance, finite-decay identity (\cref{prop:energy}, \cref{thm:finite}) & exact (any network; common $\eta,\lambda,\beta$ for \cref{thm:finite}) & SGD networks: RMS 0.038 (20 runs); EMA networks, window form: $\chi^2=8.5$, 12.5 (12 runs each); teacher--student $0.9998\pm0.0009$ (24 runs); trained head $1.0000$--$1.0001$\\
Curvature transport, arrow of time (\cref{prop:transport,prop:antisym}) & exact (any network; common $\beta$) & teacher--student: within 0.16\%, $\E\cA$ at 13--277 s.e.; EMA networks: accounting $\chi^2=17$, 21 (12 runs each), $\rhoH<0$ resolved in 16 of 22 runs\\
Activation-boundary remainder $\propto f^2$ (\cref{lem:kink}) & asymptotic (small steps, smooth pre-activation density) & thin ResNet profiles (all but one small, non-monotone Softplus profile)\\
Singular curvature of ReLU committees (\cref{prop:relu}) & exact (Gaussian inputs) & fixed states within $4\cdot10^{-3}$; teacher--student $-2.131\%$ vs $-2.127\%$\\
Mode temperatures, spectral equation of state (\cref{prop:modes}, \cref{sec:eos}) & harmonic approximation & sphere radius 0.03\%; teacher--student 0.1--2\%; networks up to $R\approx0.5$\\
Geometric decay effects, coupled groups (\cref{approx:geo}, \cref{prop:coupled}) & approximation & sphere modes 2.4\%; coupled variances 0.6--1.4\% at $\gamma=10^{-3}$ (up to 9\% at $10^{-2}$, fails with the narrow activation); currents resolved where predicted large\\
Remainder closures (\cref{sec:closure}) & approximation, smooth remainders & toy (soft modes); Softplus networks: \eqref{eq:quartic} within 1\%, $\rho_1/\rho_0=2.8$--$2.9$ (CIFAR-10), 3.4 (CIFAR-100); fails per draw for ReLU\\
\bottomrule
\end{tabular}
\end{table}

The endpoint measurements identify different contributions in the tested networks. In the convolutional networks the curvature transport is negative in every setting and resolved in most of them. In the Softplus networks it is consistent with the update-curvature deficit on both CIFAR datasets, within the measured uncertainty. In one ReLU setting per dataset, a resolved part of the deficit remains after the transport contribution is subtracted. The identity assigns this residual to the mixed term, whose direct estimate is less precise. The teacher--student comparison instead resolves a small positive anharmonic transport correction for long memory ($\beta\ge0.9$), after the narrow--wide comparison has approximately removed the leading geometric contribution. The sign of the nonlinear correction is therefore not universal. The present measurements identify its contribution without yet determining which architectural or statistical features set its sign.

The comparison between CIFAR-10 and CIFAR-100 also separates the exact transport relation from the simpler remainder approximations. The cubic-remainder approximation is approximately consistent with the smooth-network measurements on CIFAR-10. On CIFAR-100, the endpoint ratios instead indicate a substantial quartic contribution within the cubic-plus-quartic approximation, \eqref{eq:quartic}. This does not establish that the remainder is a polynomial of degree four: the endpoint relation constrains a particular combination of higher-order contributions (\cref{app:cubic}). It nevertheless provides a direct test of the approximation and identifies a limitation that is not visible from the final fluctuation ratio alone.

Activation boundaries require separate treatment. In the Gaussian committee calculation, their curvature contribution is set by the residual on the activation boundaries (\cref{prop:relu}). Competing student and teacher terms give it no universal sign, and it vanishes at an exactly matched teacher. It is negative in the batch-normalized ReLU networks on CIFAR-10 and in the ReLU teacher--student network, and positive in the weight-normalized pilot (\cref{app:pilot}). On CIFAR-100, the full finite-step remainder changes sign between two ReLU runs, but its smooth and boundary contributions have not been separated. The regular-Hessian transport also combines motion within activation regions with changes of the activation pattern. These observations limit what can be inferred from a quadratic model evaluated at a single state.

The pair observable $\cA$ supplies a distinct diagnostic of time-reversal asymmetry. Its mean is resolved in the teacher--student runs and in most of the network runs. Its mean and second moment give a lower bound on the relative entropy between forward and reversed pairs. At zero decay $\E\cA=\eta^2\DH$; at finite decay, the terms of \eqref{eq:A-decay} distinguish the two quantities. The bound applies to the sampled pair distribution, including a distribution averaged over a finite observation window. It is not a measurement of the thermodynamic entropy production of the complete optimizer. In particular, finite-window asymmetry does not by itself show that a fully stationary distribution has been reached.

\paragraph{Limitations.} The convolutional networks are small (24k--176k parameters), and the experiments cover two image-classification datasets with specific normalization and output-layer conventions. The fixed head on CIFAR-100 keeps the loss high, and the change between the datasets also involves a change in the typical head-row scale. The network mode tests use only the three sharpest measured directions. The endpoint measurements cover EMA momentum only. For AdamW the identities hold in the frozen-preconditioner approximation. The head term of the practical network is below the measurement resolution, although it is resolved in the teacher--student test. \Cref{prop:spec-entropy} gives an analytical Gaussian entropy reference, but the entropy of the sampled network distributions, central in \citet{sadrtdinov2025idealgas}, was not estimated here.

The numerical uncertainty also limits the conclusions. The early campaigns v1 and v2 use delta-method error estimates over correlated diagnostic states, whereas the endpoint campaigns v3 and v4 use a joint block bootstrap. The normalized ratios $c$ and $U$ show little resolved drift over the windows. Several unnormalized curvature statistics, however, continue to increase on CIFAR-100, and the CIFAR-100 seeds scatter beyond their errors. We therefore retain the finite-window drift terms and do not claim a fully stationary distribution in those runs. The contributions to the accounting are more stable under the sensitivity checks of \cref{app:estimators} than the standardized residuals of the identities, which reach 3.6 and 5.5 standard errors in individual variants. The mixed term of \eqref{eq:accounting} remains the least precisely measured contribution. Two of the 24 teacher--student runs of the coupled-group test did not finish within the compute window (\cref{sec:ts}).

\paragraph{Open questions.} (1)~\emph{What sets the sign and size of the curvature transport?} Batch statistics, the cross-entropy loss far from its minimum, depth and a finite training set are possible contributors, but the present models do not isolate their roles. For ReLU networks, separating the transport within smooth regions from the transport associated with changes of the activation pattern would sharpen the interpretation. A more precise direct estimate of the mixed term would further test the contribution inferred from the residual. (2)~\emph{How do modes saturate near the threshold?} In the networks, modes with $R>0.5$ are colder than the harmonic prediction. The harmonic prediction loses applicability when the fluctuations explore nonlinear regions of the loss. Which nonlinear terms determine the saturation remains to be established. The normal form of a noisy period-doubling bifurcation is one candidate description; its applicability and coefficients remain to be measured. (3)~\emph{How large is the irreversibility in reduced and full coordinates?} The pair observable for the directions alone, and a reversal of the full state $(\theta,m)$, would relate the bound \eqref{eq:I2} to the entropy production of the optimizer. Projecting to directions removes the radial support obstruction of the SGD pair law, but a finite reduced relative entropy still requires sufficient overlap of the forward and reversed laws. Extending the analysis to the full optimizer requires a specified reversal operation for the memory variable. Probability currents between coupled groups also remain to be measured directly in the convolutional networks.

Taken together, the results support a description of stationary and slowly evolving learning dynamics that does not assume thermal equilibrium. Scale symmetry fixes some relations exactly. Harmonic theory predicts the fluctuations within a stated regime. Curvature transport, finite-step remainders and activation-boundary contributions quantify the departures from that regime, and the pair asymmetry provides a separate test of irreversibility. This distinction identifies both the usefulness and the limits of a thermodynamic description of learning.

\section*{Data, software and code availability}
The source code, configurations, numerical results and reproduction scripts underlying this study are maintained in a versioned private repository. A reproducibility archive corresponding to the submitted manuscript is available confidentially to the editors and referees on request through the editorial office. A curated public release will be made available upon acceptance. Its permanent identifier will be added to the accepted manuscript and to an updated arXiv version. \Cref{app:numerics} lists every numerical calculation behind the results, with its setup and the script that produced it. The network campaigns ran on Snellius (A100 GPUs). The teacher--student runs, the sphere models and the pilot of \cref{app:pilot} ran on CPUs.

\section*{Acknowledgements}
\paragraph{Use of generative AI tools.} The author used Claude Opus 5.5 (Anthropic) and GPT-6 Astra Pro (OpenAI) to assist with software development, symbolic and numerical checks, exploration of theoretical arguments, running and analysis of the numerical experiments, and preparation and revision of the manuscript. The author reviewed, checked and revised the AI-assisted material and takes responsibility for the derivations, numerical results, references and interpretations in the article. The verification procedures are documented in \cref{app:numerics} and in the accompanying code.

\paragraph{Computing resources.} This work used the Dutch national e-infrastructure with the support of the SURF Cooperative using grant no.~EINF-19390. Access to the supercomputer Snellius was provided under the NWO call ``Computation time national computer systems 2026''.

\renewcommand*{\bibfont}{\footnotesize}\setlength{\bibsep}{1pt}

\appendix
\section{Linear theory of gradient memory}\label{app:lyapunov}
One mode with curvature $h$ and noise variance $s$, state $(x_t,m_t)$:
\[
m_{t+1}=\beta m_t+(1-\beta)(hx_t+\zeta_t),\qquad x_{t+1}=x_t-\eta m_{t+1},
\]
so $z_{t+1}=Fz_t+b\,\zeta_t$ with $F=\begin{pmatrix}1-\eta(1-\beta)h&-\eta\beta\\(1-\beta)h&\beta\end{pmatrix}$ and $b=(-\eta(1-\beta),\,1-\beta)^\top$. The stationary covariance solves $C=FCF^\top+s\,bb^\top$. Solving symbolically,
\[
h^2C_{xx}=\frac{\eta hs(1+\beta)}{2(1+\beta)-\eta h(1-\beta)}=\frac{\eta hs/2}{1-R},\qquad R=\frac{\eta h}{\aeos},
\]
which gives \eqref{eq:modes} and, with $g=hx$, \eqref{eq:fdr}. Further, $\E[g(g-m)]=h^2C_{xx}-hC_{xm}=\frac{2}{1+\beta}\E g^2$. Hence $\E[gm]=-\kappa\,\E g^2$ and $\E[gHm]=-\kappa\,h^3C_{xx}$. Finally, $hC_{mm}=\kappa\,hs+\kappa^2\,h^3C_{xx}$. These are \eqref{eq:ema-identities}. They hold for every $0<R<1$, with $\gamma\to0$. For $\gamma>0$ the same computation for \eqref{eq:geo-ema} gives \eqref{eq:modes-q}.

\paragraph{Radial step and the remainder.} For a fully scale-invariant network without memory and $0\le\gamma<1$, $w_{t+1}=(1-\gamma)(w_t+\Delta_{\tan})$ with $\Delta_{\tan}=-\eta g_B/(1-\gamma)$. Scale invariance gives $L(w_{t+1})=L(w_t+\Delta_{\tan})$. The loss balance along $\Delta_{\tan}$ reads $d_L=-\eta G/(1-\gamma)+Q_{\tan}(1+\rho_{\tan})$ with $Q_{\tan}=\eta^2\E[g_B^\top Hg_B]/(2(1-\gamma)^2)$. At stationarity this is \eqref{eq:c-sgd}. With drift, $\rho_{\tan}$ is replaced by $(\E\Rem_{\tan}-d_L)/Q_{\tan}$. Expressed through the full step, $w+f\Delta w=(1-f\gamma)[w-\eta_fg_B]$ with $\eta_f=f\eta/(1-f\gamma)$. We insert $g\cdot w=0$, $w^\top H=-g^\top$ and $w^\top Hw=0$ into the Taylor terms of \eqref{eq:remainder}. The terms linear in $g\cdot g_B$ then cancel to $O(\gamma^2)$. The quadratic ones leave $\frac12(\eta_f^2-f^2\eta^2)\,g_B^\top Hg_B$. Therefore $\Rem(f)=f^3\gamma\,\eta^2\,g_B^\top Hg_B+\Rem_{\tan}(f)+O(\gamma^2)$ and $\rho_0=2\gamma+O(\gamma\rho_{\tan},\gamma^2)+\rho_{\tan}$. The full-step remainder thus contains a geometric part of order $\gamma$, even for a loss that is harmonic along the step. In the exact sphere simulation of \cref{sec:geo} ($\beta=0$), $\rho_0=(2.01\text{--}2.03)\gamma$. Hence $\rho_{\tan}$ is negligible there. In the same simulation, $c-1=(1.00\pm0.02)\gamma$ at $\gamma=10^{-2}$ (also with twice the edge ratios) and $(0.96\pm0.07)\gamma$ at $3\cdot10^{-3}$. With memory the radial factor becomes $1-\varepsilon_t$ with $\varepsilon_t=\gamma+\eta\beta\,\varpi_t/r_t^2$, $\varpi_t=w_t\cdot m_t$. With $a_t=\eta\varpi_t/r_t^2$, the update gives the exact recursion $(1+a_{t+1})\,r_{t+1}^2/r_t^2=(1-\gamma)^2-qa_t$. At stationarity, taking logarithms and expanding to leading order in $\gamma$ and in the relative step gives $\E a_t=-2\gamma/(1+\beta)+O(\gamma^2)$, so that $\varepsilon_t$ has mean $\kappa\gamma+O(\gamma^2)$. This expansion is an approximation; \cref{prop:radial} alone fixes $\E\varpi_t/\E r_t^2$, not $\E[\varpi_t/r_t^2]$. However, $\varepsilon_t$ is correlated with the step length. The estimate $\rho_0\approx2\kappa\gamma$ holds in the simulation for $\beta\le0.5$ ($1.9$--$2.0\,\kappa\gamma$). It fails for longer memory, in a way that depends on the dimension. With 8 tangent dimensions the simulation gives $1.4\,\kappa\gamma$ at $\beta=0.75$ and $|\rho_0|\le0.06\gamma$ at $\beta\ge0.9$. With 32 it gives $1.5\,\kappa\gamma$ and $0.95\,\kappa\gamma$ at $\beta=0.9$ and 0.95. This is why we do not give a closed first-order formula for $c$ with memory.

\paragraph{Radial part of the memory.} $\E[g\cdot m]$ sees only the tangent part of the memory, because $g\perp w$ exactly. It agrees with \eqref{eq:ema-identities} to 0.3\% in all teacher--student runs. $\E[g^\top Hm]$ also sees the radial part, because $Hw_a=-g_a$ for every scale-invariant group. The radial part contributes $-\sum_a\E[(w_a\cdot m_a)\|g_a\|^2/r_a^2]$. Here $\E[w_a\cdot m_a]=-\eta\E\|m_a\|^2/(1-q)$ is fixed by the proof of \cref{prop:radial}. We neglect the fluctuations of $r_a$ and the correlation of $w_a\cdot m_a$ with $\|g_a\|^2$, i.e.\ we factorize the expectation as $\E[w_a\cdot m_a]\,\E\|g_a\|^2/\E r_a^2$. With the norm balance and $\E\|g\|^2\approx\frac\eta2\Tr\Sigma H$ this gives the approximate radial part of the mixed term
\begin{equation}\label{eq:dX-rad}
\delta_X^{\mathrm{rad}}\approx\frac{2\beta\gamma}{(1-\beta)(1+q)},
\end{equation}
0.0095 at $\beta=0.9$, $\gamma=10^{-3}$. The measured values (\cref{tab:ts-ema}) exceed it by 0.4--5.3\% with the wide and 3.6--14\% with the narrow activation. The tangent part of $\E[g^\top Hm]$ agrees with \eqref{eq:ema-identities} to within 1\% with the wide activation and to 0.8--3.1\% with the narrow one at $\gamma=10^{-3}$.

\section{Proofs}\label{app:proofs}

\paragraph{\Cref{thm:finite}.} Write $Y_0=\E[g\cdot m_t]$ and $Y_1=\E[g\cdot m_{t+1}]$, with $g=g(\theta_t)$, and $\Delta=\theta_{t+1}-\theta_t=-\gamma\theta-\eta m_{t+1}$. Unbiasedness of the fresh noise gives $Y_1=\beta Y_0+(1-\beta)G$. Expanding the population loss along $\Delta$ and using $g\cdot\theta=e$,
\[
d_L=-\gamma\bar e-\eta Y_1+Q+\bar\Rem_0,\qquad\text{i.e.}\qquad \eta Y_1=Q+\bar\Rem_0-d_L-\gamma\bar e .
\]
(For an evaluation batch independent of the step, the expectations of the batch remainder and of its endpoint derivative equal those of the population objective.) Let $\delta=g(\theta_{t+1})-g-H\Delta$, so that $\Rem'(1)=\delta\cdot\Delta$. Since $\eta m_{t+1}=-(\Delta+\gamma\theta)$,
\[
\eta\big(\Delta^\top Hm_{t+1}+\delta\cdot m_{t+1}\big)=-\Delta^\top H\Delta-\delta\cdot\Delta-\gamma\,(g_{t+1}-g)\cdot\theta,
\]
and $g_{t+1}\cdot\theta=e(\theta_{t+1})-g_{t+1}\cdot\Delta=e(\theta_{t+1})-g\cdot\Delta-\Delta^\top H\Delta-\delta\cdot\Delta$ with $g\cdot\Delta=-\gamma e-\eta g\cdot m_{t+1}$. The increment of $g\cdot m$ is $d_{gm}=Y_1+\E[\Delta^\top Hm_{t+1}+\delta\cdot m_{t+1}]-Y_0$. Inserting the last two relations gives
\[
\eta d_{gm}=\eta(1-\gamma)Y_1-\eta Y_0-(1-\gamma)(2Q+\bar\Rem_1)-\gamma\,d_e-\gamma^2\bar e .
\]
Eliminating $Y_0$ with $Y_1=\beta Y_0+(1-\beta)G$ and then $Y_1$ with the expansion of $d_L$ yields \eqref{eq:finite}. The coefficient of $\bar e$ is $-\gamma[(1-q)-\beta\gamma]=-\gamma(1-\beta)$. No Hessian identity is used. For a fully scale-invariant network $e\equiv0$, $\theta^\top H=-g^\top$ and $\theta^\top H\theta=0$. Hence $\E[\Delta^\top H\Delta]=\eta^2\cK-2\gamma\eta Y_1$. In a stationary state this gives $2Q=\eta^2\cK-2\gamma(Q+\bar\Rem_0)$, the second relation of \eqref{eq:F2}. Dividing \eqref{eq:finite} by $(1-\beta)\frac{\eta}{2}(\Xi_H+\kappa B_H)=\frac{\eta}{2\kappa}(1-\beta)\cK_{\mathrm{lin}}$ gives the first. For a realized trajectory the exact pathwise version replaces $G$ by $g\cdot g_B$ (the difference is a martingale term with zero mean) and the drifts by the telescoped differences. The pathwise identity holds to rounding error on a nonlinear loss without scale invariance and, in its scale-invariant form, over finite windows from a non-stationary start (\ncheck{identity}).

\paragraph{\Cref{approx:geo}: the value of $\E\omega$.} Project the update on $n^*$. This gives $n^*\!\cdot n_{t+1}=[(1-\gamma)\,n^*\!\cdot n_t-\eta_0\,m_{t+1}\cdot n^*]/\upsilon_t$ with $\upsilon_t=r_{t+1}/r_t$ (with $m=g_B$ for SGD). \Cref{prop:radial} gives $\E[r_t^2\upsilon_t^2]=\E r_t^2$. This fixes one moment, not the distribution of $\upsilon_t$. Under the approximation of narrow radius fluctuations and small steps, $\upsilon_t=1$ up to fluctuations of order $\theta^2$, the squared angle of one step. The memory is an average of past gradients. In a stationary state, therefore, $\E[m\cdot n^*]=\E[g\cdot n^*]=-\E\omega$. Stationarity of $\E[n\cdot n^*]$ then gives $\gamma\,\E[n\cdot n^*]=\eta_0\E\omega$. This holds up to the correlations of $\upsilon_t$ with $n_t$, which are of higher order. For SGD the same value follows from \cref{prop:modes,prop:radial}: $\eta_0\sum_jh_j\mathrm{Var}\,x_j=\frac12\eta_0^2\sum_js_j/(1-R_j)=\frac12\eta_0^2(\sum_js_j+\E\|g\|^2)=\frac12\E\theta^2$.

\paragraph{\Cref{prop:antisym}~(iv).} Let $T_\pi=\int(\pi-\pi^{\mathrm R})^2/(\pi+\pi^{\mathrm R})$. Exchanging the two arguments shows $D_{\mathrm{KL}}(\pi\|\pi^{\mathrm R})=D_{\mathrm{KL}}(\pi^{\mathrm R}\|\pi)$. Hence $I_2=\frac12\int(\pi-\pi^{\mathrm R})\log(\pi/\pi^{\mathrm R})\ge T_\pi$ by $(x-y)\log(x/y)\ge2(x-y)^2/(x+y)$. Antisymmetry gives $\int \pi^{\mathrm R}\cA=-\E\cA$, hence $\E\cA=\frac12\int(\pi-\pi^{\mathrm R})\cA$. Cauchy--Schwarz with the weight $\pi+\pi^{\mathrm R}$ then gives $(\E\cA)^2\le\frac14T_\pi\int(\pi+\pi^{\mathrm R})\cA^2=\frac12T_\pi\,\E\cA^2$. The derivation uses only the normalization of $\pi$ and the antisymmetry of $\cA$, not stationarity. \emph{Support constraint for scale-invariant SGD.} Let $0<\gamma<1$. Since $g_B\perp w_t$, every forward pair satisfies $w_t\cdot w_{t+1}=(1-\gamma)\|w_t\|^2$ (per group). Every pair in the support of the reversed law satisfies $w_t\cdot w_{t+1}=(1-\gamma)\|w_{t+1}\|^2$. The two sets meet only where $\|w_{t+1}\|=\|w_t\|$. If the radius changes with positive probability, $\pi$ puts positive mass where $\pi^{\mathrm R}$ vanishes. Then $\pi$ is not absolutely continuous with respect to $\pi^{\mathrm R}$, and $I_2=\infty$. If the radius changes with probability one, the two laws are mutually singular. Dynamics with constant radius are not covered. For example, the loss $L=y^2/(x^2+y^2)$ with $\gamma=2\sin^2u$, $0<u<\pi/4$, has the deterministic two-cycle $w_\pm=\sqrt\eta(\cos u,\pm\sin u)$, which is reversible and has $I_2=0$. Projecting to directions removes this radial obstruction. A finite reduced relative entropy still requires sufficient overlap of the forward and reversed laws. In addition, $\E\cA$ itself contains a contribution of order $\gamma$ from the correlation between the radius change and the size of the step.

\paragraph{\Cref{prop:relu}.} The population Hessian of $L=\frac12\E(f_w-f_*)^2$ is $\E[J_wJ_w^\top]+\E[(f_w-f_*)\nabla^2f_w]$. Here $\nabla^2f_w$ contains, besides the regular terms, $o_j\,\delta(n_j\!\cdot x)\,\nabla(n_j\!\cdot x)\nabla(n_j\!\cdot x)^\top$ with $\nabla_{w_j}(n_j\!\cdot x)=P_jx/r_j$. Hence $(\Hsing)_{jj}=(o_j/r_j^2)\,\E[(f_w-f_*)\,\delta(n_j\!\cdot x)\,(P_jx)(P_jx)^\top]$, and the off-diagonal blocks vanish. On $n_j\cdot x=0$ the $j$-th student unit vanishes, so $f_w-f_*=\sum_{i\ne j}o_i\mathrm{ReLU}(n_i\!\cdot x)-\sum_lo^*_l\mathrm{ReLU}(v_l\!\cdot x)$. Take a unit vector $v$ with $P_jv=\varsigma\,e$, $\varsigma=\varsigma_j(v)$. Condition on $n_j\cdot x=0$ (density $1/\sqrt{2\pi}$) and write $x=ye+z_\perp$ with $y\sim\mathcal N(0,1)$. Then $v\cdot x=\varsigma y$. With $\E[y\mathbf1_{y>0}]=1/\sqrt{2\pi}$ and $\E[y^3\mathbf1_{y>0}]=2/\sqrt{2\pi}$ it follows that
\begin{align*}
\E\big[\delta(n_j\!\cdot x)\,\mathrm{ReLU}(v\!\cdot x)\,(P_jx)(P_jx)^\top\big]&=\frac{\varsigma}{\sqrt{2\pi}}\Big(\E[y^3\mathbf1_{y>0}]\,ee^\top+\E[y\mathbf1_{y>0}]\,(P_j-ee^\top)\Big)\\
&=\frac{\varsigma}{2\pi}\big(P_j+ee^\top\big)=\frac{F_j(v)}{2\pi}.
\end{align*}
Summing over the units and the teacher gives \eqref{eq:R2}. A single unit gives \eqref{eq:R1}. An independent check differentiates the population loss $\frac{o^2}4+\frac{o_*^2}4-oo_*\,k(n\!\cdot v)$ with the arc-cosine kernel $k(\cos\theta)=[\sin\theta+(\pi-\theta)\cos\theta]/(2\pi)$ \citep{cho2009kernel} and subtracts the expected regular tangent Hessian.

\paragraph{Concentration at the threshold.} We state the result announced in \cref{sec:eos}.
\begin{proposition}[Concentration in the self-consistent harmonic approximation]\label{prop:condensation}
Consider \eqref{eq:thermostat} with $D$ tangent modes and angular noise $\sigma_{\mathrm a}^2=\sigma_0^2/D$, bulk curvatures $h_1\le\dots\le h_{D-1}$ and top curvature $h_D=h_*$. Assume (a) that the empirical distribution of the bulk curvatures converges weakly to a density $\varrho$. Assume also (b) that the top mode is uniformly isolated, $h_{D-1}\le h_*-\delta$ with $\delta>0$ independent of $D$. Let $I_c=\int\varrho(h)(1-h/h_*)^{-1}dh$ (finite by (b)) and $\gamma_c=\sigma_0^2\aeos^2I_c/(2h_*^2)$. The following holds as $D\to\infty$. For $\gamma<\gamma_c$, every mode carries a share $O(D^{-1})$ of $\sum_k(1-R_k)^{-1}$. For $\gamma>\gamma_c$, $1-\Rtop=O(D^{-1})$ and the top mode carries the share $1-\gamma_c/\gamma$. Consequently, $T_{\mathrm{top}}/T^{(0)}\simeq(1-\gamma_c/\gamma)\,2\gamma/(\sigma_{\mathrm a}^2\eta_{0*}^2)$ with $\eta_{0*}=\aeos/h_*$.
\end{proposition}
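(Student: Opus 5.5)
The plan is to reduce \eqref{eq:thermostat} to a scalar equation for the top edge ratio $x=\Rtop=\eta_0h_*/\aeos\in(0,1)$ and to study its root as $D\to\infty$. Using $\eta_0=\aeos x/h_*$, every bulk edge ratio is $R_k=xh_k/h_*$. Inserting $\sigma_{\mathrm a}^2=\sigma_0^2/D$ and the definition of $\gamma_c$ turns the left-hand side of \eqref{eq:thermostat} into $D(\gamma/\gamma_c)I_c/x^2$. After division by $D$ and multiplication by $x^2$, the condition reads
\[
\frac{\gamma}{\gamma_c}I_c=\Psi_D(x),\qquad \Psi_D(x)=x^2I_D(x)+\frac{x^2}{D(1-x)},\qquad I_D(x)=\frac1D\sum_{k<D}\frac{1}{1-xh_k/h_*}.
\]
Each summand is increasing in $x$, so $\Psi_D$ is continuous and strictly increasing on $(0,1)$, from $0$ to $+\infty$. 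Hence there is a unique root $x_D$; this is the stationary $\Rtop$ of the harmonic model.

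Next I would control the bulk term. By (b), $xh_k/h_*\le1-\delta/h_*$ for all $x\in[0,1]$. The integrands $h\mapsto(1-xh/h_*)^{-1}$ are therefore bounded by $h_*/\delta$ on the relevant range, and they are equicontinuous in $(x,h)$. I take the bulk curvatures to be positive and bounded, as in the harmonic setting $h_k>0$. Weak convergence (a), together with the factor $(D-1)/D\to1$, then gives $I_D\to I$ uniformly on $[0,1]$, with $I(x)=\int\varrho(h)(1-xh/h_*)^{-1}dh$ continuous and $I(1)=I_c<\infty$. I expect this uniformity up to the endpoint $x=1$ to be the one delicate point, and it is exactly what the isolation (b) buys. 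Without (b) the bulk would itself diverge near $x=1$; that case is the remark about a bulk extending to $h_*$.

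Then I would split into the two cases using the limit function $\Psi(x)=x^2I(x)$, which is strictly increasing with $\Psi(1)=I_c$.
\begin{itemize}[leftmargin=*,itemsep=1pt]
\item \emph{Case $\gamma<\gamma_c$.} There is a unique $x_*<1$ with $\Psi(x_*)=(\gamma/\gamma_c)I_c$. Fix $x'\in(x_*,1)$. For large $D$, $\Psi_D(x')>(\gamma/\gamma_c)I_c$ by uniform convergence, so $x_D<x'$. Consequently $1/(1-R_k)\le(1-x')^{-1}$ for every mode. Since $\sum_k(1-R_k)^{-1}\ge D$, each mode carries a share at most $(1-x')^{-1}/D=O(D^{-1})$.
\item \emph{Case $\gamma>\gamma_c$.} Suppose $x_D\le x''<1$ along a subsequence. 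Then $\Psi_D(x_D)\to\Psi(\lim x_D)\le I_c<(\gamma/\gamma_c)I_c$, a contradiction, so $x_D\to1$. Uniform convergence and continuity give $x_D^2I_D(x_D)\to I_c$. The root equation then forces $x_D^2/(D(1-x_D))\to(\gamma/\gamma_c-1)I_c$. In particular $1-\Rtop=O(D^{-1})$.
\end{itemize}

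For the top-mode share in the second case, I would use that the full sum equals the left-hand side, $D(\gamma/\gamma_c)I_c/x_D^2$. The top term is $(1-x_D)^{-1}$. Their ratio is therefore
\[
\frac{x_D^2}{D(1-x_D)}\,\frac{\gamma_c}{\gamma I_c}\to\frac{\gamma/\gamma_c-1}{\gamma/\gamma_c}=1-\frac{\gamma_c}{\gamma}.
\]
Finally, \cref{prop:modes} gives $T_{\mathrm{top}}/T^{(0)}=(1-\Rtop)^{-1}$, which equals this share times $\sum_k(1-R_k)^{-1}=2\gamma/(\sigma_{\mathrm a}^2\eta_0^2)$. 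Because $x_D\to1$, we have $\eta_0\to\eta_{0*}=\aeos/h_*$, which yields the stated asymptotic form. The boundary value $\gamma=\gamma_c$ is excluded and yields only the finite-$D$ crossover.
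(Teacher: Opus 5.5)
Your proof is correct and follows the paper's argument: monotonicity gives a unique root, the isolation (b) bounds the bulk integrand so that weak convergence yields uniform convergence of the bulk sum up to the threshold, the two regimes are separated by comparing with the endpoint value $I_c$, and the top share is read off as the ratio of the top term to the full sum. The only difference is cosmetic: you parametrize by $x=\Rtop$ and multiply by $x^2$ so that the left-hand side becomes the constant $(\gamma/\gamma_c)I_c$, whereas the paper keeps $\eta_0$ and compares the decreasing $\ell(\eta_0)$ with the increasing $\Phi_D(\eta_0)$; your subsequence argument also gives the slightly sharper limit $D(1-\Rtop)\to[(\gamma/\gamma_c-1)I_c]^{-1}$.
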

\begin{proof}
Divide \eqref{eq:thermostat} by $D$. The left-hand side $\ell(\eta_0)=2\gamma/(\sigma_0^2\eta_0^2)$ decreases. The right-hand side is $\Phi_D(\eta_0)=B_D(\eta_0)+D^{-1}(1-\eta_0h_*/\aeos)^{-1}$, with the bulk part $B_D(\eta_0)=D^{-1}\sum_{k<D}(1-\eta_0h_k/\aeos)^{-1}$. The right-hand side increases on $(0,\eta_{0*})$ and diverges at $\eta_{0*}$. The solution is therefore unique. By (b) the bulk integrand is bounded by $h_*/\delta$ on $[0,\eta_{0*}]$. By (a), therefore, $B_D\to B(\eta_0)=\int\varrho(h)(1-\eta_0h/\aeos)^{-1}dh$ uniformly there, with $B(\eta_{0*})=I_c$. If $\gamma<\gamma_c$, i.e.\ $\ell(\eta_{0*})<I_c$, the solution converges to the root $\bar\eta_0<\eta_{0*}$ of $\ell=B$. At this root every term of $\Phi_D$ is $O(D^{-1})$. If $\gamma>\gamma_c$, then $\ell(\eta_0)-B_D(\eta_0)\ge\ell(\eta_{0*})-I_c-o(1)>0$ for all $\eta_0\le\eta_{0*}$. The top term must therefore be of order one. Hence $1-\Rtop=O(D^{-1})$ and $\eta_0\to\eta_{0*}$. The share of the top term tends to $1-I_c/\ell(\eta_{0*})=1-\gamma_c/\gamma$.
\end{proof}
Without the gap (b), a limiting density alone does not single out one mode. Eigenvalues within $O(D^{-1})$ of $h_*$ contribute no mass to $\varrho$ but can share the excess. \emph{Remark.} Suppose that the bulk density extends to $h_*$ as $\varrho(h)\sim C(h_*-h)^{a_\varrho}$ with $a_\varrho>0$, so that $I_c<\infty$. Suppose also that the bulk sum converges to its integral up to the modes with $1-R_k=O(D^{-1})$. The same argument then shows that for $\gamma>\gamma_c$ the share $1-\gamma_c/\gamma$ is carried by this top spectral subspace. This subspace need not consist of a single eigenvector. For $a_\varrho\le0$, $I_c=\infty$ and there is no transition. The sphere model of \cref{sec:eos} has an isolated top mode.

\section{Estimators}\label{app:estimators}
\paragraph{Gradient moments.} At each diagnostic point the optimizer state is frozen and $N_B=400$ batch gradients $g_{B_i}$ are drawn. Here $\hat g=\frac1{N_B}\sum_ig_{B_i}$, and $\widehat{\Tr\Sigma}$ is the sample variance. We estimate $\|g\|^2$ by $\|\hat g\|^2-\widehat{\Tr\Sigma}/N_B$.
\paragraph{Curvature terms.} The term $\Tr\Sigma H=\E[\xi^\top H_{B'}\xi]$ is estimated with centred batch gradients and Hessian--vector products on independent batches, with the factor $N_B/(N_B-1)$. This is a Hutchinson-type estimator with the noise as the probe \citep{hutchinson1989trace}. The Gram matrix of $H$ in $\mathrm{span}(\hat g,m,\theta)$ (and $\hat g/P$ for AdamW) uses 64--128 Hessian--vector products on fresh batches. The bias $\Tr\Sigma H/N_B$ of $\hat g^\top H\hat g$ is subtracted.
\paragraph{Ratios.} In v1 and v2, $\cE$, the FDR ratio and the SDE ratio are ratios of the means over the 16 diagnostic points. Standard errors are from the point-to-point scatter (delta method). This error estimate ignores correlations between consecutive points.
\paragraph{Remainder.} At each diagnostic point the optimizer state is frozen, and 128 steps $\Delta\theta$ that the optimizer would take are drawn, each with its own minibatch. With memory and for AdamW, the step includes the current memory. For each step, $\Rem(f)$ is evaluated in float64 on an independent evaluation batch, together with $g_{B'}\cdot\Delta\theta$ and $\Delta\theta^\top H_{B'}\Delta\theta$ (forward-over-reverse differentiation). The float32--float64 discrepancy of the loss is also evaluated, as a numerical floor. For 16 of the steps the pre-activations at $w$ and $w+\Delta\theta$ give the flip fraction (units whose pre-activation changes sign), the RMS change $\Delta z$ and, for Softplus, $\chi$. The remainder profile is $\rho(f)=\overline{\Rem(f)}/(\frac{f^2}2\overline{\Delta\theta^\top H_{B'}\Delta\theta})$, with the means over steps and diagnostic points.
\paragraph{Endpoint quantities (campaigns v3 and v4).} For 64 draws per diagnostic point, a training batch $B$ gives $m_{t+1}$ and $\Delta=-\gamma\theta-\eta m_{t+1}$. One evaluation batch $B'$, independent of $B$, is used at $\theta$ and at $\theta+\Delta$. Losses, gradients and the quadratic forms of $H_{B'}$ in $\mathrm{span}(\theta,m_{t+1})$ at both ends, in float64, give $\Rem(1)$, $\Rem'(1)$, $\cA$, $\cK$, $\DH$, $G=g_{B'}\cdot g_B$ and $e$ by block. With these, \eqref{eq:finite} holds draw by draw to $10^{-15}$, a check of the code. The window tests use $G$, $\bar e$, $\Xi_H$, $B_H$, $\E[g^\top Hm_t]$ and $\cM$ from the stationary estimators.
\paragraph{Window drifts.} The exact finite-window identities contain telescoped increments $(\E X_{\mathrm{end}}-\E X_{\mathrm{start}})/T$ of the loss, of $g\cdot m$, of $e$ and of $\cM$. We estimate each increment by the least-squares slope of the state values over the 32 diagnostic states. The slope estimates the increment when the drift is approximately linear over the window. The drift terms are small; $d_{\cM}$, for example, contributes less than $10^{-4}$ to \eqref{eq:accounting}.
\paragraph{Uncertainties.} All v3 and v4 errors come from one joint bootstrap per run with 2,000 replicates. Each replicate resamples moving blocks of four consecutive diagnostic states, and the draws within each state. The resampled states keep their original times, and the window drifts are refitted to the resampled (time, value) pairs. The drifts thus keep their covariance with the other terms. Every quantity of a replicate, including the gap $(1-U)-(-\delta_D)$ and $X_4$ (\cref{app:cubic}), is computed from the same resample. A seed mean of $n$ runs carries the larger of the propagated error $\sqrt{\sum_i\mathrm{se}_i^2}/n$ and the standard error of the mean between seeds, $\mathrm{sd}/\sqrt n$.
\paragraph{Time-asymmetry bound.} For \eqref{eq:I2}, $\E\cA^2$ is averaged over individual draws. Each draw uses one evaluation batch, and the evaluation noise adds to $\E\cA^2$. This lowers $B$, so the bound is conservative. The squared mean is corrected by the bootstrap variance of $\E\cA$; this removes a bias but is not a confidence statement. For the lower confidence bound we combine two one-sided 97.5\% bounds: a lower bound for $|\E\cA|$, set to zero if its interval contains zero, and an upper bound for $\E\cA^2$. By the Bonferroni inequality, $B_{\mathrm{low}}=2\,(|\E\cA|_{\mathrm{low}})^2/(\E\cA^2)_{\mathrm{up}}$ is then a nominal one-sided 95\% lower confidence bound. Its actual coverage depends on the adequacy of the block-bootstrap approximation.
\paragraph{Stationarity diagnostics.} For each of $G$, $\Xi_H$, $B_H$, $\cM$, $Q$, $\cK$, $\E\cA$, $c$ and $U$ we compute the relative change over the window from the least-squares slope, with its standard error from the same bootstrap. The sum of the squared standardized drifts over these nine quantities and the runs of a campaign is a diagnostic score. The nine drifts of one run are correlated, so the score is not $\chi^2$-distributed. We calibrate it with the bootstrap: the null distribution of the score of one run is that of its centred replicates, and different runs are independent. A $\chi^2$ statistic is quoted only for a single quantity over independent runs.
\paragraph{Sensitivity.} We repeated the analysis with blocks of two and of eight states, and on the second half of each window (16 states, blocks of four). The variants and a summary table are in \path{paper-jstat/data/sensitivity} (\ncheck{endpoint}). With blocks of two or eight, the setting means of \cref{tab:endpoint} and the contribution estimates are stable. The Softplus gaps stay within 1.5 standard errors of zero, the two resolved ReLU gaps at 3.1--4.4, and the transport at 2.5 or more in the same eight settings. The endpoint ratios move by at most 0.02 ($\rho_1/\rho_0$) and 0.13 ($\rhoH/\rho_0$). The accounting residuals are more sensitive: with blocks of eight their $\chi^2$ is 22.3 (v3) and 36.3 (v4) for 12 runs each, and the largest is 3.6 standard errors, in one CIFAR-100 ReLU run. On the half windows the errors grow. The CIFAR-10 ReLU gap falls to 2.4 standard errors, and the CIFAR-100 one stays at 4.0. There the accounting residual of one CIFAR-100 ReLU run reaches 5.5 standard errors, and the finite-decay residual of the collapsed practical ReLU run 4.3. Blocks of eight and the half windows leave only four blocks per run, so the bootstrap errors are themselves uncertain. The CIFAR-100 seeds then scatter far beyond them (gap: $\chi^2=25$ and 55 for 8 degrees of freedom).
\paragraph{Eigenpairs and modes.} The eigenpairs come from Lanczos (24 iterations, top 3) on 8 fixed batches. Their Rayleigh quotients are re-measured on 32 fresh batches. Per-mode variances come from 1,000 snapshots taken every 25 steps in the stationary window. The snapshots are projected on the eigenvectors at the last diagnostic point (for AdamW in the preconditioned coordinates). The predicted variance uses $\sigma_k^2=v_k^\top\Sigma v_k$, estimated from the 400 batch gradients.

\section{Curvature transport: a toy check}\label{app:toy}
We run gradient memory ($\beta=0.9$, $\eta=1$, no weight decay) on the one-dimensional potential $L(x)=h\tau^2[\log\cosh(x/\tau)-\varepsilon x/\tau]$, with the noise set so that $\mathrm{Var}\,x\approx0.08\tau^2$. We use 20,000 chains (8,000 for skewed noise) of 20,000 steps after a burn-in of 4,000. For $\varepsilon=0$ the potential is symmetric (leading anharmonicity quartic). For $\varepsilon=0.3$ the minimum is shifted and a cubic term appears. The noise is Gaussian, or a centred exponential of the same variance (skewed). \Cref{tab:toy} compares the measured linear-FDR ratio $c$ with the exact $\gamma=0$ relation (\eqref{eq:decomp} with $U$ from \eqref{eq:transport}), evaluated with the measured $\rho_0$, $\rho_1$, $\rhoH$ and $\delta_X$. It also compares the measured ratio with the single-power form \eqref{eq:closure} ($n=3$; $n=4$ for $\varepsilon=0$).
\begin{table}[tb]\centering\small
\caption{Curvature-transport toy: measured $c$, the transport identity, the exact $\gamma=0$ relation and the single-power form.}\label{tab:toy}
\begin{tabular}{llccccccccc}
\toprule
noise & $\varepsilon$ & $\eta h$ & $c$ & $\cK$/rhs \eqref{eq:transport} & \eqref{eq:decomp} & $\rho_0$ & $\rho_1/\rho_0$ & $\rhoH/\rho_0$ & \eqref{eq:closure}\\
\midrule
Gaussian & 0.3 & 0.10 & 1.0003 & 1.0002 & 1.0002 & $+0.0003$ & 1.99 & $-0.06$ & 1.0037\\
 & & 0.76 & 1.0015 & 0.9998 & 1.0017 & $+0.0020$ & 1.99 & $-0.07$ & 1.0292\\
 & & 2.00 & 1.0042 & 0.9997 & 1.0045 & $+0.0054$ & 1.99 & $-0.08$ & 1.0825\\
 & & 4.00 & 1.0095 & 1.0001 & 1.0094 & $+0.0113$ & 1.98 & $-0.09$ & 1.1886\\
Gaussian & 0 & 0.76 & 1.0025 & 0.9999 & 1.0026 & $+0.0026$ & 2.00 & $-0.01$ & 1.0774\\
 & & 4.00 & 1.0144 & 1.0001 & 1.0143 & $+0.0149$ & 2.00 & $-0.02$ & 1.6641\\
\addlinespace
skewed & 0.3 & 0.03 & 1.0199 & 1.0004 & 1.0194 & $+0.0014$ & 2.93 & 5.57 & 1.0208\\
 & & 0.10 & 1.0278 & 1.0003 & 1.0272 & $+0.0022$ & 2.84 & 5.03 & 1.0323\\
 & & 0.30 & 1.0288 & 1.0003 & 1.0282 & $+0.0029$ & 2.64 & 3.82 & 1.0435\\
 & & 0.76 & 1.0226 & 1.0001 & 1.0227 & $+0.0041$ & 2.35 & 2.09 & 1.0611\\
 & & 2.00 & 1.0152 & 0.9992 & 1.0160 & $+0.0075$ & 2.09 & 0.53 & 1.1184\\
\bottomrule
\end{tabular}
\end{table}
The identities hold to $10^{-3}$ at every $\eta h$ and for both noises. With Gaussian noise, $\rho_1\approx2\rho_0$ and the transport is small but not zero ($\rhoH/\rho_0$ between $-0.01$ and $-0.09$; uncertainties not estimated). Here the single-power form fails. With skewed noise the ratios approach the cubic values for soft modes, where the approximation is accurate. It overshoots as $\eta h$ grows (\ncheck{toy}).
\section{Remainder closures in the networks}\label{app:cubic}
\paragraph{Cubic plus quartic.} If the remainder along the step is $\Rem(f)=af^3+bf^4$, then $\rho_1/\rho_0=3+\varphi$ and $\rhoH/\rho_0=6+6\varphi$ with $\varphi=b/(a+b)$. Hence $X_4\equiv\rhoH-6\rho_1+12\rho_0=0$ for every $\varphi$. $X_4$ is dimensionless. Its unnormalized counterpart for a single transition is, with $F(f)=L_{B'}(\theta+f\Delta)$ for the evaluation batch and a smooth loss,
\[
E_4=12[F(1)-F(0)]-6[F'(0)+F'(1)]+F''(1)-F''(0)=\frac12\int_0^1f^2(1-f)^2F^{(5)}(f)\,df ,
\]
in units of the loss. Its mean is $QX_4$. $E_4$ is twelve times the error of the corrected trapezoidal rule for $\int_0^1F'(f)\,df$, and $|E_4|\le\sup|F^{(5)}|/60$. A small $X_4$ therefore constrains one weighted moment of the higher derivatives; it does not show that the remainder is a polynomial of degree four.
In every Softplus run of v3 and v4 without collapsed channels, the window means give $|X_4/\rhoH|<0.009$. One run has $X_4$ nonzero at 4 standard errors, so the relation is accurate but not exact. For individual sampled transitions, the median of $|E_4|/|\cA|$ is 0.4--1.4\%; this ratio compares two single-transition quantities in the same units. In ReLU runs the window means also give $X_4\approx0$, within their larger errors. The pure boundary point $(\rho_1/\rho_0,\rhoH/\rho_0)=(2,0)$ also lies on the line. For individual transitions, however, the median of $|E_4|/|\cA|$ is 4--20 in ReLU runs. The relation is thus a property of smooth remainders.
The endpoint ratios of smooth networks lie on the one-parameter line (\cref{fig:overview}c). The effective quartic fraction $\varphi$ is between $-0.17$ and $-0.09$ on CIFAR-10 and between $0.35$ and $0.46$ on CIFAR-100 (runs with $\rho_0$ resolved). We use $\varphi$ only within this approximation; it is not an independently identified Taylor coefficient. The cubic approximation is the point $\varphi=0$. With a measured $\rho_1$, the relation $\rhoH=6\rho_1-12\rho_0$ enters $c$ through \eqref{eq:quartic-c} (fully scale-invariant networks, $\gamma\to0$).
\paragraph{Campaign v2.} Before the endpoint measurements, \eqref{eq:closure} was tested on the remainder profiles of v2. With Softplus 0.5 the profile is cubic up to $f=1$. A fit $\E\Rem(f)/Q=a_3f^3+a_4f^4$ to the five fractions gives $\rho_1/\rho_0=2.80$--$2.95$ and $\rhoH/\rho_0=4.8$--$5.7$. With $n=3$ and the measured $\rho(1)$, \eqref{eq:closure} gives $c=0.891$ and $0.768$ for EMA at $\eta\lambda=10^{-4}$ and $10^{-3}$ (measured $0.903\pm0.031$ and $0.799\pm0.027$); the first-order form \eqref{eq:amplification} overshoots at $10^{-3}$ ($c=0.708$). The implied $U=0.903$ and $0.792$ differed from the Gram estimates 0.954 and 0.865. In v3 the Gram estimate agrees with the endpoint $U$ within the errors, while the $U$ of the approximation differs from it by $-0.05$ to $+0.02$ with Softplus (\cref{sec:momentum}). Offsets of 0.05--0.07 are therefore plausible for the profile route. The v2 runs were not re-measured, so this explanation is supported but not proven.
\section{Teacher--student runs with gradient memory}\label{app:tstable}
\Cref{tab:ts-ema} lists the runs of \cref{fig:ts-momentum} (graded output weights $1,\frac12,\frac14,\frac18$; $B=16$, $\sigma=0.3$, 128 chains, 20,000 measured steps). The two rows of each $(\beta,\gamma)$ are Gaussian (G) and symmetric non-Gaussian (nG) label noise.
\begin{table}[h]
\centering\footnotesize
\caption{Teacher--student, gradient memory. Linear-FDR ratio $c$ for activation widths $\tau=1$ and $0.1$, measured $c$ over the exact relation \eqref{eq:F2} ($\tau=1$/$\tau=0.1$), and the per-mode factor of \cref{lem:geo}. The ratio $c_{\tau=0.1}/c_{\tau=1}$, measured and predicted from the remainder brackets and the memory term (\cref{obs:ts-ema}\,(iii)), and the memory term in percent. Radial part of $\delta_X$ ($\times10^3$, $\tau=1$/$\tau=0.1$), with \eqref{eq:dX-rad} in parentheses.}
\label{tab:ts-ema}
\footnotesize
\setlength{\tabcolsep}{3pt}
\begin{tabular}{lllccccccll}
\toprule
$\beta$ & $\gamma$ & noise & $c_{\tau=1}$ & $c_{\tau=0.1}$ & $c$/\eqref{eq:F2} & $\frac{1-\beta}{1-q}$ & ratio & pred. & mem.\,(\%) & $10^3\delta_X^{\rm rad}$\\
\midrule
0.5 & $10^{-4}$ & G & 0.9996 & 1.0040 & 0.9997/0.9998 & 0.9999 & 1.0045 & 1.0044 & +0.10 & 0.1/0.1 (0.1) \\
 &  & nG & 0.9974 & 1.0017 & 0.9977/1.0005 &  & 1.0043 & 1.0015 & -0.05 & 0.1/0.2 (0.1) \\
0.5 & $10^{-3}$ & G & 0.9989 & 1.0202 & 1.0001/1.0001 & 0.9990 & 1.0213 & 1.0217 & +0.30 & 1.3/1.5 (1.3) \\
 &  & nG & 0.9974 & 1.0153 & 0.9990/1.0002 &  & 1.0180 & 1.0175 & +0.10 & 1.4/1.5 (1.3) \\
0.9 & $10^{-4}$ & G & 0.9987 & 1.0027 & 0.9994/1.0009 & 0.9991 & 1.0040 & 1.0035 & +0.30 & 1.0/1.0 (0.9) \\
 &  & nG & 0.9963 & 1.0000 & 0.9983/0.9999 &  & 1.0037 & 1.0015 & +0.12 & 1.0/1.0 (0.9) \\
0.9 & $10^{-3}$ & G & 0.9896 & 1.0045 & 1.0001/1.0010 & 0.9911 & 1.0151 & 1.0151 & +1.25 & 9.6/10.2 (9.5) \\
 &  & nG & 0.9881 & 1.0010 & 0.9993/1.0004 &  & 1.0131 & 1.0125 & +1.01 & 9.7/10.3 (9.5) \\
0.95 & $10^{-4}$ & G & 0.9971 & 1.0034 & 0.9990/1.0017 & 0.9981 & 1.0063 & 1.0032 & +0.30 & 2.0/2.1 (1.9) \\
 &  & nG & 0.9948 & 0.9986 & 0.9984/0.9996 &  & 1.0039 & 1.0018 & +0.17 & 2.0/2.1 (1.9) \\
0.95 & $10^{-3}$ & G & 0.9799 & 0.9918 & 1.0000/1.0006 & 0.9814 & 1.0121 & 1.0134 & +1.22 & 19.6/20.6 (19.5) \\
 &  & nG & 0.9770 & 0.9904 & 0.9994/0.9994 &  & 1.0137 & 1.0116 & +1.05 & 19.7/20.8 (19.5) \\
\bottomrule
\end{tabular}
\end{table}
\section{A weight-normalized CNN with exact Hessian}\label{app:pilot}
A small CNN with every filter normalized in the forward pass \citep{salimans2016wn} gives the exact gradient, Hessian and noise covariance in float64. It uses two $3\times3$ convolutions, $3\to8\to16$ channels, average pooling, a fixed readout and a fixed CIFAR-10 subset (1,368 parameters in 24 groups, no batch normalization; SGD, $\eta=\lambda=0.01$). The test of the full angular covariance is mostly negative. At batch size 128 the sharpest mode exceeds the threshold ($\Rtop=1.24$ ReLU, 1.20 Softplus). Slow modes, whose relaxation times exceed the window, dominate there (median variance ratio 13.5 and 12.7). At batch size 16 (ReLU, $\Rtop=0.89$) the modes with $0.05\le R<0.5$ have variance ratios near one (median 1.20) and lag-1 autocorrelations close to $1-\eta\mu_k$. The ReLU remainder has a \emph{positive} boundary plateau, $\rho=+0.06$ to $+0.09$ at small $f$. However, the Softplus remainder vanishes as $f\to0$. The sign of the singular curvature is therefore model-dependent.
\section{Numerical checks and code}\label{app:numerics}
{\sloppy
Each numerical result comes from one of the calculations below. Paths are relative to the root of the reproducibility archive. \path{paper-jstat/tools/reproduce.sh} regenerates every table, figure and check from the stored data and compares them with this manuscript. \path{paper-jstat/REPRODUCE.md} lists the data, the code version of every campaign and the settings.

\small
\begin{enumerate}[label=N\arabic*,ref=N\arabic*,leftmargin=*,itemsep=1pt]
\item\label{num:lyap} Covariance of the surrogate (\cref{lem:geo}) against a numerical Lyapunov solution, 2,000 random stable parameter sets, largest relative difference $10^{-13}$: \path{paper-jstat/tools/check_geo_lyapunov.py}.
\item\label{num:identity} Pathwise identities: \eqref{eq:finite} on a nonlinear loss without scale invariance, $\beta\le0.95$, $\gamma\le0.2$ (residual $<5\cdot10^{-15}$), and the radial separation \eqref{eq:c-sgd}: \path{check_general_identity.py}. The scale-invariant form over 400-step windows from a non-stationary start: \path{check_finite_identity.py} (both in \path{paper-jstat/tools}).
\item\label{num:sphere} Sphere models (\cref{sec:sphere}, \cref{fig:toy}): \path{thermo-opt/sim} (\path{sphere.py}, \path{exp_*.py}, \path{geometry_check.py}); theory values recomputed from the stored chains by \path{repredict.py}.
\item\label{num:condense} Concentration (\cref{prop:condensation}): \path{thermo-opt/sim/exp_condense.py}. \path{repredict.py} prints the large-$D$ predictions ($131$, $458$, $1335$, $3687\,T^{(0)}$ at $\gamma=1.6\cdot10^{-3}$--$3\cdot10^{-2}$, $\gamma_c=5.0\cdot10^{-4}$) and the simulated values ($134$, $460$, $1336$, $3685$).
\item\label{num:zero} Finite decay on a quadratic sphere ($D$ tangent curvatures, isotropic tangent noise, $\Rtop\le0.1$, $\gamma=3\cdot10^{-3}$): \path{thermo-opt/sim/zero_memory_check.py}. At $\beta=0$, 0.5, 0.9, 0.95 the per-mode factor of \cref{lem:geo} departs from 1 by $(0,-1.0,-8.8,-18.0)\gamma$. The measured $c$ differs from it by $(+0.96,+0.02,-1.76,-3.12)\gamma$ for $D=8$ and $(+1.03,+0.32,-0.32,-0.66)\gamma$ for $D=32$ (standard errors 0.04--0.16 in units of $\gamma$). For SGD at $\gamma=10^{-2}$, $c-1=(1.00\pm0.02)\gamma$. In all runs \eqref{eq:F2} holds to $2.3\cdot10^{-4}$.
\item\label{num:head} Trained head (\cref{prop:symmetry}, \cref{thm:finite}): \path{teacher-student/tsm/head_checks.py}. Blockwise rescaling of four hidden units (2,000 steps with memory) maps trajectories to $10^{-15}$, while the head moves by $O(1)$. By comparison, a common rescaling of all learning rates changes the trajectories by $5\cdot10^{-2}$. For \eqref{eq:finite} with the head term, measured over predicted is $1.0001(3)$, $1.0000(3)$, $1.0001(2)$ at $\gamma=0.03$, $0.01$, $0.003$. Without the head term it is off by 1.5\%, 0.40\%, 0.06\%.
\item\label{num:ts} Teacher--student runs (\cref{sec:ts}, \cref{fig:ts-momentum}, \cref{tab:ts,tab:ts-ema}): \path{teacher-student/tsm} (\path{sweep.py} plans; exact population quantities at both ends of every step); figure, table and text by \path{paper-jstat/tools/ts_figure.py}, \path{ts_ema_table.py}, \path{fill_ts.py}.
\item\label{num:relu} Singular curvature of the ReLU committee (\cref{prop:relu}) against Monte Carlo: \path{paper-jstat/tools/check_relu_singular.py}.
\item\label{num:coupled} Coupled spheres (\cref{prop:coupled}): \path{thermo-opt/sim/coupled_current.py}.
\item\label{num:toy} One-dimensional anharmonic potentials (\cref{app:toy}): \path{thermo-opt/sim/transport_toy.py}.
\item\label{num:networks} Convolutional networks (\cref{sec:networks}): models, optimizers and diagnostics in \path{thermo-campaign} (configurations of v1--v4; A100-hours 73.7, 15.6, 9.3 and 51.7). Per-run summaries by \path{thermo-opt/analysis/campaign.py}. \Cref{tab:v1,tab:modes,tab:v2} by \path{paper/tools/tables.py}; memory statistics by \path{thermo-opt/analysis/momentum.py}. Six practical v1 runs were repeated with a guarded Lanczos projector after failing at dead channels.
\item\label{num:endpoint} Endpoint analysis of v3 and v4 (\cref{fig:overview,fig:endpoint}, \cref{tab:endpoint}): \path{thermo-opt/analysis/endpoint.py}, \path{endpoint_gap.py}; \path{paper-jstat/tools/endpoint_*.py}, \path{overview_figure.py}. \path{endpoint_numbers.py} prints every quoted statistic. Raw v4 results: \path{thermo-v4}, with a SHA-256 manifest.
\item\label{num:invariance} Invariance audit (\cref{sec:setup}): \path{thermo-opt/analysis/invariance.py}. Weight-normalized CNN (\cref{app:pilot}): \path{thermo-opt/pilot}.
\end{enumerate}
}

\end{document}